\crefname{step}{step}{steps}
\Crefname{step}{Step}{Steps}
\newcounter{HALG@line}
\renewcommand{\theHALG@line}{\thealgorithm.\arabic{ALG@line}}
\newtheorem{theorem}{Theorem}[section]
\newtheorem{lemma}[theorem]{Lemma}
\newtheorem{corollary}[theorem]{Corollary}
\theoremstyle{definition}
\newtheorem{definition}[theorem]{Definition}
\theoremstyle{remark}
\newcommand{\local}{LOCAL\xspace}
\DeclareMathOperator{\dist}{dist} %
\newcommand{\neighborhood}{\NN}
\DeclareMathOperator{\poly}{poly} %
\newcommand{\ceil}[1]{\left \lceil #1 \right \rceil}
\newcommand{\floor}[1]{\left \lfloor #1 \right \rfloor}
\DeclareMathOperator{\diam}{diam}
\newcommand{\nats}{\mathbb{N}}
\newcommand{\reals}{\mathbb{R}}
\renewcommand{\AA}{\mathcal{A}}
\newcommand{\CC}{\mathcal{C}}
\newcommand{\EE}{\mathcal{E}}
\newcommand{\HH}{\mathcal{H}}
\newcommand{\LL}{\mathcal{L}}
\newcommand{\NN}{\mathcal{N}}
\newcommand{\VV}{\mathcal{V}}
\newcommand{\st}{\ \middle| \ }
\newcommand{\problem}{\Pi}
\newcommand{\maxDeg}{\Delta}
\newcommand{\lblnodes}{\ell_{\operatorname{nodes}}}
\newcommand{\lbledges}{\ell_{\operatorname{half-edges}}}
\newcommand{\pot}{\Psi}
\newcommand{\graphPot}{\operatorname{Pot}}
\newcommand{\imp}{\operatorname{Imp}}
\newcommand{\impRatio}{\operatorname{IR}}
\DeclareMathOperator{\regularlog}{log}
\renewcommand{\log}{\protect\@ifstar{\regularlog^*}{\regularlog}}
\newcommand{\Vin}{\VV_{\operatorname{in}}}
\newcommand{\Ein}{\EE_{\operatorname{in}}}
\newcommand{\Vout}{\VV_{\operatorname{out}}}
\newcommand{\Eout}{\EE_{\operatorname{out}}}
\newcommand{\inptLbl}{\ell_{\operatorname{in}}}
\newcommand{\outLbl}{\ell_{\operatorname{out}}}
\newtcolorbox{myframe}[2][]{%
	breakable,enhanced,colback=white,colframe=black,coltitle=black,
	sharp corners,boxrule=0.4pt,
	fonttitle=\itshape,
	attach boxed title to top left={yshift=-0.3\baselineskip-0.4pt,xshift=2mm},
	boxed title style={tile,size=minimal,left=0.5mm,right=0.5mm,
		colback=white,before upper=\strut},
	title=#2,#1
}
\newcommand{\IR}{\operatorname{IR}\xspace}
\begin{document}
\begin{flushleft}
    \huge\bf
    Distributed Algorithms for Potential Problems
\end{flushleft}
\smallskip

\newcommand{\myemail}[1]{\,$\cdot$\, {\small #1}}
\newcommand{\myaff}[1]{\,$\cdot$\, {\small #1}\par\medskip}
\newenvironment{myabstract}{\list{}{\listparindent 1.5em \itemindent \listparindent \leftmargin 0cm \rightmargin 0cm \parsep 0pt} \item\relax}{\endlist}
\newenvironment{mycover}{\list{}{\listparindent 0pt \itemindent \listparindent \leftmargin 0cm \rightmargin 1.5cm \parsep 0pt} \raggedright \item\relax}{\endlist}

\begin{mycover}

\textbf{Alkida Balliu}
\myaff{Gran Sasso Science Institute, Italy}

\textbf{Thomas Boudier}
\myaff{Gran Sasso Science Institute, Italy}

\textbf{Francesco d'Amore}
\myaff{Gran Sasso Science Institute, Italy}

\textbf{Fabian Kuhn}
\myaff{University of Freiburg, Germany}

\textbf{Dennis Olivetti}
\myaff{Gran Sasso Science Institute, Italy}

\textbf{Gustav Schmid}
\myaff{University of Freiburg, Germany}

\textbf{Jukka Suomela}
\myaff{Aalto University, Finland}

\bigskip
\end{mycover}

\begin{myabstract}
\noindent\textbf{Abstract.}
In this work, we present a fast distributed algorithm for \emph{local potential problems}: these are graph problems where the task is to find a locally optimal solution where no node can unilaterally improve the utility in its local neighborhood by changing its own label. A simple example of such a problem is the task of finding a \emph{locally optimal cut}, i.e., a cut where for each node at least half of its incident edges are cut edges.

The distributed round complexity of the locally optimal cut problem has been wide open; the problem is known to require $\Omega(\log n)$ rounds in the deterministic LOCAL model and $\Omega(\log \log n)$ rounds in the randomized LOCAL model, but the only known upper bound is the trivial brute-force solution of $O(n)$ rounds. Locally optimal cut in constant-degree graphs is perhaps the simplest example of a \emph{locally checkable labeling} problem for which there is still such a large gap between current upper and lower bounds.

We show that in constant-degree graphs, \emph{all} local potential problems, including locally optimal cut, can be solved in $\log^{O(1)} n$ rounds, both in the deterministic and randomized LOCAL models. In particular, the deterministic round complexity of the locally optimal cut problem is now settled to $\log^{\Theta(1)} n$.

Our algorithms also apply to the general case of graphs of maximum degree $\Delta$. For the special case of locally optimal cut, we obtain a randomized algorithm that runs in $O(\Delta^{2} \log^{6} n)$ rounds, which can be derandomized at polylogarithmic cost with standard techniques. Furthermore, we show that a dependence in $\Delta$ is necessary: we prove a lower bound of $\Omega(\min\{\Delta,\sqrt{n}\})$ rounds, even in the quantum-LOCAL model; in particular, there is no polylogarithmic-round algorithm for the general case.
\end{myabstract}
 \thispagestyle{empty}
\setcounter{page}{0}
\newpage

\section{Introduction}

In this work, we present a distributed algorithm that finds a locally optimal cut in any constant-degree graph in a polylogarithmic number of rounds (see \cref{ssec:intro-loc}); this is the first nontrivial distributed algorithm for this problem. Moreover, our algorithm generalizes to all distributed graph problems that are defined with a local potential function (see \cref{ssec:intro-lpp}).

\subsection{Locally optimal cuts}\label{ssec:intro-loc}

In a graph $G = (V,E)$, a \emph{cut} is a labeling $\ell: V \to \{-1,+1\}$, an edge $\{u,v\}$ is called a \emph{cut edge} if $\ell(u) \ne \ell(v)$, and the \emph{size} of the cut is the total number of cut edges. Maximizing the size of the cut is a classic NP-hard problem.

We say that a cut is \emph{locally optimal} if for every node at least half of its incident edges are cut edges. Put otherwise, changing the label of $v$ from $\ell(v)$ to $-\ell(v)$ does not increase the size of the cut. Such a cut always exists and is trivial to find in the centralized setting: start with any labeling, find any node that violates the constraint (we call these \emph{unhappy} nodes) and change its label, and repeat until all nodes are happy. Note that changing the label of an unhappy node $v$ will make node $v$ happy but it might make some of its neighbors unhappy. However, this process has to converge after at most $\abs{E}$ steps, since the size of the cut increases in each step by at least one.

\paragraph{Distributed graph algorithms.}

While it is trivial to find a locally optimal cut with a centralized algorithm, in this work we are interested in \emph{distributed} algorithms for finding a locally optimal cut. We work in the usual LOCAL model \cite{linial-1992-locality-in-distributed-graph-algorithms,peleg-2000-distributed-computing-a-locality-sensitive} of distributed computing: Each node $v \in V$ represents a computer and each edge $\{u,v\}$ represents a communication link between two computers. Initially each node is only aware of its own unique identifier, its degree, and the total number of nodes $n$. Computation proceeds in synchronous rounds; in each round each node can send a message to each neighbor, receive a message from each neighbor, and update its own state. Eventually each node $v$ has to stop and output its own part of the solution, here $\ell(v)$. The \emph{running time} of the algorithm is $T(n)$ if in any $n$-node graph all nodes stop after at most $T(n)$ rounds.

It is good to note that in this model, time and distance are interchangeable: in $T$ rounds all nodes can gather their radius-$T$ neighborhood, and a $T$-round algorithm is a function that maps radius-$T$ neighborhoods to local outputs. In particular, this means that in $O(n)$ rounds we can solve \emph{any} graph problem by brute force; the key question is which problems admit a nontrivial $o(n)$-round algorithm.

\paragraph{Distributed algorithms for locally optimal cuts.}

It is not hard to turn the trivial centralized algorithm into a valid distributed algorithm for solving the locally optimal cut problem. For example, identify all unhappy nodes, then find a maximal independent set of unhappy nodes, and flip all of them in parallel. Unfortunately, this is not going to be an efficient algorithm. Consider the following graph and a cut $\ell$ indicated with the colors white and black; here the node marked with an arrow is the only unhappy node, as only $1$ of its $3$ incident edges is a cut edge:
\begin{center}
    \includegraphics[page=1,scale=0.8]{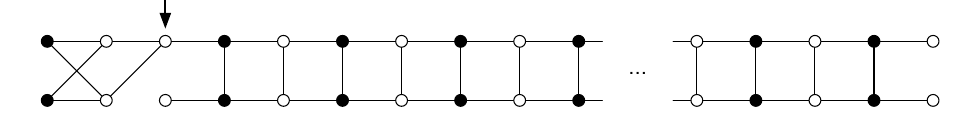}
\end{center}
If we now change the label of this node, we arrive at the following configuration where we again have exactly one unhappy node:
\begin{center}
    \includegraphics[page=2,scale=0.8]{figs.pdf}
\end{center}
Flipping that node, we again move unhappiness around:
\begin{center}
    \includegraphics[page=3,scale=0.8]{figs.pdf}
\end{center}
Clearly, this will take $\Omega(n)$ rounds until all unhappiness is resolved (even if we put aside the time needed to find a maximal independent set), so it is no better than the trivial brute-force solution that solves any graph problem in $O(n)$ rounds. Can we do much better than this?

\paragraph{Broader context: locally checkable labelings.}

The problem of finding a locally optimal cut is non-trivial even when restricting to \emph{bounded-degree graphs}, i.e., when there is some fixed constant $\maxDeg$ and our goal is to solve locally optimal cut in any graph with maximum degree at most $\maxDeg$. We note that e.g.\ in the above apparently-hard instance we had $\maxDeg = 3$, so small values of $\maxDeg$ do not make the problem trivial.

For any fixed $\maxDeg$, the locally optimal cut problem is an example of a \emph{locally checkable labeling} problem (LCL). LCL problems were first defined by \textcite{naor-stockmeyer-1995-what-can-be-computed-locally}, and there is a large body of work that has developed a complexity theory of LCL problems in the LOCAL model
\cite{
        cole-vishkin-1986-deterministic-coin-tossing-with,
        naor-1991-a-lower-bound-on-probabilistic-algorithms-for,
        linial-1992-locality-in-distributed-graph-algorithms,
        naor-stockmeyer-1995-what-can-be-computed-locally,
        brandt-fischer-etal-2016-a-lower-bound-for-the,
        fischer-ghaffari-2017-sublogarithmic-distributed,
        ghaffari-harris-kuhn-2018-on-derandomizing-local,
        balliu-hirvonen-etal-2018-new-classes-of-distributed,
        chang-pettie-2019-a-time-hierarchy-theorem-for-the,
        chang-kopelowitz-pettie-2019-an-exponential-separation,
        rozhon-ghaffari-2020-polylogarithmic-time-deterministic,
        balliu-brandt-etal-2020-how-much-does-randomness-help,
        balliu-brandt-etal-2021-almost-global-problems-in-the,
        dahal-d-amore-etal-2023-brief-announcement-distributed,
        chang-2020-the-complexity-landscape-of-distributed,
        balliu-brandt-etal-2021-almost-global-problems-in-the,grunau-rozhon-brandt-2022-the-landscape-of-distributed,
        suomela-2020-landscape-of-locality-invited-talk%
}.

Most natural LCL problems are by now very well-understood, especially for the deterministic LOCAL model. Indeed, for some restricted families of LCLs and restricted families of graphs, there is now a \emph{complete} classification of the round complexity of all such problems, see e.g.\ \cite{
    balliu-brandt-etal-2020-classification-of-distributed,
    chang-studeny-suomela-2023-distributed-graph-problems,
    brandt-hirvonen-etal-2017-lcl-problems-on-grids,
    balliu-brandt-etal-2022-efficient-classification-of,balliu-brandt-etal-2019-the-distributed-complexity-of%
}.
For example, there is a broad class of \emph{symmetry-breaking problems}, all of which have round complexity $\Theta(\log^* n)$ \cite{cole-vishkin-1986-deterministic-coin-tossing-with,
    goldberg-plotkin-shannon-1988-parallel-symmetry,
    linial-1992-locality-in-distributed-graph-algorithms,
    naor-1991-a-lower-bound-on-probabilistic-algorithms-for%
}, and many other problems \cite{ghaffari-hirvonen-etal-2020-improved-distributed-degree} have turned out to be asymptotically as hard as the \emph{sinkless orientation} problem \cite{brandt-fischer-etal-2016-a-lower-bound-for-the}, which has round complexity $\Theta(\log n)$ in the deterministic LOCAL model \cite{
    brandt-fischer-etal-2016-a-lower-bound-for-the,
    chang-kopelowitz-pettie-2019-an-exponential-separation,
    ghaffari-su-2017-distributed-degree-splitting-edge%
}.

Locally optimal cut is perhaps the most elementary LCL problem whose distributed complexity is currently wide open. We do not even know if it admits any nontrivial algorithm with round complexity $o(n)$, not even for the smallest nontrivial degree bound $\maxDeg = 3$. (The case of $\maxDeg = 2$ is fully understood, as it is in essence equivalent to the weak-$2$-coloring problem, which can be solved in $O(\log^* n)$ rounds, and this is also tight, see e.g.\ \cite{balliu-hirvonen-etal-2019-hardness-of-minimal-symmetry}.)

\paragraph{Prior work.}

What makes locally optimal cuts particularly intriguing is that many commonly-used lower bound techniques fail. In particular, one cannot use existential graph-theoretic arguments (in the spirit of e.g.\ the lower bound for coloring trees in \cite{linial-1992-locality-in-distributed-graph-algorithms} or the lower bound for approximate coloring in \cite{coiteux-roy-d-amore-etal-2024-no-distributed-quantum}), since a locally optimal cut always exists in any graph.

It is, however, possible to construct a nontrivial reduction that connects locally optimal cuts with the sinkless orientation problem \cite{balliu-hirvonen-etal-2019-locality-of-not-so-weak}, and this gives the following lower bounds: we need at least $\Omega(\log n)$ rounds in the deterministic LOCAL model, and at least $\Omega(\log \log n)$ rounds in the randomized LOCAL model.

This is essentially \emph{everything} that is known about locally optimal cuts before this work. There are no upper bounds better than the trivial $O(n)$. Furthermore, the lower bounds do not hold in stronger models such as quantum-LOCAL \cite{gavoille-kosowski-markiewicz-2009-what-can-be-observed,arfaoui-fraigniaud-2014-what-can-be-computed-without,akbari-coiteux-roy-etal-2025-online-locality-meets,balliu-brandt-etal-2025-distributed-quantum-advantage,balliu2026,BalliuCC0ELMOS25}---there we do not even know if the true complexity of the problem is closer to $O(1)$ or $\Omega(n)$.

We note that locally optimal cuts have been studied e.g.\ in the context of descriptive combinatorics also under the names of \emph{unfriendly colorings} and \emph{unfriendly partitions} \cite{shelah-milner-2012-graphs-with-no-unfriendly-partitions,conley-2014-measure-theoretic-unfriendly-colorings,conley-marks-unger-2020-measurable-realizations-of,conley-tamuz-2020-unfriendly-colorings-of-graphs-with}.

\paragraph{Our contribution.}

We present the first nontrivial upper bound for locally optimal cuts. We show that, in bounded-degree graphs, the problem can be solved in $\log^{O(1)} n$ rounds, both in the deterministic and randomized LOCAL models. In particular, the complexity of locally optimal cut in the deterministic LOCAL model is now known to be $\log^{\Theta(1)} n$ rounds.

In the case where the maximum degree $\Delta$ of the graph is not constant, our algorithm requires $O(\maxDeg^{2} \log^{6} n)$ randomized rounds (see \cref{cor:algorithm:locally-optimal-cut} for the main result), which can be derandomized at polylogarithmic cost by using standard techniques \cite{ghaffari-harris-kuhn-2018-on-derandomizing-local,rozhon-ghaffari-2020-polylogarithmic-time-deterministic,ghaffari2024ND}, obtaining a deterministic runtime of $O(\maxDeg^{2} \log^{8}(n) \poly(\log \log n))$ (\cref{cor:analysis:det-complexity-locally-optimal-cut}). Furthermore, we show that $\Omega(\min\{\maxDeg,\sqrt{n}\})$ rounds are necessary, even for randomized and quantum algorithms; in particular, there is no polylogarithmic-round algorithm for the general case (see \cref{thm:locLB} for the result).

\paragraph{Open questions.}

The complexity of the problem is not completely settled yet.
For bounded-degree graphs, the situation is the following:
In the deterministic LOCAL model there is now a polynomial (in $\log n$) gap between the upper and lower bounds. In the randomized LOCAL model, there is still an exponential gap between the upper and lower bounds, and in the quantum-LOCAL model we are lacking any nontrivial lower bounds.
For unbounded-degree graphs, there is still a polynomial gap between the upper and lower bounds, even in the quantum-LOCAL model.

\subsection{Local potential problems}\label{ssec:intro-lpp}

\paragraph{Local potential problems.}

Locally optimal cut is a perfect example of a much more general problem family that we call \emph{local potential problems}. These are constraint satisfaction problems where we have 
\begin{enumerate}[noitemsep]
    \item finitely many possible node labels,
    \item a \emph{local} definition of ``happy'' vs.\ ``unhappy'' nodes,
    \item a \emph{local} potential function that associates a value to each node or edge, and
    \item we can turn ``unhappy'' nodes into ``happy'' nodes with a local change so that we also locally improve the total potential.
\end{enumerate}
For example, in the locally optimal cut problem, our node labels are $\{-1, +1\}$. A node is ``happy'' if at least half of its incident edges are cut edges (this is a local definition in the sense that it only depends on the constant-radius neighborhood of the node). We assign local potential $1$ to each cut edge and potential $0$ to all other edges (this is a local potential in the sense that it only depends on the constant-radius neighborhood of the edge). Now we can always turn unhappy nodes into happy nodes by flipping their label, and this will improve the total potential around the node.

All local potential problems share the commonality that a feasible solution always exists and it is trivial to find with a centralized algorithm that turns unhappy nodes into happy nodes: as the total potential always strictly increases, and there are only finitely many possible solutions, the centralized algorithm will always converge to an all-happy solution.

\paragraph{Locally optimal problems.}

Local potential problems are also closely connected with \emph{local optimization problems}; these are optimization problems where we have
\begin{enumerate}[noitemsep]
    \item finitely many possible node labels and
    \item a \emph{local} utility function that associates a value to each node or edge,
\end{enumerate}
and the task is to maximize the sum of local utilities. For example, the maximum cut problem is such an optimization problem; the local utility function assigns value $1$ to each cut edge and $0$ to all other edges.

Now consider the task of finding a \emph{locally optimal solution} where we cannot improve the total utility by changing the value of any single node. This is exactly what we have in the locally optimal cut problem. Hence it is an example of \emph{finding a locally optimal solution to a local optimization problem}, which is a mouthful; we will use the term \emph{locally optimal problem} for such a task.

It may be helpful to interpret locally optimal problems from a game-theoretic perspective. The goal is to find a \emph{stable} solution or \emph{equilibrium} where no node has an incentive to change its label so that it could improve utility in its local neighborhood.

\paragraph{Reductions and generalizations.}

Assume we have an efficient algorithm for solving locally optimal problems. Then we can also solve local potential problems equally fast. To see this, note that we can simply ignore the notion of happiness first, and interpret local potentials as utilities. We find a solution where no node can unilaterally improve the values of the local potential functions. Then by definition all nodes will also be happy (as unhappy nodes would have a local move that improves potentials). Hence it is sufficient to design efficient algorithms for locally optimal problems.

At first the definition of locally optimal problems may feel very restrictive, as we only consider solutions where no single node can change its own label to improve the potential. In many problems we would like to make more complicated local changes (e.g.\ consider the maximum independent set problem, and the task of finding a locally optimal solution where there is no place where removing one node would let us add two nodes). However, in the bounded-degree setting we can easily encode such problems in our formalism. To see this, assume we have a graph problem with $k$ labels. Then the local label of a node $v$ can be a vector of $k^{\maxDeg+1}$ labels $\ell_0, \dotsc, \ell_\maxDeg$, where $\ell_0$ is the ``base'' label of node $v$, and $\ell_i$ indicates that node $v$ wants to add $\ell_i$ modulo $k$ to the label of its $i$th neighbor. This way changing the \emph{physical} label of one node allows us to change the \emph{logical} labels of all nodes in its radius-$1$ neighborhood. This scheme of course generalizes also to any constant radius. Hence, without loss of generality, it suffices to consider solutions where no single node can unilaterally improve the total utility.

\paragraph{Our contribution.}

In this work we design an algorithm that solves any locally optimal problem in $\log^{O(1)} n$ rounds in bounded-degree graphs in the deterministic LOCAL model. As a corollary, we can also solve any local potential problem in the same time. We also analyze the dependence on the maximum degree $\maxDeg$, so that the results are directly applicable beyond the bounded-degree settings. 
In general, the randomized complexity of our algorithm is \(\maxDeg^{O(1)} (\Lambda / \lambda)^{2} \log^{6} (\Lambda n / \lambda)\), where \(\Lambda\) is the \emph{maximum improvement} associated to the problem, and \(\lambda\) is the \emph{minimum improvement} associated to the problem: \(\Lambda\) and \(\lambda\) are defined by the sole description of the problem and the value of \(\maxDeg\) (see \cref{sec:prelim,sec:alg} for formal definitions, and \cref{thm:algorithm:lop} for the main result). 
Derandomizing the algorithm comes at a polylogarithmic cost, so the deterministic complexity is \(\maxDeg^{O(1)} (\Lambda / \lambda)^{2} \log^{6} (\Lambda n / \lambda) \cdot \tilde{O}(\log^2(n))\) (\cref{cor:analysis:det-complexity}).
In particular, for any constant $\maxDeg$, it holds that \(\Lambda = O(1)\) and \(\lambda = O(1)\), and hence we have an $O(\log^6 n)$-round randomized algorithm and an \(\tilde{O}(\log^ 8 n)\)-round deterministic algorithm.
Our result for locally optimal cuts now follows as a special case of this.

Our work also makes it possible to directly translate many centralized optimization algorithms that are based on local search into nontrivial distributed algorithms. Our work also gives a new, simple way to argue that a given LCL problem $\problem$ can be solved in polylogarithmic time: simply try to come up with a local potential function or a local utility function that allows us to interpret $\problem$ as a local potential problem or as a locally optimal problem.

Conversely, super-polylogarithmic lower bounds on the distributed round complexity now directly imply that an LCL problem \emph{cannot} be represented with the help of a local potential. To give a simple example, stable matchings require a linear number of rounds, even in bounded-degree bipartite graphs \cite{floreen-kaski-etal-2010-almost-stable-matchings-by}, so this problem cannot be represented with local potential functions. Results of this form may be of interest also in a much broader context outside distributed computing.

\paragraph{An application: defective coloring.}

Key examples of local potential problems include \emph{defective coloring} problems, see e.g.\ \cite[Chapter~6]{barenboim-elkin-2013-distributed-graph-coloring}, \cite{chung-pettie-su-2017-distributed-algorithms-for-the,fischer-ghaffari-2017-sublogarithmic-distributed}. Assume that we have $c$ colors and we want to have defect at most $d$, that is, each node can have at most $d$ neighbors with the same color. If $c(d+1) > \maxDeg$, this is a local potential problem. Here the potential function is the number of properly colored edges. If a node is unhappy (it has more than $d$ neighbors with the same color), it can find a color that makes it happy. Our algorithm can be used to solve this problem for any constants $c$, $d$ and $\maxDeg$ with $c(d+1) > \maxDeg$ in a polylogarithmic number of rounds.

\paragraph{Future work and open questions.}

We believe the notions of local potential problems and locally optimal problems will inspire plenty of follow-up work; this can also serve as a bridge between distributed constraint satisfaction problems (e.g.\ LCL) and distributed optimization and approximation problems.

A key open question is understanding the distributed complexity landscape of the family of local potential problems. When \(\maxDeg = O(1)\), our work shows that there are local potential problems with deterministic round complexity $\log^{\Theta(1)} n$, and by prior work there are also local potential problems with round complexity $\Theta(\log^* n)$ and $O(1)$. However, one question that is currently wide open is what happens when we step outside the deterministic LOCAL model. For example, does randomness ever help with any of these problems? The only lower bound we have is the exponentially lower $\Omega(\log\log n)$ for locally optimal cuts; is this the correct complexity? If yes, do all local potential problems admit $\poly(\log\log n)$ randomized algorithms?
Do any of these problems admit a distributed quantum advantage?

Also, questions related to decidability are wide open. For example, given the specification of a local potential problem, is it decidable whether it can be solved in $O(\log^* n)$ deterministic rounds or whether it requires $\Omega(\log n)$ deterministic rounds?
(We remind the reader of the well-known fact that, for LCL problems, there is a gap in the deterministic complexity landscape between $O(\log^* n)$ and $\Omega(\log n)$.)

\subsection{New techniques and key ideas}

\paragraph{Informal intuition.}

Our algorithm is based on the idea that we start with an arbitrary labeling and repeatedly improve it. However, if we only fix nodes that are unhappy, we will end up in a long dependency chain, similar to what we discussed in \cref{ssec:intro-loc}.

Intuitively, the key challenge in our example was that the solution was very fragile---all nodes were \emph{barely happy}, and a minor perturbation in their local neighborhood may cause them to flip, which leads to long cascades that we must avoid.

What our algorithm does differently is that we \emph{push it further, beyond the minimal requirement of happiness}. Our algorithm tries to find local neighborhoods where the value of the potential can be improved by changing \emph{multiple} nodes. For example, the algorithm might notice that the shaded area here admits a modification that improves the total potential a lot:
\begin{center}
    \includegraphics[page=4,scale=0.8]{figs.pdf}
\end{center}
Such a change may at first look counterproductive, as we did not make any nodes happy, yet we increased the number of unhappy nodes. However, if we now start fixing unhappy nodes near the left end of the graph, such a fixing process will not propagate deep inside the shaded region. Intuitively, the nodes in the shaded region are so happy that minor perturbations will not result in catastrophic cascading chains of correction.

\paragraph{Key properties (\cref{sec:analysis}).}

Let us now try to identify a key combinatorial property of locally optimal cuts (and more generally, local potential problems), focusing on bounded-degree graphs for simplicity. Consider first a sequential algorithm that eliminates all unhappy nodes one by one, let $X$ be the set of all nodes that were flipped in this process at least once. Note that in general, $X$ might span the entire input graph.

We prove that for every node $u \in X$, we can find in the $O(\log n)$-radius neighborhood of $u$ a set of nodes $A$ that we can flip so that we get a significant improvement in the total potential. For example, in the above illustration, the changes that we did in the shaded area improved the total potential significantly.

Conversely, this means that if there is no such improving set $A$ in the $O(\log n)$-radius neighborhood of node $u$, we know that we are safe: the sequential flipping procedure will never reach node~$u$.

The goal of the algorithm is to reach a situation in which all nodes can declare that they are safe. Then they are not only currently happy, but also cannot be influenced by any future changes.

\paragraph{Algorithm (\cref{sec:alg}).}

Given the above property, the key challenge is to orchestrate the process of making all nodes safe in a parallel and distributed manner.

Our algorithm uses the MPX clustering algorithm \cite{miller-peng-xu-2013-parallel-graph-decompositions-using, elkin-neiman-2022-distributed-strong-diameter-network} to divide the nodes into polylogarithmic-radius clusters with small boundaries. Each cluster can make nodes far from boundaries safe; we can simply compute for each cluster the result of repeatedly finding for each internal node an improving set $A$, and flipping $A$ until all internal nodes are safe.

The main technical challenge is taking care of nodes close to boundaries that may not be safe and not even happy. To this end, we repeat the randomized clustering process $O(\log n)$ times, to ensure that each node will be w.h.p.\ at least once far from a boundary and hence made safe.

Now we need to ensure that we do not undo what we gained in the previous phase. In particular, consider a node $u$ that was deep inside a cluster in phase $1$, and was made safe, but it is close to a cluster boundary in phase $2$. We need to ensure that improvements in phase $2$ do not tamper with the property that node $u$ was already safe.

Here we make use of the idea that we can have different ``degrees'' of safety: In phase $1$, we are more aggressive in making nodes safe than in phase $2$; put otherwise, in phase $2$ we require a larger relative gain when we are finding improving sets.

This is sufficient to ensure that node $u$ is protected from a chain of improvements in phase $2$. To see this, we consider a sequence of improvements that we did in phase $2$ in the vicinity of node $u$. The union of these strongly improving sets we found in phase $2$ would give a weakly improving set that we would have found already in phase $1$ around node $u$, but as $u$ was made very safe, such a set cannot exist. Hence phase-$2$ improvements will not tamper with phase-$1$ gains.

A careful choice of parameters allows us to push this reasoning throughout all $O(\log n)$ phases, showing that each node is made safe and hence in particular happy during the process, and this property is never lost.

Our algorithm is randomized, but it can be derandomized (with polylogarithmic cost) by using standard techniques \cite{ghaffari-harris-kuhn-2018-on-derandomizing-local,rozhon-ghaffari-2020-polylogarithmic-time-deterministic,ghaffari2024ND}.

\paragraph{Lower bound (\cref{sec:lb}).}

If we parameterize the running time of our algorithm both as a function of maximum degree $\maxDeg$ and the number of nodes $n$, we have an algorithm for locally optimal cuts that runs in $O(\maxDeg^{2} \log^{6} n)$ rounds; in particular, the dependence on $n$ is only polylogarithmic, but the dependence on $\maxDeg$ is polynomial.

In \cref{sec:lb} we prove, through an indistinguishability argument, that there is no (deterministic or randomized) algorithm for locally optimal cuts that solves the problem in time that is only polylogarithmic in both $n$ and $\maxDeg$; we prove a lower bound of $\Omega(\min\{\maxDeg,\sqrt{n}\})$ rounds.

For the high-level idea of the lower bound, observe that in the locally optimal cut problem, the output ($-1$ or $+1$) of a node $v$ is determined if the outputs of all neighbors of $v$ are fixed and if the number of $-1$ and $+1$ outputs among the neighbors is not equal. In this case, $v$ must output the minority value among the neighbors. Hence, if $v$ has odd degree, the output of $v$ is determined by the outputs of the neighbors and thus, if two odd-degree nodes $v$ and $v'$ have the same set of neighbors, then $v$ and $v'$ must have the same output in any valid solution. Based on this observation, we build a path-like lower bound graph for which any valid locally optimal cut solution corresponds to a valid proper $2$-coloring of a path, a problem that is known to have a round complexity lower bound that is linear in the length of the path~\cite{linial-1992-locality-in-distributed-graph-algorithms}.

The construction consists of a sequence of layers such that each layer forms an independent set and consecutive layers are connected by a complete bipartite subgraph. Note that all nodes in the same layer then have the same set of neighbors. The layers are constructed such that all nodes have an odd degree and hence all nodes in the same layer must output the same value. Further, the layers are chosen in such a way that for any two consecutive layers, the outputs must differ. As a consequence, any locally optimal cut solution on the layered graph induces a $2$-coloring of the underlying path. Finally, we make sure that the construction is symmetric enough so that the $2$-coloring can only be produced if the endpoints of the path coordinate with each other. The claimed lower bound then follows because for a sequence of length $k$, we need $\Theta(k^2)$ nodes and a maximum degree $\maxDeg$ of order $\Theta(k)$.

Since the indistinguishability argument has been proved to work also in the quantum-LOCAL model (and, actually, even the stronger non-signaling model \cite{gavoille-kosowski-markiewicz-2009-what-can-be-observed,arfaoui-fraigniaud-2014-what-can-be-computed-without,coiteux-roy-d-amore-etal-2024-no-distributed-quantum}), the same lower bound applies to quantum algorithms as well.

\paragraph{Notes on the definitions (\cref{sec:prelim}).}

So far in the introduction, we have used the locally optimal cut problem as a running example, and there it is natural to relate it with the max-cut problem and use the number of cut edges as the potential or utility function; in particular, unhappy nodes can be flipped to \emph{increase} the potential.
In the technical parts, it is more convenient to use a convention where unhappy nodes can be flipped to \emph{decrease} the potential. Hence the formal definition of potential problems in \cref{sec:prelim} follows this convention.
\section{Preliminaries}\label{sec:prelim}
In this section, we provide some basic graph definitions, and then we define the family of problems that we consider.
\subsection{Notation and definitions}
In this paper, we consider simple undirected graphs unless otherwise stated.
For a graph \(G=(V,E)\), we denote its set of nodes by $V$ and its set of edges by $E$. If $G$ is not clear from the context we use \(V(G)\) to denote the set of nodes of $G$ and \(E(G)\) to denote the set of edges of $G$.
For any two nodes \(u,v\in V(G)\) the distance between \(
u\) and \(v\) is the number of hops in a shortest path between \(u\) and \(v\), and is denoted by \(\dist(u,v)\).
For any node \(v \in V(G)\), the radius-\(r\) neighborhood \(\NN_r[v]\) of \(v\) is the set of all nodes at distance at most \(r\) from \(v\).
Let \(A \subseteq V(G)\) be a subset of nodes, and \(B \subseteq E(G)\) a subset of edges.
By \(\neighborhood_{r}[A]\) we denote the set of all nodes at distance at most \(r\) from any node in \(A\), and by \(\neighborhood_{r}[B]\) we denote the set of all nodes that are endpoints of edges in \(B\) or are at distance at most \(r\) from an endpoint of an edge in \(B\).
Similarly, by \(\neighborhood_r(v)\) we denote the radius-\(r\) (open) neighborhood of \(v\), which is defined simply as \(\neighborhood_r(v) = \neighborhood_r[v] \setminus \{v\}\).
When we omit the subscript \(r\), we assume that \(r = 1\). 
For any two graphs \(G,H\), the intersection between \(G\) and \(H\) is the graph \(G \cap H = (V(G) \cap V(H), E(G) \cap E(H))\), and the union between \(G\) and \(H\) is the graph \(G \cup H = (V(G) \cup V(H), E(G) \cup E(H))\).
The difference between two graphs \(G\) and \(H\) is the graph \(G \setminus H = (V(G), E(G) \setminus E(H))\). 

For any graph \(G = (V,E)\) and any node \(v\in V\), if \(\{v,u\} \in E\) is an edge incident to \(v\), we say that the pair \((v,\{v,u\})\) is a half-edge incident to \(v\). 
We denote the set of all half-edges in the graph \(G\) by \(\HH(G)\). By $[x]$ we denote the set $\{1,\ldots,x\}$.

\begin{definition}[Labeled graph]
	Let \(\VV\) and \(\EE\) be two sets of labels.
	We say that a graph \(G = (V,E)\) is \((\VV,\EE)\)-labeled if there exist two labeling functions \(\lblnodes \colon V \to \VV\) and \(\lbledges \colon \HH(G) \to \EE \) that assign labels to the nodes and half-edges of \(G\), respectively.
    The pair \(\ell = (\lblnodes,\lbledges)\) is said to be the \emph{labeling} of \(G\).
\end{definition}

\paragraph{LOCAL model.}
In this paper, we consider the \local model of distributed computation~\cite{linial-1992-locality-in-distributed-graph-algorithms}.
We are given a distributed network of \(n\) nodes/processors, each of which is capable of arbitrary (but finite) computation.
Nodes are connected by edges which represent communication links, and are able to send messages of arbitrarily large (but finite) size to each other: note that nodes can discriminate between different communication links through \emph{ports}, which are numbered from \(1\) to \(\maxDeg\), where \(\maxDeg\) is the maximum degree of the network, that are adversarially assigned.
Each node is also given a unique identifier in the set \([n^c]\) for some constant \(c \ge 1\), which is, again, adversarially assigned. 
The communication network is represented by a simple undirected graph \(G = (V,E)\) where \(V\) is the set of nodes/processors and \(E\) is the set of edges/communication links.
All nodes run the same algorithm and computation proceeds in synchronous rounds.
In each round, each node sends messages to its neighbors, receives messages from all of its neighbors, and performs local computation.
The input of a problem is given in the form of a \((\VV,\EE)\)-labeled graph \(G = (V,E)\) where \(\VV\) and \(\EE\) are sets of labels.
The output of a node is an assignment of labels to the node itself and/or to its incident half-edges: note that this information can be stored in the node through the use of port numbers.
When a node determines its own output, we say that the node terminates its computation.
The \emph{running time} of a \local algorithm is the number of communication rounds it takes for all nodes to produce their output.
We also refer to the running time of an algorithm as the \emph{locality} of the algorithm.
The \emph{complexity} of a problem is the minimum running time (across all possible algorithms) needed to solve the problem.
We can also assume that nodes are equipped with some source of randomness (independently of other nodes), such as an arbitrarily long random bit string, which they can use during the computation.
In this case, we talk about the \emph{randomized} \local model as opposed to the \emph{deterministic} \local model.
In case of randomized computation, we require that the problem is solved with high probability (w.h.p.), that is, with probability at least \(1 - 1/n\).
We remark that in case of randomized \local, we assume that unique identifiers are generated by the nodes uniformly at random.
This can be done with probability \(1 - 1/n^c\) for arbitrarily large constant \(c\) by generating identifiers of length \(O(\log n)\) bits, and thus does not affect the success probability of an algorithm.

\subsection{Problems of interest}

We first define the notion of \emph{centered graph}, that is, a graph with a distinguished node that we call the \emph{center}.

\begin{definition}[Centered graph]
    Let \(H = (V,E)\) be any graph, and let \(v \in V\) be any node of \(H\).
    The pair \((H,v)\) is said to be a \emph{centered graph} and \(v\) is said to be the \emph{center} of \((H,v)\).
	We say that the pair \((H,v)\) has \emph{radius} \(r\) if all nodes in \(H\) are at distance at most \(r\) from \(v\) and all edges in \(H\) have at least one endpoint at distance at most \(r-1\) from \(v\). 
\end{definition}

\begin{definition}[Labeled centered graph]
	Let \(\VV,\EE\) be two finite sets of labels, and \(r, \maxDeg \in \nats\).
	We denote by \(\LL(\VV,\EE,r,\maxDeg)\) the family of all centered graphs \((H,v)\) such that \(H\) is a \((\VV,\EE)\)-labeled graph of degree at most \(\maxDeg\) and has radius at most \(r\).
\end{definition}

We will focus on problems that ask to produce labelings over graphs that satisfy some constraints in all radius-\(r\) neighborhoods of every node.
Consider a centered graph \((H,v)\). 

\begin{definition}[Set of constraints]\label{def:preliminaries:set-of-constraints}
    Let \(r, \maxDeg \in \nats\) be constants, \(I\) a finite set of integers, and
    \((\VV,\EE)\) a tuple of finite label sets.
    Let \(\CC\) be a finite set of centered graphs \(\{(H_i,v_{i})\}_{i \in I}\) where each \(H_i\) is a \((\VV,\EE)\)-labeled graph of degree at most \(\maxDeg\) and \((H_i,v_{i})\) has radius at most \(r\) for all \(i \in I\).
    Then, \(\CC\) is said to be an \((r,\maxDeg)\)-\emph{set of constraints} over \((\VV,\EE)\).
\end{definition}

In the following, we define what it means to satisfy a set of constraints.
For a \((\VV,\EE)\)-labeled graph \(G = (V,E)\) and a node \(v \in V\), we denote by \(G_r(v)\) the \((\VV,\EE)\)-labeled subgraph of \(G\) that contains all nodes at distance at most \(r\) from \(v\) and all edges that have at least one endpoint at distance at most \(r-1\) from \(v\).

\begin{definition}[Labeled graph satisfying a set of constraints]\label{def:preliminaries:constraint-satisfaction}
    Let \(G = (V,E)\) be a \((\VV,\EE)\)-labeled graph for some finite sets \(\VV\) and \(\EE\) of labels.
    Let \(\CC\) be an \((r,\maxDeg)\)-set of constraints over \((\VV,\EE)\) for some finite \(r,\maxDeg \in \nats\).
    We say that the graph \(G\) satisfies \(\CC\) if, for every node \(v \in V\), it holds that \((G_r(v), v) \in \CC\).
\end{definition}

Before introducing the family of problems considered in this paper, we first define the family of problems called locally checkable labelings.
\begin{definition}[Locally checkable labeling (LCL) problems]\label{def:preliminaries:lcl-problems}
    Let \(r, \maxDeg \in \nats\) be constants.
    A locally checkable labeling (LCL) problem \(\problem\) is a tuple \((\Vin, \Ein, \Vout, \Eout, \CC)\) where \(\Vin\), \(\Ein\), \(\Vout\), and \(\Eout\) are finite label sets and \(\CC\) is an \((r,\maxDeg)\)-set of constraints over \((\Vin \times \Vout, \Ein \times \Eout)\).
\end{definition}

\emph{Solving} an LCL problem \(\problem\) means the following.
We are given as input a \((\Vin,\Ein)\)-labeled graph \(G = (V,E)\).
For any node \(v \in V\) and half-edge \((v,e) \in \HH(G)\), let us denote the input label of \(v\) by \(\inptLbl(v)\), and the input label of \((v,e)\) by \(\inptLbl((v,e))\). We want to produce an output labeling on \(G\) so that we obtain a \((\Vout,\Eout)\)-labeled graph.
Let us denote by \(\outLbl(v)\) and by \(\outLbl((v,e))\) the output labels of \(v\) and \((v,e)\), respectively.
Consider the \((\Vin \times \Vout, \Ein \times \Eout)\)-labeled graph \(G\) where the labeling is defined as follows: the label of each node \(v\) is \(\lblnodes(v) = (\inptLbl(v),\outLbl(v))\), and the label of each half-edge \((v,e)\) is \(\lbledges((v,e)) = (\inptLbl((v,e)),\outLbl((v,e)))\).
Now, the \((\Vin \times \Vout, \Ein \times \Eout)\)-labeled graph \(G\) must satisfy the \((r,\maxDeg)\)-set of constraints \(\CC\) over \((\Vin \times \Vout, \Ein \times \Eout)\) according to \cref{def:preliminaries:constraint-satisfaction}.

With these notions in mind, we can define the family of Locally Optimal Problems (LOPs) that we consider in this paper.

\begin{definition}[Locally Optimal Problems]\label{def:preliminaries:lop}
	A Locally Optimal Problem (LOP) is a pair \((\problem, \pot)\) where:
	\begin{itemize}
		\item \(\problem = (\Vin, \Ein, \Vout, \Eout, \CC)\) is an LCL problem as defined in \Cref{def:preliminaries:lcl-problems}, where \(\CC\) is a \((2r,\maxDeg)\)-set of constraints over \((\Vin \times \Vout, \Ein \times \Eout)\).
		\item \(\pot \colon \LL(\Vin \times \Vout, \Ein \times \Eout, r, \maxDeg) \to \reals_{\ge 0}\) is a function that assigns a non-negative real value to every \((\Vin \times \Vout, \Ein \times \Eout)\)-labeled centered graph \((H,v)\).
	\end{itemize}
    Moreover, \(\CC\) must satisfy the following property, which in essence says that a correct solution does not allow local improvements.
	For any \((\Vin \times \Vout, \Ein \times \Eout)\)-labeled graph \(G = (V,E)\), let us denote by \(\ell = (\lblnodes,\lbledges)\) the labeling of \(G\).
    We say that the potential associated to \((G,\ell)\) is \(\graphPot(G,\ell) = \sum_{v \in V }\pot_\ell(G_r(v),v)\), where \(\pot_\ell(G_r(v),v)\) is the evaluation of \(\pot\) on \((G_r(v),v)\) that is labeled by \(\ell\).
    For any node \(v \in V\), let the cost of $v$ be \(c_v := \pot_\ell(G_r(v),v)\).
    Then, the centered graph \((G_{2r}(v),v)\) labeled with $\ell$ belongs to \(\CC\) if and only if we cannot change the labeling of $v$ to improve the potential of a node in \((G_{2r}(v),v)\). Formally, there is no labeling \(\ell'\) with the following properties, where, for any node \(u\in V\), \(c_u' := \pot_{\ell'}(G_r(u),u)\):
    \begin{itemize}[noitemsep]
        \item Labeling $\ell'$ differs from \(\ell\) only on the \emph{output labels} of \(v\) and of its incident half-edges.
		\item According to labeling $\ell'$, \(G_{2r}(v)\) is still a \((\Vin \times \Vout, \Ein \times \Eout)\)-labeled graph.
		\item It holds that \( c_{v}' < c_{v} \), and that \(\graphPot(G_r(v),\ell') < \graphPot(G_r(v),\ell)\).
	\end{itemize}
\end{definition}
In other words, a centered graph $(G_{2r}(v),v)$ is contained in $\CC$ if and only if there is no alternative labeling for $v$ where:
\begin{itemize}[noitemsep]
    \item $\pot$ gives a strictly better value on node $v$, and
    \item the overall potential is strictly better (observe that the changes on $v$ can only affect nodes within distance $r$ from $v$, and their neighborhood is contained in the radius-$2r$ neighborhood of $v$).
\end{itemize}

The reader might wonder why we do not consider a slightly broader definition of LOPs, where we allow the relabeling to take place on all nodes and half-edges of \(G_r(v)\). In \Cref{def:preliminaries:lop-alternative} we provide such a more general definition, and in \cref{lemma:lop-alternative-to-lop} we will show that, given an algorithm for solving LOPs, we can use such an algorithm to solve problems in this broader family.

\begin{definition}[Generalized Locally Optimal Problems]\label{def:preliminaries:lop-alternative}
	A Generalized Locally Optimal Problem (GLOP) is a pair \((\problem, \pot)\) where:
	\begin{itemize}
		\item \(\problem = (\Vin, \Ein, \Vout, \Eout, \CC)\) is an LCL problem as defined in \Cref{def:preliminaries:lcl-problems}, where \(\CC\) is a \((3r,\maxDeg)\)-set of constraints over \((\Vin \times \Vout, \Ein \times \Eout)\).
		\item \(\pot \colon \LL(\Vin \times \Vout, \Ein \times \Eout, r, \maxDeg) \to \reals_{\ge 0}\) is a function that assigns a non-negative real value to every \((\Vin \times \Vout, \Ein \times \Eout)\)-labeled centered graph \((H,v)\).
	\end{itemize}
    Moreover, \(\CC\) must satisfy the following property.
	For any \((\Vin \times \Vout, \Ein \times \Eout)\)-labeled graph \(G = (V,E)\), let us denote by \(\ell = (\lblnodes,\lbledges)\) the labeling of \(G\).
    We say that the potential associated to \((G,\ell)\) is \(\graphPot(G,\ell) = \sum_{v \in V }\pot_\ell(G_r(v),v)\), where \(\pot_\ell(G_r(v),v)\) is the evaluation of \(\pot\) on \((G_r(v),v)\) that is labeled by \(\ell\).
    The additional requirement that $\CC$ needs to satisfy is that, if the centered graph \((G_{3r}(v),v)\) labeled with $\ell$ \emph{does not belong} to \(\CC\), then there must exist a labeling \(\ell'\) with the following properties.
    \begin{itemize}[noitemsep]
        \item Labeling $\ell'$ differs from \(\ell\) only on the \emph{output labels} of \(G_r(v)\).
        \item According to the labeling $\ell'$, \(G_{3r}(v)\) is still a \((\Vin \times \Vout, \Ein \times \Eout)\)-labeled graph.

        \item The centered graph \((G_{3r}(v),v)\) labeled with $\ell'$ belongs to $\CC$. 
        \item Let $c_v := \pot_\ell(G_r(v),v)$ and $c'_v := \pot_{\ell'}(G_r(v),v)$. It holds that $c'_v < c_v$.
		\item It holds that \(\graphPot(G_{2r}(v),\ell') < \graphPot(G_{2r}(v),\ell)\).
	\end{itemize}
\end{definition}
In other words, if a centered graph $(G_{3r}(v),v)$ labeled with $\ell$ is not contained in $\CC$, then the outputs on $(G_{r}(v),v)$ can be changed such that we obtain a centered graph in $\CC$, and the obtained potential is strictly better, globally and at node $v$. (Observe that the changes on $G_{r}(v)$ can only affect nodes within distance $2r$ from $v$, and their neighborhood is contained in the radius-$3r$ neighborhood of $v$.)

We now prove that we can use an algorithm designed to solve LOPs to solve GLOPs.

\begin{lemma}\label{lemma:lop-alternative-to-lop}
    Consider a GLOP \((\problem, \pot)\) as defined in \Cref{def:preliminaries:lop-alternative}. 
    Then, there exists an LOP \((\problem', \pot')\) as defined in \Cref{def:preliminaries:lop} such that any solution to \((\problem', \pot')\) can be converted into a solution of \((\problem, \pot)\) by an \(O(1)\)-round \local algorithm.
\end{lemma}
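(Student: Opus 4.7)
The plan is to reduce the given GLOP $(\Pi, \pot)$ from \Cref{def:preliminaries:lop-alternative} to an LOP $(\Pi', \pot')$ of the form in \Cref{def:preliminaries:lop}, using the delta-encoding trick informally mentioned in the introduction. After identifying $\Vout$ with $\mathbb{Z}_{|\Vout|}$ and $\Eout$ with $\mathbb{Z}_{|\Eout|}$, I let each node $v$'s $\Pi'$-output consist of a base label $\ell_0^{(v)} \in \Vout$ together with additive deltas $\delta_v^u \in \Vout$ for every node $u \in G_r(v)$ and $\delta_v^{(u,e)} \in \Eout$ for every half-edge $(u,e) \in \HH(G_r(v))$. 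The \emph{logical} $\Pi$-labeling is obtained pointwise by $\ell(u) := \ell_0^{(u)} + \sum_{w \in G_r(u)} \delta_w^u \pmod{|\Vout|}$, and analogously on half-edges. The essential property of this encoding is that a single-node $\Pi'$-move at $v$ can realize any prescribed change to the logical labeling across all of $G_r(v)$---exactly the scope of the radius-$r$ relabelings that the GLOP's improvement property exploits. The $O(1)$-round conversion from a $\Pi'$-labeling to the corresponding $\Pi$-labeling is the obvious one: each node collects the $\Pi'$-outputs within distance $r$ and outputs its own logical label.

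Next I would define the LOP potential on radius-$r' := 4r$ centered graphs by
\[\pot'(G_{r'}(v), v) := M \cdot \mathbf{1}\!\left[(G_{3r}(v), v) \notin \CC \text{ under the logical labeling}\right] + \pot_\ell(G_r(v), v),\]
where $M$ is a constant exceeding the maximum possible value of $\pot_\ell$ on any radius-$r$ labeled centered graph (a finite quantity by bounded degree and finite label sets). The radius $r' = 4r$ is the smallest value that makes $\pot'$ well-defined as a function of the radius-$r'$ centered graph (the logical labels on $G_{3r}(v)$ require $\Pi'$-outputs in $G_{4r}(v)$) and that ensures every single-node $\Pi'$-move at $v$ affects $\pot'$-values only at nodes in $G_{r'}(v)$, so the LOP local total $\graphPot'(G_{r'}(v), \cdot)$ faithfully captures the move's entire effect.

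The main step is to show that any $\CC'$-satisfying labeling of $\Pi'$ induces a logical labeling satisfying $\CC$ globally in $\Pi$; it is from this that the $O(1)$-round conversion above yields a valid $\Pi$-solution. Arguing by contradiction, suppose the logical labeling violates $\CC$ at some $u$. By \Cref{def:preliminaries:lop-alternative} there is a radius-$r$ relabeling $\ell'$ of the outputs on $G_r(u)$ with $(G_{3r}(u), u) \in \CC$ under $\ell'$ and $\graphPot(G_{2r}(u), \ell') < \graphPot(G_{2r}(u), \ell)$. Implementing $\ell'$ as a single-node $\Pi'$-move at $u$ strictly decreases $c_u$ (the $M$-term drops by $M$, dominating the bounded $\pot_\ell$-change by the choice of $M$) and the $\pot_\ell$-contribution to the local total (by the GLOP's strict gap). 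The main technical obstacle---the most delicate part---is that the radius-$r$ relabeling might create new $\CC$-violations at nodes in $G_{4r}(u) \setminus \{u\}$, each contributing an extra $+M$ to the local total and potentially spoiling the strict decrease. I plan to circumvent this by enlarging the delta radius encoded in each $\Pi'$-output from $r$ to a sufficiently large constant $R = \Theta(r)$ (and correspondingly setting $r' = R + 3r$) and encoding in the single LOP-move at $u$ the entire \emph{sequential} GLOP-fixing cascade restricted to $G_{R-r}(u)$: since each cascade step strictly decreases the bounded potential $\graphPot$ over the bounded region $G_{R+r}(u)$, and the range of $\pot$ has a positive minimum gap $\delta_{\min} > 0$ by finiteness of the domain, the cascade terminates after a number of steps that depends only on $r$, $\Delta$, $|\Vout|$, $|\Eout|$. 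Choosing $R$ so that $G_{4r}(u) \subseteq G_{R-r}(u)$, the terminal labeling has no $\CC$-violations inside $G_{4r}(u)$, so the $M$-contribution to the local total does not increase and the LOP-move at $u$ is strictly improving---contradicting the $\CC'$-fixed-point hypothesis and completing the reduction.
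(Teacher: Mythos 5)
Your encoding step (additive deltas so that one physical move at $u$ can realize an arbitrary logical relabeling of a radius-$R$ ball, plus the obvious $O(1)$-round decoding) is in the same spirit as the paper's construction, but the final step of your argument has a genuine gap. You implement, as a single $\Pi'$-move at $u$, a GLOP-fixing cascade restricted to $G_{R-r}(u)$, and you claim that afterwards ``the $M$-contribution to the local total does not increase'' because the terminal labeling has no $\CC$-violations inside $G_{R-r}(u) \supseteq G_{4r}(u)$. That is not enough: a fix performed at a node $w$ near the boundary of $G_{R-r}(u)$ changes logical labels in $G_r(w)$, i.e.\ up to distance $R$ from $u$, and can therefore \emph{create new} $\CC$-violations at nodes $z$ with $R-r < \dist(u,z) \le R+3r$. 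Your cascade never touches these nodes (it only fixes violations centered in $G_{R-r}(u)$), yet they lie inside $G_{r'}(u)$ with $r'=R+3r$, which is exactly the range over which \cref{def:preliminaries:lop} requires $\graphPot(G_{r'}(u),\ell') < \graphPot(G_{r'}(u),\ell)$ for the move to be improving. Each such exported violation adds $+M$ to that sum, and nothing bounds their number by the number of violations you removed, so the strict decrease of the local total---and hence the contradiction with $\CC'$-satisfaction---is not established. Enlarging $R$ cannot repair this: for your indicator term to be a function of the radius-$r'$ view you must take $r' \ge R+3r$, so the annulus into which violations can be pushed always lies inside the window where they are charged $M$. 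Pushing ``unhappiness'' to the boundary of any constant-radius region is precisely the phenomenon the paper's main algorithm is designed to fight; it does not disappear under a bounded local cascade. (Your termination argument for the cascade is fine; the first LOP condition $c_u' < c_u$ is also fine; it is the local-total condition that fails.)

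For contrast, the paper's proof uses essentially your delta/sum-modulo encoding (realized via port numbers and arrays decoded by the partial functions $f$ and $g$), but it attaches the large value $M$ \emph{only} to malformed encodings---a defect each node can repair unilaterally---and defines $\pot'$ on well-formed neighborhoods to be just the decoded potential $\pot_g$, with $\CC'$ defined tautologically as the set of neighborhoods admitting no improving single-node move. Because no $M$-penalty is tied to $\CC$-violations of the logical labeling, the boundary-export problem you face does not arise there; correctness of the decoded labeling is instead to be argued directly from the GLOP improvement property (a logical violation yields a single-node move that strictly decreases the decoded local potential sum). If you wish to keep your indicator-based potential, you would need an additional argument controlling the number (or weight) of newly created violations in the annulus $G_{R+3r}(u)\setminus G_{R-r}(u)$, which the current proposal does not provide.
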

\begin{proof}
    Let \((\problem, \pot)\) be a GLOP as defined in \Cref{def:preliminaries:lop-alternative}, where the constraint of $\problem$ contains centered neighborhoods of radius $3r$. In the following, we define \((\problem', \pot')\). We start by defining the labels of $\problem'$.
    \begin{itemize}
        \item The sets of input labels are the same, i.e., \(\Vin' = \Vin\), \(\Ein' = \Ein\).
        \item Let $d_{r,\maxDeg}$ be an upper bound on the number of nodes and half-edges contained in a centered graph of maximum degree $\maxDeg$ and radius $r$. The sets of output labels are as follows. 
        \begin{itemize}
            \item \(\Vout' = \{\bot\}\).
            \item \(\Eout' = [\maxDeg] \times [|\Vout|]^{d_{r,\maxDeg}} \times [|\Eout|]^{d_{r,\maxDeg}}\).
        \end{itemize}
        \item Let $r' := 2r$. The radius of the centered graphs in $\CC'$ is \(2r' = 4r\) and the maximum degree is \(\maxDeg' = \maxDeg\).
    \end{itemize}

    For any \((\Vin' \times \Vout', \Ein' \times \Eout')\)-labeled graph \(G = (V,E)\), let us denote by \(\ell = (\lblnodes,\lbledges)\) the labeling of \(G\). For each node $v$, we now define a function $f$ that takes as input a centered graph $(G_{r}(v),v)$ labeled with $\ell$, and produces a labeling. By abusing the notation, with $f$ we also denote the obtained labeling. The function $f$ will be a partial function, that is, in some cases, the function $f$ will not be able to label $(G_{r}(v),v)$.

    Let $(p_1,X_1,Y_1), \ldots, (p_{\deg(v)},X_{\deg(v)},Y_{\deg(v)})$ be the labels assigned by $\ell$ to the half-edges incident to $v$.
    The values $p_1,\ldots,p_{\deg(v)}$ are called \emph{ports} of $v$.
    If $\exists i,j ~:~ X_i \neq X_j \text{ or }Y_i \neq Y_j$, then $f((G_{r}(v),v))$ is undefined. Let $X = X_1$ and $Y = Y_1$.
    Consider a BFS visit on $G_{r}(v)$ where, the order in which the neighbors of some node $u$ are visited, is according to the ports given as output by $u$. If $u$ has ports with values not in $[\deg(u)]$, or two incident ports with the same value, then $f$ is undefined.
    The labeling $f$ on $(G_{r}(v),v)$ is then obtained by assigning, to the $j$-th visited half-edge, the $j$-th element of $Y$, and to the $j$-th visited node, the $j$-th element of $X$.
    For each node $v$, we say that \emph{$f$ succeeds at $v$} if $f((G_r(v),v))$ is defined.

    We now define a function $g$ that takes as input a centered graph $(G_{2r}(v),v)$ labeled with $\ell$ and returns a labeling for $(G_{r}(v),v)$ that uses the labels of $\problem$. By abusing the notation, with $g$ we also denote the obtained labeling. Again, the function $g$ will be a partial function. For each node $u$ in $(G_{r}(v),v)$, compute $f((G_{r}(u),u))$. If $f$ is undefined in any of the cases in which we applied it, then $g$ is undefined on $(G_{2r}(v),v)$. For each node (resp.\ half-edge) $z$ in $(G_{r}(v),v)$, the labeling $g$ for $\problem$ is defined as follows. Let $a$ be the array containing the output that each node in $(G_{r}(v),v)$ produced for $z$, according to $f$. Then, $g$ labels $z$ with the $(\sum_{x \in a} x \pmod{|\Vout|})$-th label in $\Vout$ (resp.\ $(\sum_{x \in a} x \pmod{|\Eout|})$-th label in $\Eout$), according to some arbitrary but fixed order.
    For each node $v$, we say that \emph{$g$ succeeds at $v$} if $g((G_{2r}(v),v))$ is defined.
   
    The potential function $\pot'$ according to $\ell$ is defined as follows.
  \[
  \pot'_\ell(G_{r'}(v),v):=
  \begin{cases}
  M' & \text{if $f$ does not succeed at $v$,}\\
  M  & \text{if $f$ succeeds at $v$ but $g$ does not succeed at $v$,}\\
  \pot_g(G_r(v),v) & \text{if $g$ succeeds at $v$.}
  \end{cases}
  \]
    where $M'$ and $M$ are arbitrary values strictly larger than any value returned by $\pot$, and such that $M' > M$.

    The constraint $\CC'$ contains exactly those $\ell$-labeled centered graphs $(G_{2r'}(v),v)$ such that there is no labeling $\ell'$, that differs only on the output labels of $v$ and its incident half-edges, that satisfies $\pot'_{\ell'}(G_{r'}(v),v) < \pot'_\ell(G_{r'}(v),v)$ and \(\graphPot(G_{r'}(v),\ell') < \graphPot(G_{r'}(v),\ell)\). This condition ensures that the LOP \((\problem', \pot')\) that we defined satisfies the requirements of \Cref{def:preliminaries:lop}.

    Now, we prove that any solution for $(\problem',\pot')$ can be converted into a solution for $(\problem,\pot)$ in $O(1)$ rounds.

We first observe that, in any valid solution for $(\problem',\pot')$, the potential on each node $v$ must be strictly less than $M'$.
Assume for a contradiction that some node has potential $M'$. Then there exists some node $v$ with some local issue (either an invalid port assignment, or inconsistent arrays assigned to the half-edges of $v$), and $f$ does not succeed at $v$.
Consider the following improvement:
\begin{itemize}
    \item If the ports at $v$ are invalid, $v$ considers some arbitrary order of its incident half-edges, and outputs, for each $1 \le i \le \deg(v)$, $(i,X,Y)$ on its $i$th half-edge, using the same arbitrary arrays $X,Y$ on all incident half-edges. 
    \item If the ports at $v$ are valid, for each $1 \le i \le \deg(v)$, let $e_i$ be the incident half-edge that currently has port value $i$. Node $v$ outputs $(i,X,Y)$ on $e_i$, using the same arbitrary arrays $X,Y$ on all incident half-edges. Note that, with this change, the port assignment remains unchanged.
\end{itemize}
Observe that, in both cases, the potential of $v$, after this change, drops from $M'$ to at most $M$. Moreover, since the ports on $v$ are not changed if already correct, we get that, if before the change $f$
  succeeded at some node $u$ with $\dist(u,v)\le r$, then after the change $f$ still succeeds at $u$ (and its evaluation is unchanged). Hence, if the ports of $v$ were valid before the change, then nodes in $G_{r'}(v)\setminus G_r(v)$ are unaffected by the change, while if the ports of $v$ were invalid, then their potential can only stay the same or improve. Moreover, for all nodes within distance $r$ from $v$, $g$ was undefined before the change (since $f$ was undefined at $v$), and hence had potential at least $M$. After the change, each such node has potential no larger than before. Hence, no node worsens its potential, while $v$ improves from $M'$ to at most $M$, a contradiction.

We now prove that, in any valid solution for $(\problem',\pot')$, the potential on each node $v$ must be strictly less than $M$. We already know that the potential on each node is at most $M$. Suppose for a contradiction that there is some node $v$ with potential exactly $M$. Then, $f$ succeeds at $v$, but $g$ does not. Hence, there exists some node $u$, within distance $r$ from $v$, where $f$ does not succeed. This would imply that $u$ has potential $M' > M$, a contradiction.

The algorithm that converts a solution for $(\problem',\pot')$ into a solution for $(\problem,\pot)$ is defined as follows.
Each node $v$ gathers its $2r$-hop neighborhood and computes $g((G_{2r}(v),v))$, and outputs the label that $g$ assigns to $v$ and its incident half-edges. This algorithm requires $O(1)$ rounds, since $r$ is constant.
Let $\ell'$ be the labeling of the given solution for $(\problem',\pot')$, and let $\ell$ be the labeling for $\problem$ obtained by decoding $\ell'$ with $g$.
Assume for a contradiction that $\ell$ is not a solution
  for $\problem$. Then there exists some node $v$ satisfying that $(G_{3r}(v),v)\notin\CC$. By the definition of GLOP, this means that there exists a relabeling $\hat\ell$ of $G_r(v)$ such that, according to  $\hat\ell$,
  $(G_{3r}(v),v)\in\CC$, 
  $c'_v < c_v$ (where $c_v:=\pot_{\ell}(G_r(v),v)$ and $c'_v:=\pot_{\hat\ell}(G_r(v),v)$), and
  $\graphPot(G_{2r}(v),\hat\ell)<\graphPot(G_{2r}(v),\ell)$.
  Now modify $\ell'$ as follows: change only the output arrays $(X,Y)$ on node $v$, such that $g$ produces exactly the labels of $\hat\ell$ on $G_r(v)$ (this is possible, since the output of $g$ is a modular sum). Let $\ell''$ be the obtained labeling.
  Since all nodes have potential strictly smaller than $M$, $g$ succeeds at every node. Therefore
  $\pot'_{\ell''}(G_{r'}(v),v)=c'_v<c_v=\pot'_{\ell'}(G_{r'}(v),v)$,
  and, because only nodes in $G_{r'}(v)$ are affected and the labeling decoded by $g$ is exactly $\hat\ell$ on $G_r(v)$,
  $\graphPot(G_{r'}(v),\ell'')<\graphPot(G_{r'}(v),\ell')$.
  Hence $v$ would have a valid local improving move in $\problem'$, contradicting that
  $\ell'$ is a solution to $\problem'$.
\end{proof}

The definition of locally checkable labeling problems inherently asks for the maximum degree \(\maxDeg\) of the input graph to be a constant.
However, note that all definitions in this section generalize to the case where \(\maxDeg = \maxDeg(n)\) is a function of \(n\).
In our analysis, we will consider the general case where \(\maxDeg\) is indeed a function of \(n\). %
\section{The algorithm}\label{sec:alg}

\subsection{Preliminary definitions and results}

A key notion in our algorithm is that of an \emph{improving set}.
As already described in the introduction, the main insight of our approach is that if a node $u$ is part of some sequential flipping process that obtains a globally correct solution, we can detect this in $u$'s $O(\log n)$ neighborhood (this only holds for \(\maxDeg = O(1)\), but we will consider the more general case).
What we will find in $u$'s neighborhood is an improving set with a good improving ratio.
We will later prove this formally, so to state our algorithm, we give the necessary definitions here.

\begin{definition}[Improving set]\label{def:algorithm:improving-subgraph}
    Let \((\problem, \pot)\) be an LOP as defined in \Cref{def:preliminaries:lop}.
    Consider any \((\Vin \times \Vout, \Ein \times \Eout)\)-labeled graph \(G = (V,E)\) not necessarily satisfying the constraints of \(\problem\).
    Let \(\ell_1\) denote the labeling of \(G\). For a subset $A\subseteq V$ of nodes, we denote by \emph{relabeling} of $A$ a new output label assignment to the nodes of $A$ and to their incident half-edges.
    Consider any subset $A$ that can be relabeled, obtaining a new labeling \(\ell_2\) of \(G\), such that \(\graphPot(G,\ell_2) < \graphPot(G, \ell_1)\).
    We say that the pair \((A,\ell_2)\) is an \emph{improving set} of \(G\) with respect to \(\ell_1\).
    The \emph{improvement} of \((A,\ell_2)\) with respect to \(\ell_1\) is defined as the difference \(\imp(A,\ell_1,\ell_2):= \graphPot(G,\ell_1) - \graphPot(G,\ell_2)\).
    The \emph{improving ratio} (IR) of \((A,\ell_2)\) with respect to \(\ell_1\) is defined as the ratio \(\impRatio(A,\ell_1, \ell_2) := \imp(A,\ell_1,\ell_2) / \abs{A}\) if $|A|>0$. For an empty set $A$ of nodes we define $\impRatio(A, \ell_1, \ell_2):=0$, for any $\ell_1,\ell_2$.
    By \emph{diameter} of an improving set \((A, \ell_A)\), we denote the \emph{weak diameter} of the subgraph \(G[A]\) induced by \(A\).
\end{definition}

\begin{definition}[Minimal improving set]\label{def:algorithm:minimal-improving-subgraph}
    An improving set \((A,\ell_2)\) is said to be minimal if there is no improving set \((A',\ell_2')\) such that \(A'\subseteq A\) and \(\impRatio(A',\ell_1,\ell_2') > \impRatio(A,\ell_1,\ell_2)\).
\end{definition}

Notice that every improving set \((A,\ell_2)\) contains a minimal improving set, as one can iteratively pick a subset of nodes (possibly, the same set) and a new labeling that increases the improving ratio until no such subset exists.
Note that this process must terminate with at least one node, as the empty graph has improving ratio \(0\). 

From a top-down view, our algorithm does nothing else than relabeling a lot of improving sets.
So the changes that our algorithm does to the labeling can be fully described as a sequence of improving sets.

\begin{definition}[Sequence of \(\beta\)-improving sets]\label{def:algorithm:sequence-of-improving-subgraphs}
    Let \((\problem, \pot)\) be an LOP.
    Consider any \((\Vin \times \Vout, \Ein \times \Eout)\)-labeled graph \(G = (V,E)\) not necessarily satisfying the constraints of \(\problem\), and let \(\ell_0\) denote the labeling of \(G\).
    Let \(\beta > 0\).
    A sequence of \(\beta\)-improving sets in \(G\) w.r.t.\ \(\ell_0\) is a sequence of pairs \((A_1,\ell_1), (A_2,\ell_2), \ldots, (A_k,\ell_k)\) such that:
    \begin{itemize}[noitemsep]
        \item For every \(1 \le i \le k\), the pair \((A_i,\ell_i)\) is a minimal improving set of \(G\) w.r.t.\ the labeling \(\ell_{i-1}\).
        \item For every \(1 \le i \le k\), it holds that \(\impRatio(A_i,\ell_{i-1},\ell_i) \ge \beta\).
    \end{itemize}
\end{definition}

To find improving sets and figure out which improving sets to relabel, our algorithm will first compute a clustering of the graph and then find a good sequence of improving sets in every cluster.
Note that finding such a sequence inside a cluster is easy if the cluster has small diameter, as we can just brute force this computation.

\paragraph{MPX clustering.}
In this paragraph, we define and describe the MPX clustering algorithm, which is a key component of our algorithm, following the paper that first introduced it~\cite{miller-peng-xu-2013-parallel-graph-decompositions-using}.

Let us define the notion of a \((\rho,d)\)-decomposition of a graph \(G = (V,E)\).
\begin{definition}[\((\rho,d)\)-decomposition of a graph]\label{def:preliminaries:decomposition}
    A \((\rho,d)\)-decomposition of a graph \(G = (V,E)\) is a partition of the nodes \(V\) into sets \(C_1, \ldots, C_k\), which we call \emph{clusters}, such that:
    \begin{itemize}[noitemsep]
        \item For every \(i\), the strong diameter of each \(C_i\) is at most \(d\).
        \item The number of edges whose endpoints belong to different clusters is at most \(\rho |E|\).
    \end{itemize}
\end{definition}
Before stating the properties of the MPX clustering algorithm, we report a useful lemma. 

\begin{lemma}[Lemma 4.4 of \cite{miller-peng-xu-2013-parallel-graph-decompositions-using}]\label{lem:MPX-second-smallest}
    Let \(d_1 \le \ldots \le d_n\) be arbitrary values, and let \(\lambda_1, \ldots, \lambda_n\) be independent random variables such that \(\lambda_i \sim \text{Exp}(\beta)\) for each \(i\), for some \(\beta > 0\).
    Then, the probability that the smallest and the second-smallest values of \(d_i - \lambda_i\) are within distance \(c\) of each other is at most \(O(\beta c)\).
\end{lemma}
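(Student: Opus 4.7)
The plan is to reduce the claim to the memorylessness of the exponential distribution via a conditioning-on-the-argmax trick. Setting $W_i := \lambda_i - d_i$, the event in the lemma is equivalent to $W_{(n)} - W_{(n-1)} \le c$, where $W_{(n)} > W_{(n-1)}$ denote the largest and second-largest values among $\{W_i\}$. Because each $\lambda_i$ has a continuous distribution, the argmax is almost surely unique, so the law of total probability gives
\[
\Pr[W_{(n)} - W_{(n-1)} \le c] \;=\; \sum_{j=1}^{n} \Pr[j \text{ is the argmax}] \cdot \Pr\bigl[W_{(n)} - W_{(n-1)} \le c \,\big|\, j \text{ is the argmax}\bigr].
\]
Since the prior probabilities sum to $1$, it suffices to prove a uniform conditional bound $\Pr[W_{(n)} - W_{(n-1)} \le c \mid j \text{ is the argmax}] \le 1 - e^{-\beta c}$ for every $j$, which will then yield the target $O(\beta c)$ with no dependence on $n$.

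For the uniform conditional bound I would fix $j$ and condition further on the values of $\{\lambda_k\}_{k \ne j}$, which determines $M := \max_{k \ne j} W_k$. The argmax event becomes $\{\lambda_j > M + d_j\}$, and the gap $W_{(n)} - W_{(n-1)}$ equals $\lambda_j - (M + d_j)$. By the memorylessness of $\lambda_j \sim \mathrm{Exp}(\beta)$, when $M + d_j \ge 0$ the conditional distribution of $\lambda_j - (M + d_j)$ given $\lambda_j > M + d_j$ is again $\mathrm{Exp}(\beta)$, so the conditional probability that the gap is at most $c$ is exactly $1 - e^{-\beta c}$. When $M + d_j < 0$ the argmax event is automatic and $\lambda_j$ remains unconditioned; the same bound then follows from the density bound $\beta$ on $\lambda_j$ (or directly, from $\Pr[\lambda_j \le M + d_j + c] \le 1 - e^{-\beta c}$). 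Averaging over $\{\lambda_k\}_{k \ne j}$ preserves the bound in both cases.

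The main technical subtlety I expect is performing the union bound at the right place. A naive approach that conditions on $\{\lambda_k\}_{k \ne j}$ and then directly sums $\Pr[\text{gap} \le c \text{ and } j \text{ is the argmax}] \le \beta c$ over all $j$ gives only the wasteful estimate $O(n\beta c)$; the saving of the factor $n$ comes precisely from separating out the prior $\Pr[j \text{ is the argmax}]$, using that these probabilities sum to $1$, and combining this with the \emph{uniform} conditional bound $1 - e^{-\beta c}$ provided by memorylessness.
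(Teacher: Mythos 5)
Your proof is correct. Note that the paper does not prove this lemma at all---it is imported verbatim as Lemma~4.4 of the cited Miller--Peng--Xu paper---and your argument (decompose over the a.s.\ unique argmax of $W_i=\lambda_i-d_i$, condition on $\{\lambda_k\}_{k\ne j}$, and apply memorylessness to get the uniform conditional bound $1-e^{-\beta c}\le\beta c$, including the correct handling of the $M+d_j<0$ case) is essentially the standard proof given in that source, so there is nothing to compare against within this paper itself.
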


The MPX clustering algorithm achieves the following result.
\begin{lemma}[\cite{miller-peng-xu-2013-parallel-graph-decompositions-using}]\label{lemma:preliminaries:mpx}
    There exists a randomized \local algorithm that computes a \((\rho,d)\)-decomposition of a graph \(G = (V,E)\) of \(n\) nodes in \(O(\log n / \rho)\) rounds w.h.p., for some \(d = O(\log n / \rho)\).
    Let \(k = 1 / (c \rho)\), where \(c\) is a large enough constant.
    Furthermore, it holds that for each \(v \in V\),  the probability that the radius-\(k\) neighborhood of \(v\) is fully contained in some cluster is at least \(1/2\).
\end{lemma}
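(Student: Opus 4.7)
The plan is to use the MPX algorithm in its standard form. Each node $u$ independently draws a shift $\lambda_u \sim \text{Exp}(\beta)$ for some parameter $\beta = \Theta(\rho)$ to be fixed later, and every node $v$ joins the cluster centered at $u^{\star}(v) := \arg\min_u d_u(v)$, where $d_u(v) := \dist(u,v) - \lambda_u$ and ties are broken by identifier. The clusters are simply the preimages of $u^{\star}(\cdot)$. All three required guarantees---strong diameter, cut-edge fraction, and local isolation---then follow from distributional properties of the exponential together with the triangle inequality.

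I would first handle the strong-diameter and runtime bounds. A standard tail estimate for $n$ i.i.d.\ exponentials yields $L := \max_u \lambda_u = O(\log n / \beta) = O(\log n / \rho)$ w.h.p. If $v$ chooses $u^{\star} = u$, then $d_u(v) \le d_v(v) = -\lambda_v$, i.e.\ $\dist(u,v) \le \lambda_u - \lambda_v \le L$. Thus every cluster has strong diameter at most $2L = O(\log n / \rho)$, and each node can compute $u^{\star}(v)$ after a BFS to depth $L$, giving the claimed round complexity.

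For the cut-edge bound I would invoke the memorylessness of the exponential: for any edge $\{u,v\}$ and any two candidate centers, the probability that the ``race'' to be minimizer at $u$ ends up with a different winner than the race at $v$ is $O(\beta) = O(\rho)$, because the two distance profiles differ coordinatewise by at most $\dist(u,v) = 1$. Linearity of expectation yields $O(\rho)|E|$ cut edges in expectation, which can be promoted to the $\rho|E|$ upper bound w.h.p.\ by tuning the constant hidden in $\beta$ and a standard concentration argument on the exponential tails.

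The main substance is the local isolation property. Fix $v$ and suppose the smallest value $d_{u^{\star}}(v)$ is separated from the second-smallest by more than $2k$. For any $w \in \NN_k[v]$ and any center $u$, the triangle inequality gives $|d_u(v) - d_u(w)| \le k$; chaining these two bounds on the gap shows that $u^{\star}$ remains the unique minimizer for $w$, so $w$ lands in the same cluster as $v$ and the entire radius-$k$ neighborhood of $v$ sits in one cluster. Now I would apply \cref{lem:MPX-second-smallest} with the ordered values $\dist(u_i,v)$ playing the role of $d_i$: the probability that the smallest and second-smallest $d_u(v)$ fall within $2k$ is $O(\beta k) = O(\rho k)$. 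Choosing the constant $c$ in $k = 1/(c\rho)$ sufficiently large forces this probability below $1/2$. The main obstacle is calibrating constants consistently: a single value of $\beta = \Theta(\rho)$ has to simultaneously support the diameter bound, yield the $\rho|E|$ cut-edge bound, and make \cref{lem:MPX-second-smallest} deliver the isolation failure bound of $1/2$ for a uniform choice of $c$.
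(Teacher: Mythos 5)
Your proposal is correct and follows essentially the same route as the paper: run MPX with exponential shifts $\text{Exp}(\Theta(\rho))$, cite/derive the diameter and cut-edge guarantees of the decomposition, and obtain the isolation property by applying \cref{lem:MPX-second-smallest} to the shifted distances at $v$ and transferring the minimizer to all of $\NN_k[v]$ via the triangle inequality. The paper simply cites \cite{miller-peng-xu-2013-parallel-graph-decompositions-using} and \cite{elkin-neiman-2022-distributed-strong-diameter-network} for the decomposition and the LOCAL implementation rather than re-deriving them, but the substance of the isolation argument is identical to yours.
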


\begin{proof}
We briefly show how the MPX algorithm works.
Every node \(u\) of the input graph \(G = (V,E)\) is assigned, independently of other nodes, an exponential random variable \(\lambda_u \sim \text{Exp}(\rho/2)\).
We say that \(\lambda_u\) is the \emph{shift} of \(u\).
The \emph{shifting distance} from a node \(u\) to a node \(v\) is defined as \(\dist_{\text{shift}}(u,v) = \dist(u,v) - \lambda_u\): note that, with probability 1, \(\dist_{\text{shift}}(u,v) \neq \dist_{\text{shift}}(v,u)\).
Each node \(u\) that, for any other node $v\neq u$, satisfies \(\dist_{\text{shift}}(u,u) < \dist_{\text{shift}}(v,u)\), will be the \emph{center} of the cluster \(C_u\). 
We show that at least one such node exists.

Consider any node \(v \in V\), and take the node \(u\) that realizes \(\min_{v' \in V} \dist_{\text{shift}}(v',v)\).
By contradiction, suppose there exists a node \(w\) such that \( \dist_{\text{shift}}(u,u) >  \dist_{\text{shift}}(w,u)\).
Then, we have that \( - \lambda_u > \dist(w,u) - \lambda_w\).
This implies that \( \dist_{\text{shift}}(u,v) = \dist(u,v) - \lambda_u > \dist(u,v) + \dist(u,w) - \lambda_w \ge \dist(w,v) - \lambda_w = \dist_{\text{shift}}(w,v)\) by the triangle inequality.
Hence, \(u\) does not realize the minimum of \(\min_{v' \in V} \dist_{\text{shift}}(v',v)\), reaching a contradiction.

Now, we say that each node \(v\) joins the cluster of the node \(u\) that minimizes the shifting distance from \(u\) to \(v\), that is, \(\min_{v' \in V} \dist_{\text{shift}}(v',v)\).
Let \(d = O(\log n / \rho)\).
In \cite{miller-peng-xu-2013-parallel-graph-decompositions-using}, it is shown that this procedure produces a \((\rho,d)\)-decomposition of \(G\) with high probability.
While the original procedure is defined for the PRAM model, \textcite{elkin-neiman-2022-distributed-strong-diameter-network}
showed that it can be implemented in the \local model through an \(O(\log n / \rho)\)-round randomized \local algorithm.

Now let two nodes $v, v'$, such that $\dist(v,v') \le  k$. 
Let $u_v$ be the center of the cluster that $v$ joins, that is the node realizing \(\min_{u \in V} \dist_{\text{shift}}(u,v)\).
Let \(d_1 \le \ldots \le d_n\) be the distances from \(v\) to the nodes \(u_1, \ldots, u_n \in V\).
Consider \(\lambda_i\) to be the exponential random variable associated to the node \(u_i\).
By \cref{lem:MPX-second-smallest}, we have that, with probability \(O(\rho k / 2)\), it holds that the smallest value and the second-smallest value in the family \(\{(d_i - \lambda_i)\}_{i \in [n]}\) are within distance \(k\) of each other.
Let \(k = 1/(c\rho)\) for a large enough constant \(c\) so that the probability \(O(\rho k / 2)\) is strictly less than \(1/2\).

Hence, \(\dist_{\text{shift}}(u,v') = \dist(u,v') - \lambda_u \le \dist(u,v) + \dist(v,v') - \lambda_u \le k + \dist_{\text{shift}}(u,v)\).
At the same time, for any other \(u' \neq u\), 
\(\dist_{\text{shift}}(u',v') = \dist(u',v') - \lambda_{u'} \ge \dist(u',v) - \dist(v,v') - \lambda_{u'}\ge \dist_{\text{shift}}(u',v) - k\).
Now notice that \(\Pr(\cup_{u' \neq u} \{\abs{\dist_{\text{shift}}(u,v) - \dist_{\text{shift}}(u',v)} \le k\}) < 1/2\) by \cref{lem:MPX-second-smallest}, hence the thesis.
\end{proof}

\subsection{Algorithm description}\label{sec:algorithm:description}
Our algorithm simply repeats the following process $\log n$ times: 
Compute a clustering, such that for every node $v$ with constant probability its entire radius-$O(\maxDeg^{2r} \Lambda^2 \log^4 (\Lambda n / \lambda))$ neighborhood is contained within its cluster, where \(\Lambda\) and \(\lambda\) are later defined in this section, and are related to the description of the specific LOP of interest.
In each cluster we brute-force a maximal sequence of improving sets, each of improving ratio at least $R$.
We then apply the relabeling implied by this sequence in every cluster and inform all nodes of the cluster. 
Informally, we now consider the nodes in these clusters to be "safe".
To ensure that we do not "undo" the progress we have made in previous iterations, we have to increase the minimum improving ratio $R$ between every phase by a small value (roughly $\varepsilon$).
\begin{algorithm}
\caption{LOP algorithm (from the perspective of a node)}\label{alg:lop}
\begin{algorithmic}[1]
\Require Number of nodes \(n\), max-degree \(\maxDeg\), description of the LOP \((\problem, \pot)\).
\Ensure Solution to \((\problem, \pot)\).
\State Initialize \(\lambda \gets \) minimum improvement associated to \((\problem, \pot)\) \label{algo:line:min-improving-ratio}
\State Initialize \(R \gets \lambda/4\) \label{algo:line:initial-IR}
\State Initialize \(\varepsilon \gets \frac{\lambda}{100 c_1 \log n} \) \Comment{\(c_1\) is a constant given by the proof of \cref{thm:algorithm:lop}}\label{algo:line:epsilon}
\State Initialize any output label for the node and its incident half-edges \label{algo:line:initial-labeling}
\Statex \emph{The next loop identifies the ``phases'' of our algorithm. In the following, \(c\), \(c_1\), \(c_2\), and \(c_3\) are large enough constants. The constants \(c_2\) and \(c_3\) are given by \cref{lem:imprBalls,lem:imprChain}, while \(c_1\) is given by the proof of \cref{thm:algorithm:lop}, and \(c\) is given by \cref{lemma:preliminaries:mpx}.}
\For{$i = 1$ to $c_1 \log n$} \label{algo:line:phases} 
    \State Run MPX with \(\rho = \varepsilon^2 /(10 c \cdot c_2 \maxDeg^{2r} \Lambda^2 \log^2 (\Lambda n / \lambda))\) \Comment{\(d = O(\log n / \rho)\) by \cref{lemma:preliminaries:mpx}} \label{algo:line:mpx}
    \State \(C \gets\) cluster of the node \label{algo:line:cluster}
    \State \(H \gets \) graph induced by \(C\) \label{algo:line:induced-graph}
    \State \(\ell \gets \) labeling of \(H\) \label{algo:line:initial-labeling-graph}
    \If{the node is the cluster leader} \label{algo:line:if-leader}
        \State Choose any maximal sequence of \(R\)-improving sets \(\left((A_i,\ell_i)\right)_{i = 1}^k\) in \(H\) w.r.t.\ \( \ell\) such that \label{algo:line:choose-maximal-sequence}
        \State \hspace{\algorithmicindent} 1. \(\NN_{2r+1}[A_i] \subseteq C\) for all \(i \in [k]\) \label{algo:line:safe-neighborhood}
        \State \hspace{\algorithmicindent} 2. The diameter of each $A_i$ is at most $c_3 \maxDeg^{r} \Lambda \log(n)/\varepsilon$ \label{algo:line:diameter} 
        \State Adopt the new labeling according to \(\ell_k\) \label{algo:line:adopt-labeling-leader} 
        \State \(\ell \gets \ell_k\) \label{algo:line:update-labeling-leader}
        \State Broadcast \(\ell_k\) to all nodes in \(C\) \label{algo:line:broadcast-sequence}
    \Else \label{algo:line:not-leader}
        \State Wait for the broadcast by the leader \label{algo:line:wait-broadcast}
        \State Adopt the new labeling according to \(\ell_k\) \label{algo:line:adopt-labeling-not-leader} 
        \State \(\ell \gets \ell_k\) \label{algo:line:update-labeling-not-leader}
    \EndIf \label{algo:line:endif-leader}
    \State Set \(R \gets R + \frac{\lambda}{20 c_1 \log n}\) \label{algo:line:update-IR}
\EndFor \label{algo:line:end-phases}
\State \Return labels of the node and of the incident half-edges \label{algo:line:return-labels}
\end{algorithmic}
\end{algorithm}

We now give a formal description of the algorithm given in \cref{alg:lop}. It solves an LOP \((\problem, \pot)\) as defined in \Cref{def:preliminaries:lop}. 
It takes as input the number of nodes \(n\) and the description of the LOP \((\problem, \pot)\).
In more detail, it gets the full description of \(\problem = (\Vin, \Vout, \Ein, \Eout, \CC)\), where \(\CC\) is an \((2r,\maxDeg)\)-set of constraints over \((\Vin \times \Vout, \Ein \times \Eout)\), and the potential function \(\pot\) that assigns a non-negative real value to every element of \(\LL(\Vin \times \Vout, \Ein \times \Eout, r, \maxDeg)\).

For every labeled centered graph \((H,v_H) \in \LL(\Vin \times \Vout, \Ein \times \Eout, 2r, \maxDeg)\) that does not belong to \(\CC\), there is a way to reassign the labels to \(v_H\) and its incident half-edges so that the potential \(\graphPot(G, \ell)\) decreases and the centered graph \((H,v_H)\) with the new labeling belongs to \(\CC\). 
Let \(\ell'\) be the new labeling: \((\{v_H\}, \ell')\) is a minimal improving set with some positive improvement \(\hat{\lambda}\).
Let \(\lambda\) be the minimum over all possible values of \(\hat{\lambda}\).
We say that \(\lambda\) is the minimum improvement associated to \((\problem, \pot)\) (which is initialized in Line~\ref{algo:line:min-improving-ratio}).
Similarly, let \(\Lambda\) be the maximum possible decrease of the potential over any centered graph \((H,v_H) \in \LL(\Vin \times \Vout, \Ein \times \Eout, 2r, \maxDeg)\).
We say that \(\Lambda\) is the maximum improvement associated to \((\problem, \pot)\).
Note that, given the description of the LOP and the value of \(\maxDeg\), the nodes of the graph can compute the values \(\lambda\) and \(\Lambda\) locally.

The idea of the algorithm is as follows.
Set \(\varepsilon =  \lambda / (100c_1\log(n))\), where \(c_1\) is a large enough constant (Line~\ref{algo:line:epsilon}), and set a threshold improving ratio to be \(R = \lambda/4\) (Line~\ref{algo:line:initial-IR}).
We start with an arbitrary labeling of the nodes and half-edges of \(G\).
Then, we have \(\Theta(\log n)\) phases (from Line~\ref{algo:line:phases} to Line~\ref{algo:line:end-phases}), where in each phase we run the MPX clustering algorithm (Line~\ref{algo:line:mpx}) with parameters \(\rho = \varepsilon^2 / (10 c \cdot c_2 \maxDeg^{2r} \Lambda^2 \log^2 (\Lambda n / \lambda)) = O(\lambda^2 / (\maxDeg^{2r} \Lambda^2 \log^4 (\Lambda n / \lambda)))\) and \(d = O(\log n / \rho)\), where \(c, c_2\) are large enough constants that come from \cref{lemma:preliminaries:mpx,lem:imprChain}, respectively: the bound on \(d\) then holds by \cref{lemma:preliminaries:mpx}.
Each run of MPX takes \(O(d) = O(\maxDeg^{2r} (\Lambda/\lambda)^2 \log^5 (\Lambda n / \lambda))\) rounds.
We assume that each cluster \(C\) given by the MPX algorithm comes with a \emph{leader node} \(v_C\), that can be the center of the cluster itself. 
This anyway can be obtained in \(O(d)\) rounds after each MPX run.
Each leader node looks at the subgraph \(H\) of \(G\) induced by the cluster \(C\) it belongs to, and it computes a maximal sequence of \(R\)-improving sets \((A_i,\ell_i)_{i = 1}^k\) in \(H\) w.r.t.\ \(\ell\).
The sequence satisfies that \(\NN_{2r+1}[A_i] \subseteq C\) for all \(i \in [k]\) (from Line~\ref{algo:line:if-leader} to Line~\ref{algo:line:choose-maximal-sequence}), ensuring that the sequences of different clusters do not interfere with each other. 
Furthermore, we require the diameter of each set in the sequence to be at most \(c_3 \maxDeg^r \Lambda \log n / \varepsilon\) (the choice of the large constant \(c_3\) comes from \cref{lem:imprBalls}).
When we say that the sequence is maximal, we mean that in \(H\) labeled by \(\ell_k\) there is no minimal improving set \((A',\ell')\) such that \(\impRatio(A',\ell_k,\ell') > R\) and \(\NN_{2r+1}[A'] \subseteq H\).
Note that such a sequence must exist and terminate, as the potential of \(H\) is always reduced after every relabeling (and cannot be reduced arbitrarily): furthermore, the sequence can be computed in \(O(d)\) rounds by brute-force.
For this the leader simply collects the topology and labeling of its entire cluster and brute forces every possible sequence in local computation.
Then, \(v_C\) adopts the labeling defined by \(\ell_k\) and also broadcasts \(\ell_k\) to every node of the cluster \(C\) (Line~\ref{algo:line:adopt-labeling-leader}).
Every non-leader node just waits for the leader node to broadcast the new labeling (Lines~\ref{algo:line:not-leader} and \ref{algo:line:wait-broadcast}), and then it adopts the new labeling (Line~\ref{algo:line:adopt-labeling-not-leader}).
Then, the phase terminates and the nodes update the reference value of the improving ratio \(R\) to \(R + \lambda / (20 c_1 \log n)\) (Line~\ref{algo:line:update-IR}).
The algorithm terminates after \(\Theta(\log n)\) phases, and it returns the labels of the nodes and of the incident half-edges (Line~\ref{algo:line:return-labels}).

Note that the round complexity of the algorithm is the running time of each phase, that is, \(O(d)\), multiplied by the number of phases, that is, \(O(\log n)\).
Hence,
\[
    T(n) = O(\log n \cdot d) = O(\log n \cdot \maxDeg^{2r} (\Lambda/\lambda)^2 \log^5 (\Lambda n / \lambda)) = O(\maxDeg^{2r} (\Lambda/\lambda)^2 \log^6 (\Lambda n / \lambda)).
\]

We remind the reader that \(r\) is a constant, while the values \(\lambda, \Lambda\) can depend on the maximum degree \(\maxDeg\) of the input graph: 
if \(\maxDeg = O(1)\), then \(\Lambda = O(1)\) and \(\lambda = O(1)\), disappearing from asymptotic notation.
In the rest of the paper, we prove the following theorem.
\begin{theorem}\label{thm:algorithm:lop}
    Let \((\problem, \pot)\) be a LOP as defined in \Cref{def:preliminaries:lop}.
    The randomized \local algorithm described in \cref{alg:lop} solves \((\problem, \pot)\) in \(O(\maxDeg^{2r} (\Lambda/\lambda)^2 \log^6 (\Lambda n / \lambda))\) rounds w.h.p.
    If \(\maxDeg = O(1)\), then the running time of \cref{alg:lop}  is \(O(\log^6 (n))\) rounds.
\end{theorem}

Note that, for locally optimal cut, it holds that \(r = 1\), \(\lambda = 1\), and \(\Lambda = \maxDeg\).
Since \(\maxDeg \le n\), by \cref{thm:algorithm:lop}, we would obtain a randomized runtime of \(O(\maxDeg^{4} \log^6 (n))\) rounds.
However, as we argue in the appendix, for a natural subclass of LOPs, called Locally Optimal Problems on Edges (LOPEs), we can do better and obtain a better runtime (see \cref{def:appendix:lop-edges,thm:algorithm:lope}).
Note that the proof scheme of \cref{thm:algorithm:lope} is essentially the same as the one of \cref{thm:algorithm:lop} and follows the same lines: it just exploits some additional properties of LOPEs to get better inequalities.
The related result for locally optimal cut is the following.

\begin{corollary}\label{cor:algorithm:locally-optimal-cut}
    The randomized \local algorithm described in \cref{alg:lope} solves the locally optimal cut problem in \(O(\maxDeg^{2} \log^6 (n))\) rounds w.h.p.
\end{corollary}
\begin{proof}
    Just apply \cref{thm:algorithm:lope} to the locally optimal cut problem, which is a special case of LOPE with \(\lambda = 1\).
    Note that there is no need to rescale the potential function (as done in \cref{sec:app:lope}) of the locally optimal cut problem, as \(\Gamma = 1\) already.
\end{proof}

\subsection{Derandomization}

The algorithm described in \cref{alg:lop} can be derandomized using standard techniques.

Let us define the notion of a \((\alpha, d)\)-network decomposition of a graph \(G = (V,E)\).

\begin{definition}[Network decomposition]\label{def:analysis:network-decomposition}
    A \((\alpha,d)\)-network decomposition of a graph \(G = (V,E)\) is a partition of the nodes \(V\) into sets \(C_1, \ldots, C_k\), which we call \emph{clusters}, such that:
    \begin{itemize}[noitemsep]
        \item For every \(i\), the strong diameter of each \(C_i\) is at most \(d\).
        \item The clusters can be colored with \(\alpha\)   colors so that no two adjacent clusters share the same color. 
    \end{itemize}
\end{definition}  

From \cite{ghaffari-kuhn-maus-2017-on-the-complexity-of-local,ghaffari-harris-kuhn-2018-on-derandomizing-local,balliu-brandt-etal-2020-how-much-does-randomness-help,ghaffari2024ND}, the following holds.

\begin{theorem}\label{thm:analysis:derandomization}
    Suppose we are given an \(R(n)\)-rounds randomized \local algorithm \(\AA\) that solves a labeling problem \(\problem\) with high probability.
    Then, there exists a deterministic \local algorithm that solves \(\problem\) that runs in time
    \[
        D(n) = O\left(R(n)\mathrm{ND}(n) + R(n)\alpha(n)d(n)\right),
    \]
    where \(\mathrm{ND}(n)\) is the deterministic complexity of finding an \((\alpha(n), d(n))\)-network decomposition of the input graph.
\end{theorem}

\citeauthor{ghaffari2024ND} proved the following theorem.

\begin{theorem}[\cite{ghaffari2024ND}]\label{thm:analysis:network-decomposition}
    There exists an \(\tilde{O}(\log^2 n)\)-round  deterministic \local algorithm that computes a \((O(\log n), O(\log n))\)-network decomposition of a graph \(G = (V,E)\) of \(n\) nodes.
\end{theorem}

We immediately obtain the following corollary.

\begin{corollary}\label{cor:analysis:det-complexity}
    There exists a deterministic \local algorithm that solves any LOP \((\problem, \pot)\) in \(O(\maxDeg^{2r} (\Lambda/\lambda)^2 \log^6 (\Lambda n / \lambda) \log^2 (n) \poly(\log \log n))\) rounds.
    If \(\maxDeg = O(1)\), then the running time of the deterministic algorithm is \(\tilde{O}(\log^8 (n))\) rounds.
\end{corollary}

As for the specific case of locally optimal cut, by applying the derandomization technique to the algorithm from \cref{cor:algorithm:locally-optimal-cut}, we have the following.

\begin{corollary}\label{cor:analysis:det-complexity-locally-optimal-cut}
    There exists a deterministic \local algorithm that solves the locally optimal cut problem in \(O(\maxDeg^{2} \log^8 (n) \poly(\log \log n))\) rounds.
    If \(\maxDeg = O(1)\), then the running time of the deterministic algorithm is \(\tilde{O}(\log^8 (n))\) rounds.
\end{corollary} %
\section{Analysis of the algorithm}\label{sec:analysis}

In this section, we prove \Cref{thm:algorithm:lop}. We already analyzed the runtime of the algorithm presented in \cref{sec:algorithm:description}. Hence, in the following, we prove that the algorithm produces a proper solution to the given problem.

For the rest of the section, we fix an LOP of interest \((\problem, \pot)\) and any input graph \(G = (V,E)\) of \(n\) nodes that is already \((\Vin \times \Vout, \Ein \times \Eout)\)-labeled (hence it has an input and a -- possibly invalid -- output labeling).
Let \(\ell\) denote the labeling of \(G\).
Recall that \(\lambda\) is the minimum improvement associated to \((\problem, \pot)\), while \(\Lambda\) is the maximum improvement associated to \((\problem, \pot)\).
Trivially, \(\lambda \le \Lambda\).

We fix some standard notation.
\begin{definition}\label{def:best-labeling}
	For any subset of nodes $A \subseteq V$, we denote by $\ell^\star_A$ the best possible relabeling of the nodes of \(A\) and their incident half-edges, that is,  
	\[
		\ell^\star_A = \arg \max_{x \in \mathscr{L}} \{\imp(A,\ell,x)\},
	\]
	breaking ties arbitrarily,
	where $\mathscr{L}$ is the set of all possible relabelings of $G$ such that \(G\) is a \((\Vin \times \Vout, \Ein \times \Eout)\)-labeled graph where the input labels are untouched, and the output labels might change only on the nodes of \(A\) and their incident half-edges.
\end{definition}

The whole analysis of the algorithm relies on the following two lemmas (\cref{lem:phase-terminates,lem:improving-sets-close-to-borders}), the last of which requires involved technical claims.
The first lemma states that each phase terminates.

\begin{lemma}\label{lem:phase-terminates}
	For any phase \(j\) of the algorithm, let \(\ell_j\) be the labeling of the algorithm after finishing phase \(j-1\) (Lines~\ref{algo:line:update-labeling-leader}--\ref{algo:line:update-labeling-not-leader}) and \(R_j\) be the value of \(R\) at the end of phase \(j-1\) (Line~\ref{algo:line:update-IR}).
	Furthermore, let \(C^{(j)}_1, \ldots, C^{(j)}_k\) be the clusters that are the result of the MPX algorithm at phase \(j\) (Line~\ref{algo:line:mpx}),
	Then, for each cluster \(C^{(j)}_i\), any maximal sequence of \(R_j\)-improving sets is finite.
\end{lemma}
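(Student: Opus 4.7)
The plan is to observe that each step in a sequence of $R_j$-improving sets strictly decreases the total potential by at least a fixed positive amount $R_j$, while the total potential of a cluster is nonnegative and bounded above by a value depending only on the cluster size and the problem parameters. Hence any such sequence must terminate in a bounded number of steps.

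First I would verify that $R_j > 0$ for every phase $j$. Indeed, the algorithm initializes $R \gets \lambda/4$ in Line~\ref{algo:line:initial-IR}, where $\lambda > 0$ is the minimum improving ratio of $(\Pi, \pot)$, and only increases $R$ in Line~\ref{algo:line:update-IR}. Next, by the definition of an $R_j$-improving set (\Cref{def:algorithm:sequence-of-improving-subgraphs}), for every index $i$ in the sequence we have $\impRatio(A_i, \ell_{i-1}, \ell_i) \ge R_j$. Since by convention the improving ratio of the empty set is $0 < R_j$, the set $A_i$ must be nonempty, i.e., $|A_i| \ge 1$, and therefore
\[
    \imp(A_i, \ell_{i-1}, \ell_i) \;=\; |A_i| \cdot \impRatio(A_i, \ell_{i-1}, \ell_i) \;\ge\; R_j.
\]

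Unfolding the definition of $\imp$ from \Cref{def:algorithm:improving-subgraph}, this means $\graphPot(H, \ell_i) \le \graphPot(H, \ell_{i-1}) - R_j$ for every $i$, where $H$ is the subgraph induced by the cluster. Iterating, $\graphPot(H, \ell_k) \le \graphPot(H, \ell_j) - k \cdot R_j$. Since $\pot$ takes values in $\reals_{\ge 0}$, we have $\graphPot(H, \ell_k) \ge 0$, and thus
\[
    k \;\le\; \frac{\graphPot(H, \ell_j)}{R_j}.
\]
The numerator is finite because $H$ has finitely many nodes and $\pot$ is bounded on the finite family $\LL(\Vin \times \Vout, \Ein \times \Eout, r, \maxDeg)$ of labeled centered graphs; for instance, $\graphPot(H, \ell_j) \le |V(H)| \cdot \Lambda'$, where $\Lambda' := \max_{(K,v)} \pot(K, v)$ over this finite family. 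Therefore $k$ is finite, which proves the lemma.

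There is no real obstacle here: the only subtle point is checking that every improving set in the sequence must have at least one vertex (so that the bound on improving ratio translates into a uniform bound on improvement), which is handled by the convention in \Cref{def:algorithm:improving-subgraph} that $\impRatio(\emptyset, \cdot, \cdot) = 0$. Everything else is a direct application of the definitions.
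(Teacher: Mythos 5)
Your proposal is correct and follows essentially the same argument as the paper: each $R_j$-improving set decreases the (finite, nonnegative) potential of the cluster by at least $R_j > 0$, so the sequence has at most $\graphPot(H^{(j)}_i, \ell_{H^{(j)}_i})/R_j$ elements. Your additional checks (that $R_j>0$ and that each $A_i$ is nonempty so the ratio bound yields an improvement of at least $R_j$) are sensible elaborations of the same proof.
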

\begin{proof}
	Let \(H^{(j)}_i \) be the graph induced by cluster \(C^{(j)}_i\), and \(\ell_{H^{(j)}_i}\) the restriction of \(\ell_j\) to \(H^{(j)}_i\).
	Since \(\graphPot(H^{(j)}_i, \ell_{H^{(j)}_i})\) is a finite value, and each improving set decreases this value by at least some constant \(R_j > 0\), then the sequence cannot have more than \(\graphPot(H^{(j)}_i, \ell_{H^{(j)}_i})/R_j\) elements.
\end{proof}

The second lemma states that, at the end of some phase \(i\), all improving sets are ``close enough'' to the borders of the current and previously computed clusters.
Let us be more formal.

\begin{definition}[Border sets]\label{def:borders}
	For phase $i$ of the algorithm, we define 
	\[
		B_i = \left\{u \in V \st \exists \{u,v\} \in E, C_u \neq C_v\right\}
	\]
	where $C_u$ is the cluster of $u$ and $C_v$ the cluster of $v$ after running MPX in phase $i$ (Line~\ref{algo:line:mpx}). 	
\end{definition}

So $B_i$ is the set of all nodes that have an edge that is crossing different clusters after running MPX at phase \(i\).
The next lemma shows that after phase \(i\), all remaining improving sets of small diameter that have improving ratio at least \(R_i\) (the value of \(R\) initialized at the end of phase \(i-1\) at Line~\ref{algo:line:update-IR}) are close to all borders \(B_j\) for all \(j \le i\).
Note that after the last phase the intersection of all border nodes will be empty with high probability. 
So this lemma implies that at the end of our algorithm, there do not exist any improving sets with small diameter and good improving ratio anymore.

\begin{lemma}\label{lem:improving-sets-close-to-borders}
	At the end of phase \(i\) of the algorithm, all minimal improving sets of diameter at most \(c_3 \maxDeg^r \Lambda \log n / \varepsilon\) and improving ratio at least \(R_i\) (the value of \(R\) initialized at the end of phase \(i-1\) at Line~\ref{algo:line:update-IR}) are fully contained in the set
	\[
		\neighborhood_{t_1}[B_i] \cap \left( \bigcap_{j = 1}^{i-1} \neighborhood_{t_2}[ B_j]\right),
	\]
	for \(t_1 = c_3 \maxDeg^{r} \Lambda \log n / \varepsilon + 10(r+1)\) and \(t_2 = c_2 \maxDeg^{2r} \Lambda^2 \log^2 (\Lambda n/
	\lambda) / \varepsilon^2 + t_1\), where \(c_2\) and \(c_3\) are as in Lines~\ref{algo:line:mpx} and \ref{algo:line:diameter}, respectively.
\end{lemma}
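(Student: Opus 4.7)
The plan is to split the lemma into its two inclusions and handle each separately. Both proofs proceed by contradiction, in both cases by violating the maximality of some sequence chosen by a cluster leader during the algorithm. Part~1 (containment in $\NN_{t_1}[B_i]$) is a direct one-phase argument; Part~2 (containment in $\NN_{t_2}[B_j]$ for $j<i$) requires combining many improving sets across phases $j,\dots,i$ and is where the two unstated technical lemmas \cref{lem:imprBalls} and \cref{lem:imprChain} are invoked.

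\textbf{Part 1: $A \subseteq \NN_{t_1}[B_i]$.} Suppose some $u_0 \in A$ satisfies $\dist(u_0, B_i) > t_1 = c_3\log n/\varepsilon + 10(r+1)$. Since $\diam(A) \le c_3 \log n/\varepsilon$, every node of $A$ is at distance $>10(r+1)$ from $B_i$, hence every node of $\NN_{2r+1}[A]$ is at distance $>8r+9>0$ from $B_i$. Because $\NN_{2r+1}[A]$ lies in the connected ball of $G$ of radius $c_3\log n/\varepsilon + 2r+1$ around $u_0$, and this ball avoids $B_i$ entirely, path-following shows that $\NN_{2r+1}[A]$ is contained in a single phase-$i$ cluster $C$. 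But then $(A,\ell')$ satisfies both conditions on Lines~\ref{algo:line:safe-neighborhood}--\ref{algo:line:diameter} of the algorithm, is an $R_i$-improving set with respect to the labeling at the end of phase $i$, and contains a minimal improving set of improving ratio at least $R_i$. Thus the leader of $C$ could have extended its sequence, contradicting maximality.

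\textbf{Part 2: $A \subseteq \NN_{t_2}[B_j]$ for each $j<i$.} Fix $j<i$ and suppose some $u_0 \in A$ has $\dist(u_0, B_j) > t_2 = c_2 \log^2 n/\varepsilon^2 + t_1$. By the diameter bound on $A$, the whole set $A$ and in fact the ball $U = \NN_{s}[A]$ of radius $s = c_2 \log^2 n/\varepsilon^2 + O(r)$ around $A$ sits in a single phase-$j$ cluster $C^{(j)}$, with $\NN_{2r+1}[U] \subseteq C^{(j)}$. Let $\ell^{(j-1)}$ denote the labeling at the start of phase $j$ and $\ell^{(i)}$ the labeling at the end of phase $i$, which is the reference labeling for our hypothetical improving set $(A,\ell')$. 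The labelings $\ell^{(j-1)}$ and $\ell^{(i)}$ differ only on nodes of the $R_k$-improving sets $A_k^{\mathrm{ph}}$ chosen during phases $j,\dots,i$, each with diameter $\le c_3\log n/\varepsilon$ and each lying within some single cluster of its own phase.

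I would then combine $(A,\ell')$ with all such $R_k$-improving sets $A_k^{\mathrm{ph}}$ that touch the region $U$ into a single improving set $(A^\star, \ell^\star)$ with respect to $\ell^{(j-1)}$: because the later labeling $\ell^{(i)}$ together with the local move $\ell'$ yields total improvement $\imp(A,\ell^{(i)},\ell')$ over $\ell^{(i)}$, and each intermediate relabeling contributed improvement at least $R_k|A_k^{\mathrm{ph}}| \ge R_j |A_k^{\mathrm{ph}}|$ over its predecessor, the telescoped improvement of $A^\star$ over $\ell^{(j-1)}$ is at least $R_j$ times its size. Applying \cref{lem:imprBalls} lets me replace the (possibly spread-out) combined set by a ball-shaped improving set of diameter at most $c_3 \log n/\varepsilon$ contained in $U$; \cref{lem:imprChain} controls how the chain of improvements across phases $j,\dots,i$ aggregates to produce an overall IR at least $R_j$ (the slack $R_i-R_j = (i-j)\lambda/(20c_1\log n)$ is absorbed by the technical constants, which is why the parameters $\varepsilon$, $c_2$, $c_3$ were chosen as they were). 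The resulting $(A^\star,\ell^\star)$ would be a valid extension of phase-$j$'s maximal sequence in $C^{(j)}$, contradicting its maximality.

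\textbf{Main obstacle.} Part~1 is essentially bookkeeping on distances. The hard part is Part~2: keeping track of exactly which relabelings from phases $j,\dots,i$ influence the potential inside $U$, constructing the right combined set, and verifying that after aggregating through \cref{lem:imprBalls,lem:imprChain} one still has IR $\ge R_j$ and diameter $\le c_3\log n/\varepsilon$. The delicate point is that the gap $R_i - R_j$ must be small enough (relative to $\varepsilon$ and to the constants from the two technical lemmas) for the chain-combination to not erode the IR below $R_j$; this is exactly the reason for the choice $\varepsilon = \lambda/(100 c_1 \log n)$ in Line~\ref{algo:line:epsilon}.
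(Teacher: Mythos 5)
Your proposal matches the paper's proof in all essentials: the $\NN_{t_1}[B_i]$ part follows from maximality of the leader's sequence in the current phase, and the $\NN_{t_2}[B_j]$ part is obtained by concatenating the per-cluster improving sequences of phases $j+1,\dots,i$ together with the new improving set into one long sequence and invoking \cref{lem:imprChain,lem:imprBalls} (packaged in the paper as \cref{lem:what-we-actually-use}) to localize a small-diameter improving set inside the phase-$j$ cluster, contradicting maximality there. The only quibble is your closing remark: the relevant condition is that the per-phase increment $R_{j+1}-R_j=\lambda/(20c_1\log n)$ exceeds the $2\varepsilon$ loss incurred by the chain lemma (so the extracted set still has ratio above $R_j$), not that the total gap $R_i-R_j$ be small.
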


The proof of \cref{lem:improving-sets-close-to-borders} relies on the following two lemmas.
The first one states that for any minimal improving set \(A\) of some improving ratio \(\beta\), for any node \(v \in A\), we can find, within some radius-\(O(\maxDeg^r \Lambda \log n / \varepsilon)\) neighborhood of \(v\), a minimal improving set \(A' \subseteq A\) with improving ratio at least \(\beta - \varepsilon\).
The proof of \cref{lem:imprBalls} is involved, and we defer it to \cref{sec:technical-claims}.

\begin{lemma}\label{lem:imprBalls}
	Let $\ell$ be a labeling of $G = (V,E)$ and let $(S, \ell_S)$ be a $\beta$-minimal improving set, for some $\beta > 0$. 
	Recall that \(\lambda\) is the minimum improvement associated to \((\problem, \pot)\).
	Then, for every $0 < \varepsilon < \min{\{\beta,\lambda\}}$ and any $v \in S$ there exists a large enough constant \(c_3\), a $k \le c_3 \maxDeg^r \Lambda \log n / \varepsilon$, and a subset $A \subseteq S \cap \neighborhood_k[v]$, such that $\impRatio(A, \ell, \ell_{A}^\star) \ge \beta - \varepsilon$.
	Furthermore, \((A,\ell_A^\star)\) is minimal.
\end{lemma}

The second lemma states that in a sequence of \(\beta\)-improving sets with small diameter, before any relabeling, around each node of the sequence there is a minimal improving set $A$ with improving ratio slightly less than \(\beta\). 
Importantly, this $A$ is an improving set w.r.t. the original labeling, before doing any relabeling of the sequence.
If, after the sequence, we have an improving set $A$ in some area which we previously already processed in phase $j$, then this lemma implies that we would have already seen a minimal improving set $A'$ in this same area in phase $j$. 
Because $A'$ will have a slightly worse improving ratio than $A$, we have to increase the improving ratio $R$ between every phase to make this argument useful.
The proof of \cref{lem:what-we-actually-use} is also quite involved, and we defer it to \cref{sec:technical-claims}.

\begin{lemma}\label{lem:what-we-actually-use}
	Let \((A_i, \ell_{i})_{i = 1}^h\) be a sequence of \(\beta\)-improving sets of diameter at most \(c_3 \maxDeg^r \Lambda \log n / \varepsilon\), for some \(\beta \ge \lambda\) and \(0 < \varepsilon < \lambda\), where \(c_3\) is the constant given by \cref{lem:imprBalls}, and \(\lambda\) is the minimum improvement associated to \((\problem, \pot)\).
	Then, for each \(i \in [h]\) and each \(v \in A_i \), there exists a large enough constant \(c_2\) and an improving set \((A,\ell_A)\) of diameter at most \(c_3 \maxDeg^r \Lambda \log n / \varepsilon\), such that \(\impRatio(A, \ell, \ell_A) \ge \beta - 2\varepsilon\), that is fully contained in the set \(\neighborhood_{c_2 \Delta^{2r} \Lambda^2 \log^2 (\Lambda n / \lambda)/\varepsilon^2}[v] \cap (\cup_{j \in [h]} A_j)\).
	Furthermore, \((A,\ell_A)\) is minimal.
\end{lemma}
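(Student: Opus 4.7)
My plan is to combine a telescoping observation about the sequence of improving sets with the ball-growing technique of Lemma~\ref{lem:imprBalls}. For any index set $J \subseteq [h]$, let $U_J = \bigcup_{j \in J} A_j$ and let $\tilde\ell_J$ be the labeling obtained from $\ell$ by applying the relabelings $\ell_j$ for $j \in J$ in the order induced by the sequence. The key telescoping observation is
\[
\imp(U_J, \ell, \tilde\ell_J) \;=\; \sum_{j \in J}\imp(A_j,\ell_{j-1},\ell_j) \;\ge\; \beta\sum_{j \in J}|A_j| \;\ge\; \beta\,|U_J|,
\]
\emph{provided} that every $A_{j'}$ with $j' \notin J$ lies at graph distance more than $2r$ from $U_J$, so that the relabelings outside $J$ do not affect the potentials measured inside $\NN_r[U_J]$. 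Under this separation, $(U_J, \tilde\ell_J)$ is an improving set w.r.t.\ $\ell$ with improving ratio at least $\beta$.

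To localize this construction near $v$, I would build $J$ by a bounded-depth BFS in the ``proximity graph'' on $[h]$ whose edges connect pairs $A_j, A_{j'}$ at graph distance at most $2r$: starting from $\{i\}$ (which contains $v$), grow for $K = \Theta(\log n/\varepsilon)$ levels. Because each $A_j$ has diameter at most $c_3 \log n/\varepsilon$ and each BFS edge traverses graph distance at most $2r$, the resulting union $U$ sits inside $\NN_{c_2 \log^2 n/\varepsilon^2}[v]$ for $c_2$ chosen sufficiently large. If the BFS stabilizes within $K$ levels, then $U$ is the entire connected component of $i$ in the proximity graph, the telescoping hypothesis applies, and $\impRatio(U, \ell, \tilde\ell_U) \ge \beta$. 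Otherwise, I would stop the BFS at a carefully chosen cutoff level $k^\star \le K$: the total potential ``leakage'' summed across the $K$ shells is bounded, so by an averaging (pigeonhole) argument some level $k^\star$ has small enough boundary interaction to degrade the improving ratio by at most $\varepsilon$, leaving an improving set of ratio at least $\beta - \varepsilon$ w.r.t.\ $\ell$.

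Finally, I would extract a minimal improving subset $U^\star$ of ratio at least $\beta - \varepsilon$ via the iterative refinement noted after \cref{def:algorithm:minimal-improving-subgraph}, while preserving a node close to $v$, and apply Lemma~\ref{lem:imprBalls} with parameter $\varepsilon$ at this node to obtain the desired $A$ of diameter at most $c_3 \log n/\varepsilon$ and improving ratio at least $(\beta-\varepsilon) - \varepsilon = \beta - 2\varepsilon$, contained in $\NN_{c_2\log^2 n/\varepsilon^2}[v] \cap \bigcup_j A_j$. The main obstacle will be the truncation step: showing that bounded-depth BFS combined with boundary accounting yields a set whose improving ratio degrades by only $\varepsilon$ requires a careful pigeonhole across the $K$ shells combined with the diameter bound on the $A_j$'s, and one must also ensure that the iterative refinement does not strip away all nodes within the allowed $c_2\log^2 n/\varepsilon^2$ radius of $v$ that are needed to invoke Lemma~\ref{lem:imprBalls} within the target neighborhood.
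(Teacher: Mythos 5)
Your proposal follows essentially the same route as the paper: its Lemma~\cref{lem:imprChain} performs exactly your BFS/ball-growing in the auxiliary graph on the $A_i$'s with $2r$-adjacency (weighted by $|A_i|$), uses the same telescoping argument for the interior sets while bounding the damage from the truncated boundary shell, and the proof of \cref{lem:what-we-actually-use} then passes to a minimal subset and invokes \cref{lem:imprBalls}, just as you do. The one caution is that your ``averaging/pigeonhole'' over the $K$ shells must be the multiplicative ball-growing version (finding a shell whose weight is at most an $\varepsilon/K$ fraction of the current ball's weight, via an exponential-growth contradiction), since a plain average of the total leakage over $K$ shells need not be small relative to the ball containing it and hence would not control the improving ratio.
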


We first proceed with the proof of \cref{lem:improving-sets-close-to-borders} and then with that of \cref{thm:algorithm:lop}.

\begin{proof}[Proof of \cref{lem:improving-sets-close-to-borders}]
	For any phase \(i > 0\), let \(\CC^{(i)}_{1}, \ldots, \CC^{(i)}_{h_i}\) be the clusters determined by MPX at phase \(i\) (Line~\ref{algo:line:mpx}), let \(\ell_i\) be the labeling of \(G\) at the end of phase \(i\) (Lines~\ref{algo:line:update-labeling-leader} and \ref{algo:line:update-labeling-not-leader}), and let \(R_i\) be the improving ratio initialized at the end of phase \(i - 1\) (Line~\ref{algo:line:update-IR}), with \(R_1 = R\) as in Line~\ref{algo:line:initial-IR}.
	We proceed by induction on the phase \(i\).

	If \(i = 1\), \(R_1 = R\) as initialized at Line~\ref{algo:line:initial-IR}.
		A minimal improving set \((A, \ell_A)\) of diameter at most \(c_3 \maxDeg^r \Lambda \log n / \varepsilon\) with \(\impRatio(A, \ell_1, \ell_A) \ge R_1\) must be such that there is no \(s \in [h_1]\) such that \(\neighborhood_{2r+1}[A] \subseteq C^{(1)}_s\), otherwise we are breaking maximality in Line~\ref{algo:line:choose-maximal-sequence}. 
	Hence, \(\dist(A, B_1) \le 2r\) and we have that \(A \subseteq \neighborhood_{c_3 \maxDeg^r \Lambda \log n / \varepsilon + 2r}[B_1]\), as the diameter of \(A\) is at most \(c_3 \maxDeg^r \Lambda \log n/\varepsilon\).
	Now, assume \(i > 1\) and the statement to be true for all phases \(j = 1, \ldots, i-1\).
	Let \((A,\ell_A)\) be a minimal improving set of diameter at most \(c_3 \maxDeg^r \Lambda \log n / \varepsilon\) such that \(\impRatio(A, \ell_{i}, \ell_A) \ge R_i\).
	For the same reason as in the case \(i = 1\), it holds that \(A \subseteq \neighborhood_{c_3 \maxDeg^r \Lambda \log n / \varepsilon + 2r}[B_i]\).
	Suppose, by contradiction, that \(A\) is not contained in 
	\[
		\bigcap_{j = 1}^{i-1} \neighborhood_{t_2}[ B_j].
	\]
	This implies that there is a \(j^\star \le i-1\) such that \(A\) is not contained in \(\neighborhood_{t_2}[ B_{j^\star}]\).
	Note that the diameter of \(A\) is at most \(c_3 \maxDeg^r \Lambda \log n / \varepsilon\) and at least one node of $A$ is not in  \(\neighborhood_{t_2}[ B_{j^\star}]\).
	As a result, the set \(\neighborhood_{c_2 \maxDeg^{2r} \Lambda^2 \log^2 (\Lambda n/ \lambda) / \varepsilon^2 + 2r + 1}[A]\) is fully contained in some cluster \(C^{(j^\star)}_s\), for some \(s \in [h_{j^\star}]\).

	We can now describe the family of all improving sets that are actually used by our algorithm after phase $j^\star$ (Line~\ref{algo:line:choose-maximal-sequence})  as a single improving sequence.
	We will define an ordering of all minimal improving sets that our algorithm has flipped.
	In order to do that, we first define an ordering of all clusters.
	We say that \(C^{(m_1)}_{s_1} < C^{(m_2)}_{s_2}\) if and only if:
	\begin{itemize}[noitemsep]
		\item \(m_1 < m_2\), or
		\item \(m_1 = m_2\) and \(s_1 < s_2\).
	\end{itemize}
	In each cluster \(C^{(m)}_s\) we also consider the natural ordering of the sets composing the sequence \((S^{(m)}_{s,t},\ell_{S^{(m)}_{s,t}})_{t \in [k_{s}^{(m)}]}\) of \(R_{m}\)-improving sets that were chosen by the leader of \(C^{(m)}_s\) (Line~\ref{algo:line:choose-maximal-sequence}), where \(k_s^{(m)}\) is the number of minimal improving sets that have been flipped inside \(C^{(m)}_s\) at phase \(m\).
	We can create a sequence of \(R_{j^\star + 1}\)-improving sets by concatenating all these maximal sequences 
	\[
		\left(\left((S^{(m)}_{s,t},\ell_{S^{(m)}_{s,t}})_{t \in [k_{s}^{(m)}]}\right)_{s \in [h_m]}\right)_{j^\star + 1 \le m \le i}
	\] 
	and at the end we put the improving set \((A, \ell_A)\).
	Observe that \((A, \ell_A)\) is such that \(\impRatio(A, \ell_{i}, \ell_A) \ge R_i \ge R_{j^\star + 1}\) and hence the whole sequence is an \(R_{j^\star + 1}\)-improving sequence.
	Let us rename this sequence as \((\hat{S}_t)_{t \in [t_i]}\), where \(t_i\) is the overall length of the defined sequence.

	Now, for each \(v \in A\), by \cref{lem:what-we-actually-use} we have that there exists a minimal improving set \((A', \ell_{A'})\) that is fully contained in the set 
	\[
		\neighborhood_{c_2 \maxDeg^{2r} \Lambda^2 \log^2 (\Lambda n/ \lambda) / \varepsilon^2}[v] \cap \left(\cup_{t \in [t_i]} \hat{S}_t\right),
	\]
	and such that \(\impRatio(A', \ell_{j^\star}, \ell_{A'}) \ge R_{j^\star+1} - 2\varepsilon\).
	By \cref{lem:imprBalls}, for each \(v \in A'\), there exists a minimal improving set \((A'', \ell_{A''})\) such that \(A'' \subseteq A' \cap \neighborhood_{c_3 \maxDeg^r \Lambda \log n / \varepsilon}[v]\) and \(\impRatio(A'', \ell_{j^\star}, \ell_{A''}) \ge R_{j^\star+1} - 3\varepsilon \ge  R_{j^\star}\).
	The fact that \[A'' \subseteq \neighborhood_{c_2 \maxDeg^{2r} \Lambda^2 \log^2 (\Lambda n/ \lambda) / \varepsilon^2}[v]\] implies that \(A'' \subseteq \neighborhood_{c_2 \maxDeg^{2r} \Lambda^2 \log^2 (\Lambda n/ \lambda) / \varepsilon^2}[A]\).
	Since \(\neighborhood_{c_2 \maxDeg^{2r} \Lambda^2 \log^2 (\Lambda n/ \lambda) / \varepsilon^2 + 2r + 1}[A]\) is fully contained in some cluster \(C^{(j^\star)}_s\), so it is \(\neighborhood_{2r+1}[A'']\).
	Hence, the sequence \((S^{(j^\star)}_{s,t},\ell_{S^{(j^\star)}_{s,t}})_{t \in [k_{s}^{(j^\star)}]}\) was not maximal, reaching a contradiction.
\end{proof}

With \cref{lem:improving-sets-close-to-borders,lem:phase-terminates} in hand, we can finally prove \cref{thm:algorithm:lop}.

\begin{proof}[Proof of \cref{thm:algorithm:lop}]
	By \cref{lem:phase-terminates}, we have that the algorithm terminates.
	As discussed in \cref{sec:algorithm:description}, the running time of the algorithm is \(O(\maxDeg^{2r} (\Lambda/\lambda)^2 \log^6 (\Lambda n / \lambda))\).
	We just have to argue about correctness.

	Let \(i = c_1 \log n\) be the last phase, and let \(\ell_i\) be the labeling obtained at the end of phase \(i\) (Lines~\ref{algo:line:update-labeling-leader} and \ref{algo:line:update-labeling-not-leader}).
	By contradiction, suppose that there is an error in the graph, that is, a centered graph \((H, v_H)\) of radius \(2r\) that does not belong to \(\CC\).
	This means that $v_H$ can change the output labels of itself and its incident edges to get an improvement of at least $\lambda$. 
	Let $\ell_v$ be that relabeling, so that we can get the improving set $(\{v_H\}, \ell_v)$, with $\IR(\{v_H\},\ell_i,\ell_v) \ge \lambda$.
	Let us now take the minimal version $(\{v_H\}, \ell_v^\star)$ of $(\{v_H\}, \ell_v)$. 

	Since \(R_{c_1 \log n} \le \lambda\) by definition, \cref{lem:improving-sets-close-to-borders} implies that \(\{v_H\}\) is fully contained in 
	\[
		\neighborhood_{t_1}[B_i] \cap \left( \bigcap_{j = 1}^{i-1} \neighborhood_{t_2}[ B_j]\right).
	\]

	Notice that, by \cref{lemma:preliminaries:mpx}, at each phase \(j \le i\), for every node \(v\) there is probability at least \(1/2\) that \(\neighborhood_{t_2 + 1}[v]\) is fully contained within some cluster of the \(j\)-th run of MPX.
	Note that \((1/2)^{c_1\log n} = 1/n^{c_1}\).
	Hence, with probability \(1 - 1/n^{c_1}\), \(\{v_H\}\) is not contained in 
	\[
		\neighborhood_{t_1}[B_i] \cap \left( \bigcap_{j = 1}^{i-1} \neighborhood_{t_2}[ B_j]\right),
	\] 
	reaching a contradiction with \cref{lem:improving-sets-close-to-borders}.

	Now, by the union bound, with probability \(1 - 1/n^{c_1 - 1}\), there is no node that is a center of a centered graph of radius \(2r\) that is invalid according to \(\CC\), yielding the thesis.
\end{proof}

\subsection{Technical claims and proofs}\label{sec:technical-claims}

Here we prove \cref{lem:imprBalls} and \cref{lem:what-we-actually-use}.
The first one basically says that, for any minimal improving set \(A\) of some improving ratio \(\beta\), for any node \(v \in A\), we can find, within some radius-\(O(\maxDeg^r \Lambda \log n / \varepsilon)\) neighborhood of \(v\), a minimal improving set \(A' \subseteq A\) with improving ratio at least \(\beta - \varepsilon\).
In order to prove it, we start with a preliminary lemma.

\begin{lemma}\label{lem:subsetImpr}
	Consider any \(S \subseteq V\), and let \(B \subseteq S\).
	If \(A = S \setminus B\), then
	\[
	\imp(S,\ell,\ell^\star_S) \le \imp(A, \ell, \ell^\star_{A}) + \imp(B, \ell, \ell^\star_{B}) + 2 \cdot \abs{\neighborhood_r[A] \cap \neighborhood_r[B]} \cdot \Lambda.
	\]
	Recall that $\Lambda$ is the maximum improvement associated to \((\problem, \pot)\), and the labeling $\ell^\star_M$ is the best possible relabeling of \(M\).
\end{lemma}
\begin{proof}
	We have that 
	\begin{align*}
		& \imp(S,\ell,\ell^\star_S) = \graphPot(G,\ell) - \graphPot(G,\ell^\star_S) \\
		= \ & \sum_{v\in V} \pot_{\ell}(G_r(v), v) - \pot_{\ell^\star_S}(G_r(v),v) \\
		= \ & \sum_{v\in \neighborhood_r[S]} \pot_{\ell}(G_r(v), v) - \pot_{\ell^\star_S}(G_r(v),v) + \sum_{v \in V(G) \setminus  \neighborhood_r[S]} \pot_{\ell}(G_r(v), v) - \pot_{\ell^\star_S}(G_r(v),v) \\
		= \ & \sum_{v\in  \neighborhood_r[S]} \pot_{\ell}(G_r(v), v) - \pot_{\ell^\star_S}(G_r(v),v),
	\end{align*}
	where the last equality follows from the fact that the local potential of a graph centered at node $v$ of radius \(r\) can only change if any label in the centered graph is modified, and hence $\pot_{\ell}(G_r(v), v) = \pot_{\ell^\star_S}(G_r(v),v)$ for all nodes at distance at least $r+1$ from $S$.

	Let $\ell_A$ be the labeling defined as follows: 
	On the nodes in \(A\) and their incident half-edges, it is equal to \(\ell^\star_S\), and on the rest of the graph it is equal to \(\ell\).
	We similarly define $\ell_{B}$ (we just replace \(B\) with \(A\) in the previous definition).

	We want to analyze the improvements of $(B,\ell_B)$ and $(A,\ell_A)$ compared to that of $(S,\ell^\star_S)$. 
	Consider any node $v$ such that its distance from $A$ (resp.\ \(B\)) is at least \(r+1\).
	The two labelings $\ell_B$ (resp.\ \(\ell_A\)) and $\ell^\star_S$ will yield  the same evaluation of the potential on the centered graph \((G_r(v),v)\). 

	Let 
	\(
		C = \neighborhood_r[A] \cap \neighborhood_r[B]
	\). 
	The following holds:
	\begin{align*}
		& \imp(B,\ell, \ell_B) \\
		= \ & \sum_{v \in \neighborhood_r[B]} \left[\pot_{\ell}(G_r(v),v) - \pot_{\ell_B}(G_r(v),v)\right] \\
		= \ &  \sum_{v \in \neighborhood_r[B] \setminus C} \left[\pot_{\ell}(G_r(v),v) - \pot_{\ell_B}(G_r(v),v)\right] + \sum_{v\in C \cap \neighborhood_r[B]} \left[\pot_{\ell}(G_r(v),v) - \pot_{\ell_B}(G_r(v),v)\right]\\
		= \ & \sum_{v \in \neighborhood_r[B] \setminus C} \left[\pot_{\ell}(G_r(v),v) - \pot_{\ell^\star_S}(G_r(v),v) \right] + \sum_{v\in C \cap \neighborhood_r[B]} \left[\pot_{\ell}(G_r(v),v) - \pot_{\ell_B}(G_r(v),v)\right].
	\end{align*}
	Notice that we can get the exact same equality when replacing $B$ with $A$:
	\begin{align*}
		& \imp(A,\ell, \ell_A)\\
		= \ & \sum_{v\in \neighborhood_r[A] \setminus C} \left[\pot_{\ell}(G_r(v),v) - \pot_{\ell^\star_S}(G_r(v),v) \right] + \sum_{v\in C \cap \neighborhood_r[A]} \left[\pot_{\ell}(G_r(v),v) - \pot_{\ell_{A}}(G_r(v),v)\right].
	\end{align*}
	Bringing those together we recover the improvement for all graphs centered in nodes of $\neighborhood_r[S] \setminus C$ of radius \(r\) when applying $\ell^\star_S$.
	\begin{align*}
		& \imp(A, \ell, \ell^\star_{A}) + \imp(B, \ell, \ell^\star_B) \\
		\ge \ & \imp(A,\ell,\ell_A) + \imp(B, \ell, \ell_B) \\
		= \ & \sum_{v\in \neighborhood_r[A] \setminus C} \left[\pot_{\ell}(G_r(v),v) - \pot_{\ell^\star_S}(G_r(v),v)\right] + \sum_{v\in C \cap \neighborhood_r[A]} \left[\pot_{\ell}(G_r(v),v) - \pot_{\ell_{A}}(G_r(v),v)\right]\\
		+ & \sum_{v \in \neighborhood_r[B] \setminus C} \left[\pot_{\ell}(G_r(v),v) - \pot_{\ell^\star_S}(G_r(v),v)\right] + \sum_{v\in C \cap \neighborhood_r[B]} \left[\pot_{\ell}(G_r(v),v) - \pot_{\ell_B}(G_r(v),v)\right] \\ 
		= \ & \sum_{v \in \neighborhood_r[S] \setminus C} \left[\pot_{\ell}(G_r(v),v) - \pot_{\ell^\star_S}(G_r(v),v)\right] \\
		- & \sum_{v \in C} \left[ \pot_{\ell_{A}}(G_r(v),v) + \pot_{\ell_B}(G_r(v),v) 
		- 2\pot_{\ell}(G_r(v),v) \right] \\ 
		= \ & \imp(S,\ell,\ell^\star_S) - \sum_{v\in C}[\pot_{\ell}(G_r(v), v) - \pot_{\ell^\star_S}(G_r(v),v)] \\
		- & \sum_{v \in C} \left[ \pot_{\ell_{A}}(G_r(v),v) + \pot_{\ell_B}(G_r(v),v) 
		- 2\pot_{\ell}(G_r(v),v) \right]  \\ 
		= \ & \imp(S,\ell,\ell^\star_S) - \sum_{v \in C} \left[\pot_{\ell_{A}}(G_r(v),v) + \pot_{\ell_B}(G_r(v),v) - \pot_{\ell^\star_S}(G_r(v), v) - \pot_{\ell}(G_r(v),v)\right] \\
		\ge \ &  \imp(S,\ell,\ell^\star_S) -2 \sum_{v \in C} \Lambda = \imp(S,\ell,\ell^\star_S) - 2 \abs{C}\Lambda .
	\end{align*}  
	The claim follows by the definition of \(C\) and by rearranging terms.
\end{proof}

With \cref{lem:subsetImpr} in hand, we can now prove \Cref{lem:imprBalls}.

\begin{proof}[Proof of \cref{lem:imprBalls}]
For any subset $A \subseteq S$ we again define $B = S \setminus A$.
Let \(x_S = \impRatio(S, \ell, \ell_S^\star)\). 
Starting with the inequality of \cref{lem:subsetImpr} and using that $(S, \ell^\star_S)$ is minimal, we have
\begin{align*}
	\imp(A,\ell, \ell^\star_A) &\ge \imp(S,\ell, \ell^\star_S) - \imp(B, \ell, \ell^\star_{B}) - 2  \abs{\neighborhood_r[A] \cap \neighborhood_r[B]} \cdot \Lambda\\
			&\ge x_S \abs{S} - x_S \abs{B} - 2 \abs{\neighborhood_r[A] \cap \neighborhood_r[B]}\cdot \Lambda\\
			&= x_S (\abs{S} - \abs{B}) - 2 \abs{\neighborhood_r[A] \cap \neighborhood_r[B]}\cdot \Lambda\\
			&=x_S \abs{A} - 2 \abs{\neighborhood_r[A] \cap \neighborhood_r[B]}\cdot \Lambda .
\end{align*}
So if $\abs{\neighborhood_r[A] \cap \neighborhood_r[B]} \le \frac{\varepsilon}{2 \Lambda } \abs{A}$, we have that $(A, \ell_A^\star)$ has improving ratio at least $\beta -\varepsilon$ as desired. 
Then, we can simply take the minimal improving set \((A',\ell^\star_{A'})\) such that \(A' \subseteq A\) with improving ratio at least \(\beta - \varepsilon\), and, if \(\diam(A')\) is small enough, we are done.

For the sake of readability, we set $y = 2 \Lambda $.
Fix $v\in S$ and consider the following ``candidate sets'':
\( C_i = \neighborhood_i[v]\cap S\) for \(i \ge 0\). 
Let \(n_i = \abs{\neighborhood_r[C_i]}\), and let \(k\) be the smallest integer \(i \ge 2r\) such that \(n_i \le n_{i-2r}(1 + \varepsilon/(y (\maxDeg^{r}+1)))\).

We now prove that such \(k\) must exist and must be at most \(4r y (\maxDeg^{r}+1) \ln n / \varepsilon + 2r + 1\).
It holds that \(n_{k-1} > n_{k - 1 - 2r}(1+\varepsilon/(y (\maxDeg^{r}+1))) > \ldots > (1+\varepsilon/(y (\maxDeg^{r}+1)))^{\floor{{(k-1)}/{2r}}}\).
By contradiction, assume \(k -1 > 4r y (\maxDeg^{r}+1) \ln n / \varepsilon + 2r\).
By the known inequality \((1+x/y)^y \ge \exp[xy / (x+y)]\) for all \(x>0, y>0\), we have that 
\begin{align*}
	& \left(1+\frac{\varepsilon}{y (\maxDeg^{r}+1)}\right)^{\floor{(k-1)/2r}} \\
	= \ & \left(1+\frac{\varepsilon}{y (\maxDeg^{r}+1)}\cdot \frac{\floor{(k-1)/2r}}{\floor{(k-1)/2r}}\right)^{\floor{(k-1)/2r}} \\ 
	\ge \ & \exp\left[\frac{\frac{\varepsilon}{y (\maxDeg^{r}+1)} \cdot \floor{(k-1)/2r}^2}{\frac{\varepsilon}{y (\maxDeg^{r}+1)} \cdot \floor{(k-1)/2r} + \floor{(k-1)/2r}}  \right] \\
	= \ & \exp\left[\frac{\frac{\varepsilon}{y (\maxDeg^{r}+1)} \cdot \floor{(k-1)/2r}}{\frac{\varepsilon}{y (\maxDeg^{r}+1)} + 1}  \right] \\
	\ge \ & \exp\left[\frac{\varepsilon}{2y (\maxDeg^{r}+1)} \cdot \floor{(k-1)/2r}  \right],
\end{align*}
where the last inequality follows from the fact that \(\varepsilon \le y (\maxDeg^{r}+1)\) (as \(\varepsilon < \lambda\), and \(\lambda \le \Lambda\)).
Since \((k-1)/2r > 2y (\maxDeg^{r}+1) \ln n / \varepsilon + 1\), we get that 
\[
	\exp\left[\frac{\varepsilon}{2y (\maxDeg^{r}+1)} \cdot \floor{(k-1)/2r}  \right] \ge \exp[\ln n] = n,
\]
and, hence, \(n_{k-1} > n\), which is a contradiction.

If \(C_k = S\), we are done. 
Assume that \(C_k\) is strictly contained in \(S\) and let \(A = C_k\) and \(B = S \setminus A\).
We need to upper bound the size of \(\neighborhood_r[A] \cap \neighborhood_r[B]\).
We first show that \(\neighborhood_r[A] \cap \neighborhood_r[B] \subseteq \neighborhood_r[C_k] \setminus \neighborhood_r[C_{k-2r}]\), and then we proceed bounding from above the cardinality of \(\neighborhood_r[C_k] \setminus \neighborhood_r[C_{k-2r}]\) (which is, by definition, \(n_k - n_{k-2r}\)).
Let \(u \in \neighborhood_r[A] \cap \neighborhood_r[B]\).
Trivially, \(u \in \neighborhood_r[C_k] = \neighborhood_r[A]\).
Furthermore, since \(u \in \neighborhood_r[B] = \neighborhood_r[S \setminus C_k]\), there exists a node \(w \in S \setminus C_k\) such that \(\dist(u, w) \le r\).
Since \(w \notin C_k\) but \(w \in S\), we have that \(\dist(v, w) \ge k + 1\).
By the triangle inequality, we have that 
\[
	\dist(w,v) \le \dist(u,w) + \dist(u, v),
\]
which implies that
\[	
	\dist(u,v) \ge \dist(w,v) - \dist(u, w) \ge k + 1 - r.
\]
Since all nodes in  \(\neighborhood_r[C_{k-2r}]\) have distance at most \(k - r\) from \(v\), \(u \notin \neighborhood_r[C_{k-2r}]\).
Now, by definition of \(C_k\), we have that \(n_k \le n_{k - 2r}(1+\varepsilon/(y( \maxDeg^{r}+1)))\), which implies that \(n_k - n_{k-2r} \le \varepsilon n_{k - 2r}/(y( \maxDeg^{r}+1))  \le \varepsilon n_k/(y (\maxDeg^{r}+1)) \).
Also, recall that \(n_k = \abs{\neighborhood_r[C_k]} = \abs{\neighborhood_r[A]}\), which implies that \(n_k \le \abs{A} (\maxDeg^r+1)\).
It follows that \(n_k - n_{k-2r} \le \frac{\varepsilon}{y} \abs{A} = \frac{\varepsilon}{2\Lambda} \abs{A}\).
Note that \(A' \subseteq \neighborhood_k[v]\), and \(k \le 4r y (\maxDeg^{r}+1) \ln n / \varepsilon + 2r + 1\).
By taking \(c_3 = 100 r\) we have that the weak diameter of \(A'\) in \(G\) is at most \( c_3 \maxDeg^r \Lambda \log n / \varepsilon\).
Since by assumption \(r\) is a constant, the proof is complete.
\end{proof}

Now we want to prove \cref{lem:what-we-actually-use}.
The key ingredient for \cref{lem:what-we-actually-use} is the next lemma, which states the following: in a sequence of \(\beta\)-improving sets in \(G\) w.r.t.\ some labeling \(\ell\), for any node \(v\) in any of the improving sets, there exists an improving set \((A,\ell_A)\) in the radius-\(O(\maxDeg^{2r} \Lambda^2 \log^2 (\Lambda n/ \lambda) / \varepsilon^2)\) neighborhood of \(v\) (that is fully contained in the union of the improving sets) that has improving ratio at least \(\beta - \varepsilon\) w.r.t.\ \(\ell\).
Now, we can take the minimal improving set \((A',\ell_{A'}^\star)\) with \(A' \subseteq A\) and, by \cref{lem:imprBalls}, we can find a set \(A'' \subseteq A'\) such that \((A'', \ell^\star_{A''})\) is a minimal improving set with improving ratio at least \(\beta - 2\varepsilon\) w.r.t.\ \(\ell\) and diameter at most \(O(\maxDeg^r \Lambda \log n / \varepsilon)\).
Hence, whenever there is a sequence of \(\beta\)-improving sets, before any relabeling, any node in the sequence can locally find small diameter improving sets with improving ratio at least \(\beta - 2\varepsilon\) w.r.t.\ \(\ell\).

\begin{lemma}\label{lem:imprChain}
	Let $0<\varepsilon<\lambda$ be any value, where \(\lambda\) is the minimum improvement associated to \((\problem, \pot)\), and let \(\beta \ge \lambda\).
	Let $((A_i,\ell_i))_{1\leq i \leq h}$ be a sequence of $\beta$-improving sets w.r.t.\ \(\ell\) in \(G\), where each minimal improving set has diameter at most $c_3 \maxDeg^r \Lambda \log n / \varepsilon$ (\(c_3\) as in Line~\ref{algo:line:diameter}).
	Then for all $1 \leq j \leq h$, for all \(v \in A_j\) there exists a large enough constant \(c_2 > 0\) and some $k \le c_2 \Delta^{2r} \Lambda^2 \log^2 (\Lambda n / \lambda) / \varepsilon^2$, such that there exists a set $A \subseteq \neighborhood_k[v] \cap (\bigcup_{1 \leq i \leq h} A_i)$ with $\impRatio(A, \ell, \ell^*_{A}) \ge \beta - \varepsilon $.
\end{lemma}
Before proving \cref{lem:imprChain}, we remark that \(c_2\) (Line~\ref{algo:line:mpx} of the algorithm) is given in the statement of \cref{lem:imprChain}.
\begin{proof}
We will employ an argument that is similar to that of \Cref{lem:subsetImpr}, but this time we will work on an auxiliary graph. 
We define the auxiliary graph $H = (V_H,E_H)$ where \(V_H\) contains a node \(\sigma(A_i)\) for each \(A_i\), \(i \in [h]\), and $\{\sigma(A_i),\sigma(A_j)\} \in E_H$ if and only if \(\dist_G(A_i,A_j) \le 2r\).

The reason for defining the edges like this is the following: if the distance of \(A_i\) and \(A_j\) is at most \(2r\), there are centered graphs of radius \(r\) of which both the relabeling of \(A_i\) and \(A_j\) might affect the potential. 
Conversely, if $\sigma(A_i)$ and $\sigma(A_j)$ are not adjacent in $H$, then no node in \(G\) can be affected by relabelings of both $A_i$ and $A_j$.

We define a weight function $w \colon V_H \to \nats$ for the nodes in $H$ by
$w(x) = \abs{\sigma^{-1}(x)}$,
that is, to any element \(x = \sigma(A_i)\) (for some \(i \in [h]\)), we assign a weight that is equal to the cardinality of \(A_i\).
With an abuse of notation,  we define $w(B) = \sum_{x \in B}w(x)$ for every subset $B$ of $V_H$.

Observe that the improvement of the entire sequence of \(\beta\)-improving sets $((A_i, \ell_i))_{1\le i\le h}$ is at least $\beta \cdot w(V_H)$, due to the definition of sequence of \(\beta\)-improving sets.
Now since the maximum improvement in the potential of any labeling in $G$ is at most $\Lambda \cdot n$, the overall improvement of the sequence itself cannot be larger than $ \Lambda \cdot n$. 
This implies that $ w(V_H) \le \Lambda \cdot n / \beta$.

Fix any \(j \in [h]\), and consider an arbitrary \(v \in A_j\). 
Let \(z = \sigma(A_j) \in V_H\).
We define the ``candidate sets'' with regard to \(z\) as \(C_i = \neighborhood_i[z]\) for \(i \ge 0\).
Let \(k\) be the minimum integer \(i\)  such that \(\frac{\varepsilon}{K} w(C_i) > w(C_{i+1} \setminus C_i) = w(C_{i+1}) - w(C_i)\), for \(K = (\beta - \varepsilon) + (\maxDeg^r + 1)\Lambda \).
We claim that \(k \le \ceil{3K \log (\Lambda n / \beta) / \varepsilon}\).
By contradiction, suppose that \(k > \ceil{3K \log (\Lambda n / \beta) / \varepsilon}\).
Let \(j = \ceil{3K \log (\Lambda n / \beta) / \varepsilon}\).
Then, we have that \(w(C_{i+1}) - w(C_i) \ge \frac{\varepsilon}{K} w(C_i)\) for all \(i \le j\).
Then, 
\[
	w(C_j) \ge (1 + \varepsilon/K) w(C_{j-1}) \ge (1 + \varepsilon/K)^{3K \log (\Lambda n / \beta) / \varepsilon}.
\]
Again, by exploiting that \((1 + x/y)^y \ge \exp[xy/(x+y)]\) for all positive \(x,y\), and that \(\varepsilon < \lambda \le \Lambda\), we get that 
\[
	w(C_j) \ge \exp\left[\frac{3 K \log (\Lambda n / \beta)}{\varepsilon + K}\right] \ge (\Lambda n / \beta)^{3/2}.
\]
This is a contradiction since \(w(C_j) \le w(V_H) \le \Lambda n / \beta\).

Let 
\begin{align*}
	A & = \{i \in [h] \mid A_i = \sigma^{-1}(x) \text{ for some } x \in C_{k+1}\}, \text{ and} \\
	A' & = \{i \in [h] \mid A_i =\sigma^{-1}(x) \text{ for some } x \in C_{k}\}.
\end{align*}
Furthermore, let \(s = \max (A)\). 
We also define \(\hat{A} = \cup_{i \in A} A_i\), and \(\hat{A}' = \cup_{i \in A'} A_i\).
Denote by $\ell_A$ the labeling obtained by setting $\ell_A = \ell_{s}$ for all nodes in $\hat{A}$ and their incident half-edges, and $\ell_A = \ell$ for all other nodes and half-edges.

Let \(i_1, \ldots, i_t\) denote the elements of \(A\), where \(i_1 < \ldots < i_t = s\).
We define $f_{i_0} = \ell$ and then a new  relabeling $f_{i_j}$ as the labeling that outputs $\ell_{i_{j}}$ on all the nodes in $A_{i_j}$ and their incident half-edges, but outputs $f_{i_{j-1}}$ everywhere else.
Now we can obtain $\ell_A$ by sequentially applying the relabelings $f_{i_1}, f_{i_2}, \ldots, f_{i_t} = \ell_A$.

Importantly, for each \(j \in [t]\), by the definition of \(f_{i_j}\) and \(E_H\), the labels of all nodes in \(\neighborhood_{2r}[\hat{A}']\) and their incident half-edges given by \(f_{i_j}\) are exactly those given by \(\ell_{i_j}\): in fact, if the labeling of a node \(u \in \neighborhood_{2r}[\hat{A}']\) (or its incident half-edges) is modified by \(\ell_{i_j}\), then \(u \in A_{i_j}\) and \(\dist_G(A_{i_j},\hat{A}') \le 2r\), implying that \(i_j \in A\).

As a result, if $i_j \in A'$, then
\[
	\imp(A_{i_j}, f_{i_{j-1}}, f_{i_j}) = \imp(A_{i_j}, \ell_{i_{j -1}}, \ell_{i_j}) \ge \imp(A_{i_j}, \ell_{i_{j} - 1}, \ell_{i_j}) \ge \beta \cdot \abs{A_{i_j}}.
\]  
On the other hand, if $i_j \in A \setminus A'$, any relabeling of \(A_{i_j}\) can worsen the potential by at most \(\Lambda (\maxDeg^r+1) \abs{A_{i_j}}\), since \(\abs{\neighborhood_r[A_{i_j}]} \le (\maxDeg^r+1) \abs{A_{i_j}}\). 
We obtain the following:
\begin{align*}
	\imp(\hat{A},\ell, \ell_A) = \ & \sum_{j=1}^{t} \imp(A_{i_j}, f_{i_{j-1}}, f_{i_j}) \\
	= \ & \sum_{i_j \in A'} \imp(A_{i_j}, f_{i_{j-1}}, f_{i_j}) + \sum_{i_j \in A\setminus A'} \imp(A_{i_j}, f_{i_{j-1}}, f_{i_j})\\
	\ge \ & \sum_{i_j \in A'} \imp(A_{i_j}, \ell_{i_j -1}, \ell_{i_j}) - \sum_{i_j \in A\setminus A'} (\maxDeg^r+1) \cdot \Lambda \cdot  \abs{A_{i_j}}\\
	\ge \ & \beta \cdot w(C_{k}) - (\maxDeg^r+1) \cdot \Lambda \cdot (w(C_{k+1}) - w(C_{k}))\\
	\ge \ & \beta\cdot  w(C_{k}) -   (\maxDeg^{r}+1) \cdot \Lambda \cdot \frac{\varepsilon}{K} w(C_{k})  \\
	= \ & w(C_{k}) \left(\beta - \frac{\varepsilon (\maxDeg^r+1) \Lambda}{K}\right).
\end{align*}
We now show that, by the definition of \(K\), it holds that
\[
	w(C_{k}) \left(\beta - \frac{\varepsilon (\maxDeg^r+1) \Lambda}{K}\right) \ge w(C_{k + 1}) \left(\beta - \varepsilon\right).
\]
Recall that \(w(C_{k + 1}) \le w(C_{k})(1 + \varepsilon/K)\).
Hence,
\begin{align*}
	w(C_{k}) \left(\beta - \frac{\varepsilon (\maxDeg^r+1) \Lambda}{K}\right) & \ge w(C_{k+1}) \cdot \frac{K}{K+\varepsilon} \cdot \left(\beta - \frac{\varepsilon (\maxDeg^r+1) \Lambda}{K}\right) \\
	& = w(C_{k+1}) \cdot \frac{\beta K - \varepsilon(K - (\beta - \varepsilon))}{K+\varepsilon} \\
	& = w(C_{k+1}) \cdot \frac{K(\beta - \varepsilon) + \varepsilon(\beta - \varepsilon)}{K+\varepsilon} \\
	& = w(C_{k+1}) \cdot \left(\beta - \varepsilon\right) \cdot \frac{K + \varepsilon}{K + \varepsilon} \\
	& = w(C_{k+1}) \cdot \left(\beta - \varepsilon\right).
\end{align*}

Note that \(C_{k+1} = \neighborhood_{k+1}[z]\).
Thus, the weak diameter of \(\hat{A}\) in \(G\) is at most \( (2k+1)\cdot(2r + c_3 \maxDeg^r \Lambda \log n / \varepsilon)\).
In particular, since \(k \le \ceil{3K \log (\Lambda n / \beta) / \varepsilon}\) and \(K \le \beta + (\maxDeg^r+1)\Lambda\), we have that the weak diameter of \(\hat{A}\) in \(G\) is at most \(100 ((\beta + \maxDeg^r \Lambda) \cdot \log (\Lambda n / \beta) / \varepsilon + 1)(r + c_3 \maxDeg^r \Lambda\log n / \varepsilon)\).
Since by assumption \(r\) is constant and \(\lambda \le \beta \le \Lambda\), we have that 
\[
	\hat{A} \subseteq \neighborhood_{c_2 \Delta^{2r} \Lambda^2 \log^2 (\Lambda n / \lambda) / \varepsilon^2}[v],
\]
for some large enough constant \(c_2 > 0\).
By noticing that \(\ell^\star_{\hat{A}}\) can only increase the improving ratio w.r.t.\ \(\ell_{{A}}\), we conclude the proof of \Cref{lem:imprChain}.
\end{proof}

We conclude with the proof of \cref{lem:what-we-actually-use}.

\begin{proof}[Proof of \cref{lem:what-we-actually-use}]
	We first apply \cref{lem:imprChain}, and we obtain an improving set \((A, \ell_A)\)   with improving ratio at least \(\beta - \varepsilon\) that is fully contained in \(\neighborhood_{c_2 \Delta^{2r} \Lambda^2 \log^2 (\Lambda n / \lambda) /\varepsilon^2}[v] \cap (\cup_{i \in [h]} A_i)\).
	Then, we take the minimal version of it, that is, \((A',\ell^\star_{A'})\), with \(A' \subseteq A\), and we apply \cref{lem:imprBalls}, which yields the thesis.
\end{proof}

\section{A lower bound for locally optimal cut}\label{sec:lb}

In this section we prove a lower bound of  $\Omega(\min\{\maxDeg,\sqrt{n}\})$  rounds for any algorithm that finds a locally optimal cut. 
Naturally this implies that any algorithm for generic local potential problems must also take at least $\Omega(\min\{\maxDeg,\sqrt{n}\})$ rounds.

\begin{definition}[Locally optimal cut]\label{def:lb:loc}
    Let $G = (V, E)$ be a graph.  
    A \emph{locally optimal cut} is an assignment of labels 
    $\ell : V \to \{-1, 1\}$ such that, for every node $v \in V$,  
    the number of neighbors of $v$ with a label different from $\ell(v)$  
    is at least as large as the number of neighbors that share $\ell(v)$, i.e.,
    \[
        \abs{\{\, u \in N(v) : \ell(u) \neq \ell(v) \,\}}
        \;\geq\;
        \abs{\{\, u \in N(v) : \ell(u) = \ell(v) \,\}}.
    \]
\end{definition}

We prove the following lower bound.

\begin{theorem}\label{thm:locLB}
    For any large enough \(n \in \nats\) and \(\maxDeg < n\), let \(\varepsilon = 1 - (\maxDeg / n)\).
    Any algorithm that finds a locally optimal cut in the randomized LOCAL model with probability $p \ge \frac{1}{2}$ on graphs of \(n\) nodes and maximum degree \(\maxDeg\) has locality $\Omega(\min\{\maxDeg,\sqrt{\varepsilon n}\})$.
    In particular, for any \(\maxDeg \le c n\) for some constant \(c < 1\), the locality is \(\Omega(\min\{\maxDeg,\sqrt{n}\})\).
\end{theorem} 

The proof idea is the following: We construct a graph that contains a component of diameter \(\Theta(\min\{\maxDeg, \sqrt{\varepsilon n}\})\). 
Such component contains two nodes \(u\) and \(v\) at distance \(\Theta(\min\{\maxDeg, \sqrt{\varepsilon n}\})\) satisfying the following property: in any valid solution to the locally optimal cut problem, the nodes \(u\) and \(v\) must have different labels.

The following lemma will be a key ingredient in our lower bound construction.

\begin{lemma}\label{lem:monoOutputs}
Let $G = (V, E)$ be a graph, and let $u, v \in V$ be two nodes of odd degree that share the same open neighborhood, i.e., $\neighborhood(u) = \neighborhood(v)$. Then, in any valid solution to the locally optimal cut problem, it must hold that $u$ and $v$ have the same label.
\end{lemma}

\begin{proof}
    Let $\ell$ be any valid labeling.  
    For a node $x \in V$, denote by 
    \[
        P_x = \{\, y \in \neighborhood(x) \mid \ell(y) = 1 \,\}
        \quad \text{and} \quad
        N_x = \{\, y \in \neighborhood(x) \mid \ell(y) = -1 \,\}
    \]
    the sets of neighbors of $x$ labeled $1$ and $-1$, respectively.  
    Since $u$ has odd degree, exactly one of the sets $P_u$ or $N_u$ must be larger.  
    By the definition of a locally optimal cut, node $u$ must be labeled $1$ if $|N_u| > |P_u|$, and $-1$ if $|P_u| > |N_u|$.  
    The same reasoning applies to $v$.  
    Because $\neighborhood(u) = \neighborhood(v)$, we have $P_u = P_v$ and $N_u = N_v$. 
    Hence, both $u$ and $v$ must have the same label.
\end{proof}

\subsection{The lower bound graph}

We build our lower bound graphs by merging two separate copies of the following construction.
For an integer parameter $k \ge 2$, we define a graph $H_k = (V, E)$ as follows.

\paragraph{Node set.}
The vertex set $V$ is partitioned into disjoint subsets, which we call \emph{layers}, arranged in the sequence
\[
S, L_2, R_2, L_3, R_3, \ldots, L_k, R_k.
\]
Here, $S$ is a singleton set ($|S| = 1$), and for each $i \in \{2, \ldots, k\}$, the sets $L_i$ and $R_i$ have cardinality $|L_i| = |R_i| = i$.

\paragraph{Edge set.}
The edge set $E$ is defined so that each pair of consecutive sets in the sequence
\[
S, L_2, R_2, L_3, R_3, \ldots, L_k, R_k
\]
forms a complete bipartite graph. 
Namely, for each $i \in \{2, \ldots, k\}$, all nodes in $L_i$ are connected to all nodes in $R_{i}$, and all nodes in $R_{i}$ are connected to all nodes in $L_{i+1}$ (if $i < k$).
As for the only node in \(S\), it is connected to all nodes in \(L_2\).

Hence, the sequence of layers induces the following pattern of complete bipartite graphs:
\[
K_{1,2},\; K_{2,2},\; K_{2,3},\; K_{3,3},\; K_{3,4},\; \ldots,\; K_{k-1,k-1},\; K_{k-1,k},\; K_{k,k}.
\]

\paragraph{Properties.} 
Intuitively, the graph can be seen as a ``path'' of layers whose sizes increase monotonically as
\[
1, 2, 2, 3, 3, 4, 4, \ldots, k-1, k-1, k, k,
\]
where adjacent layers are fully connected. 
Note that all nodes in a specific layer, except the layer $S$ and (possibly) the last layer, have an odd degree and share the same neighborhood.
The following lemma shows that the labels of nodes in any given layer are all the same, while the labels of nodes of consecutive layers must differ.

\begin{lemma}\label{lem:coloringHalf}
    Consider \(H_k\) to be the input graph.
    In any solution to the locally optimal cut problem, all nodes in any layer of \(H_k\) share the same label, while nodes in consecutive layers of \(H_k\) use different labels.
\end{lemma}
\begin{proof}
    Any node in any layer \(L_i\), \(2 \le i \le k\), and in any layer \(R_j\), \(2 \le j \le k-1\), has odd degree.
    Hence, by \cref{lem:monoOutputs}, all nodes in $L_i$ must share the same label for \(2 \le i \le k\) (the same holds in \(R_j\) for \(2 \le j \le k-1\)), implying that also nodes in $R_k$ must share the same label. 

    We now prove the second part of the statement.
    It is straightforward to notice that the node in $S$ needs to pick a label different from the nodes in $L_2$. 
    Similarly, the nodes in $R_k$ must have the opposite label of the nodes in $L_k$. 
    Now fix any three consecutive layers in the sequence
    \[
    S, L_2, R_2, L_3, R_3, \ldots, L_k, R_k
    \]
    We denote these consecutive layers as $A_{i-1}$, $A_i$, and $A_{i+1}$.
    Each node in $A_i$ is adjacent to all nodes in $A_{i-1}$ and $A_{i+1}$, and to no other nodes.  
    Because $|A_{i+1}| > |A_{i-1}|$, the majority of $A_i$'s neighbors are in $A_{i+1}$.  

    By contradiction, assume that the nodes in $A_i$ and the nodes in $A_{i+1}$ use the same label and, w.l.o.g., let that label be $+1$.  
    Then every node in $A_i$ has strictly more neighbors with label $+1$ (from $A_{i+1}$) than it could have with label $-1$ (from $A_{i-1}$).  
    This violates the locally optimal cut condition and, hence, the nodes in $A_i$ and those in $A_{i+1}$ cannot use the same label.
    We just proved that nodes in consecutive layers must be labeled differently.
\end{proof}

We now complete the lower bound construction by ``mirroring'' the graph $H_k$.

\paragraph{Mirrored construction.}
We consider a graph \(G_k\) constructed as follows.
Consider a copy \(H_k^{(1)}\) of \(H_k\) and a copy \(H_{k-1}^{(2)}\) of \(H_{k-1}\), where
\[
    S^{(1)}, L_2^{(1)}, R_2^{(1)}, L_3^{(1)}, R_3^{(1)}, \ldots, L_k^{(1)}, R_k^{(1)}
\]
are the layers of \(H_k^{(1)}\), and 
\[
    S^{(2)}, L_2^{(2)}, R_2^{(2)}, L_3^{(2)}, R_3^{(2)}, \ldots, L_{k-1}^{(2)}, R_{k-1}^{(2)}
\]
are the layers of \(H_{k-1}^{(2)}\).
Now, we connect nodes in \(R_k^{(1)}\) to all nodes in \(R_{k-1}^{(2)}\) to form a complete bipartite graph.
The structure of layers of \(G_k\) is then
\[
    S^{(1)}, L_2^{(1)}, R_2^{(1)}, L_3^{(1)}, R_3^{(1)}, \ldots, L_k^{(1)}, R_k^{(1)}, R_{k-1}^{(2)}, L_{k-1}^{(2)}, \ldots, R_3^{(2)}, L_3^{(2)}, R_2^{(2)}, L_2^{(2)}, S^{(2)},
\]
and the sizes of layers are as follows:
\[
1, 2, 2, 3, 3, 4, 4, \ldots, k-1, k-1, k, k, k-1, k-1, \ldots, 3, 3, 2, 2, 1.
\]
Note that the nodes in \(S^{(1)},S^{(2)}\) are at odd distance.
If we prove that, with this new construction, the layers \(L_k^{(1)}, R_k^{(1)}, R_{k-1}^{(2)}, L_{k-1}^{(2)}\) are  still monochromatic and must alternate colors, then we are asking the nodes in \(S^{(1)},S^{(2)}\) to have different labels.

\begin{lemma}\label{lem:lb_properties}
    For any $k\ge 2$, the graph $G_k$ has the following properties:
    \begin{enumerate}[label=(\roman*),noitemsep]
        \item The number of nodes in $G_k$ is $2 k^2 -2$.\label{prop:nodes}
        \item The maximum degree of \(G_k\) is \(2k-1\).\label{prop:degree}
        \item In any correct solution to the locally optimal cut problem, each layer is monochromatic and adjacent layers use different labels. \label{prop:monochromatic}
    \end{enumerate}
\end{lemma}
\begin{proof}
    Consider  claim (i).
    There are 2 layers with 1 node (the first and last), 2 layers with $k$ nodes (in the middle) and 4 layers of any other cardinality in \(\{2,\ldots,k-1\}\).
    Therefore, the following holds:
    \[
        \abs{V(G)} = 2 + 2k + 4\sum_{i=2}^{k-1}i = 4 \sum_{i=1}^{k}i - 2k -2= 4\frac{(k+1)k}{2} -2k  -2= 2k^2 -2.
    \]
    We now prove claim (ii).
    The maximum degree of \(G_k\) is reached at all nodes in \(L_k^{(1)}\) and in \(R_k^{(1)}\), which trivially is \(2k-1\).
    Finally, we prove claim (iii). 
    By \Cref{lem:coloringHalf}, we know that the layers $S^{(1)}, \ldots, L_k^{(1)}$ are monochromatic and must use alternating labels, and the same is true for the layers $R_k^{(1)}, R_{k-1}^{(2)}, L_{k-1}^{(2)}, \ldots, S^{(2)}$ (by symmetry). 
    We just need to show that the layers $L_k^{(1)}$ and $R_k^{(1)}$ use different labels. 
    Since $\abs{R_{k-1}^{(2)}} < \abs{R_k^{(1)}}$, the strict majority of the neighbors of the nodes in \(R_k^{(1)}\) are in $L_k^{(1)}$ and, hence, the layers $L_k^{(1)}$ and $R_k^{(1)}$ must use different labels. 
\end{proof}

We are ready to prove \Cref{thm:locLB} through an indistinguishability argument.

\begin{proof}[Proof of \Cref{thm:locLB}]
    Let $\AA$ be a $T(\maxDeg,n)$-round randomized LOCAL algorithm that solves locally optimal cut on graphs of \(n\) nodes and maximum degree \(\maxDeg\), with \(\maxDeg \le n(1-\varepsilon)\).
    We will show that, for all large enough values of \(n\), and for all large enough values of \(\maxDeg \le n(1-\varepsilon)\), if \(
    \AA\) has success probability at least \(1/2\), then it must hold that $T(\maxDeg,n) > \floor{\min\{(\maxDeg+1)/2, \sqrt{\varepsilon n}/2\}} / 2$.

    Let \(k = \floor{\min\{(\maxDeg+1)/2, \sqrt{\varepsilon n}/2\}}\).
    The lower bound graph \(G\) is constructed as follows.
    Take \(G_k\), and consider two paths \(P_1\) and \(P_2\) of length \(\floor{\varepsilon n / 4}\) and \(n - (\floor{\varepsilon n}/4 + \abs{V(G_k)} + \maxDeg - 1)\), respectively.
    By Property~\ref{prop:degree} of \cref{lem:lb_properties}, the maximum degree of \(G_k\) is at most \(2k - 1 \le \maxDeg\).
    Recall that \(G_k\) has exactly two nodes of degree \(2\): let them be \(u\) and \(v\). 
    We connect one endpoint of \(P_1\) to \(u\) and one endpoint of \(P_2\) to \(v\).
    Finally, we connect \(\maxDeg-1\) new nodes to the other endpoint of \(P_2\).
    By Property~\ref{prop:nodes} of \cref{lem:lb_properties}, observe that \(\abs{V(G_k)} \le 2k^2 - 2 \le \varepsilon n / 2\) by our choice of \(k\) and, hence, \(P_2\) is well-defined.
    Therefore, the constructed graph \(G\) has exactly \(n\) nodes and maximum degree \(\maxDeg\).

    It is easy to see that all the properties of \cref{lem:lb_properties} still hold for the induced subgraph \(G[V(G_k)]\), and, in particular, by Property~\ref{prop:monochromatic}, \(u\) and \(v\) must output different labels in any valid solution to the locally optimal cut problem on \(G\).
    
    By contradiction, assume that $T(\maxDeg,n) \le \floor{\min\{(\maxDeg+1)/2, \sqrt{\varepsilon n}/2\}} / 2$ (which is less than \(k/2\)), and that \(\AA\) succeeds with probability at least \(1/2\).
    By construction of \(G\), the radius-\(T\) views of \(u\) and \(v\) are isomorphic, since \(T = T(n, \maxDeg) \le k/2 = \Theta(\sqrt{n})\) and both \(P_1\) and \(P_2\) have length \(\Theta(n)\).  
    Hence, by indistinguishability, there is a probability of at least \(1/2\) that both nodes in \(S^{(1)}\) and \(S^{(2)}\) take the same output label, reaching a contradiction.
\end{proof}

In \cref{app:lb}, we show that the same method can be applied to prove a lower bound of \(\Omega(\min\{\maxDeg,\sqrt{n}\})\) rounds for the locally optimal cut problem even in the quantum-LOCAL model. 
The following is a corollary of \cref{thm:lb:online-local}.

\begin{corollary}\label{cor:lb:quantum-local}
    For any large enough \(n \in \nats\) and \(\maxDeg < n\), let \(\varepsilon = 1 - (\maxDeg / n)\).
    Any algorithm that finds a locally optimal cut in the  quantum-LOCAL model with probability $p \ge \frac{1}{2}$ on graphs of \(n\) nodes and maximum degree \(\maxDeg\) has locality $\Omega(\min\{\maxDeg,\sqrt{\varepsilon n}\})$.
    In particular, for any \(\maxDeg \le c n\) for some constant \(c < 1\), the locality is \(\Omega(\min\{\maxDeg,\sqrt{n}\})\).
\end{corollary} %
\section*{Acknowledgments}

Francesco d'Amore was supported by the project Decreto MUR n. 47/2025,
CUP: D13C25000750001.
This work was partially supported by the MUR (Italy) Department of Excellence 2023 - 2027 for GSSI.
This project was also discussed at the Research Workshop on Distributed Algorithms (RW-DIST 2025) in Freiburg, Germany; we would like to thank all workshop participants and organizers for inspiring discussions. %

\printbibliography

\appendix 

\section{Locally optimal problems on edges}
\label{sec:app:lope}

In this section, we give an alternative (but more restrictive) definition of Locally Optimal Problems. 
All these problems can be also described as Locally Optimal Problems according to \cref{def:preliminaries:lop}, but the opposite is not necessarily true.
First, we introduce the notion of labeled edge-graphs.

\begin{definition}[Labeled edge-graph]\label{def:appendix:labeled-edge}
    Let \(\VV,\EE\) be two finite sets of labels. 
    We denote by \(\LL_E(\VV,\EE)\) the family of all \((\VV,\EE)\)-labeled graphs \(G\) where \(G\) consists only of two nodes that are connected by an edge.
    An element of \(\LL_E(\VV,\EE)\) is called a labeled edge-graph.
\end{definition}

\begin{definition}[Locally Optimal Problems on Edges]\label{def:appendix:lop-edges}
    A Locally Optimal Problem on Edges (LOPE) is a pair \((\problem, \pot)\) where:
    \begin{itemize}
        \item \(\problem = (\Vin, \Ein, \Vout, \Eout, \CC)\) is an LCL problem as defined in \cref{def:preliminaries:lcl-problems}, where \(\CC\) is a \((r, \maxDeg)\)-set of constraints over \((\Vin \times \Vout, \Ein \times \Eout)\).
        \item \(\pot: \LL_E(\VV,\EE) \to \reals_{\ge 0}\) is a function that assigns a non-negative real value to every labeled edge-graph in \(\LL_E(\VV,\EE)\).
    \end{itemize}
    Moreover, \(\CC\) must satisfy the following property, which in essence says that a correct solution does not allow local improvements.
    For any \((\Vin \times \Vout, \Ein \times \Eout)\)-labeled graph \(G = (V,E)\), let us denote by \(\ell = (\lblnodes,\lbledges)\) the labeling of \(G\).
    We say that the potential associated to \((G,\ell)\) is \(\graphPot(G,\ell) = \sum_{\{u,v\} \in E}\pot_\ell(G[\{u,v\}])\), where \(\pot_\ell(G[\{u,v\}])\) is the evaluation of \(\pot\) on the edge-graph \(G[\{u,v\}]\) that is labeled by \(\ell\).
    For any node \(v \in V\), let the cost of $v$ be \(c_v := \sum_{u \in \neighborhood(v)}\pot_\ell(G[\{u,v\}])\).
    Then, the centered graph \((G_{r}(v),v)\) labeled with $\ell$ belongs to \(\CC\) if and only if we cannot change the labeling of $v$ to improve the overall potential of the graph. Formally, there is no labeling \(\ell'\) with the following properties, where, for any node \(u\in V\), \(c_u' := \sum_{w \in \neighborhood(u)}\pot_{\ell'}(G[\{u,w\}])\):
    \begin{itemize}[noitemsep]
        \item Labeling $\ell'$ differs from \(\ell\) only on the \emph{output labels} of \(v\) and of its incident half-edges.
		\item According to labeling $\ell'$, \(G_{r}(v)\) is still a \((\Vin \times \Vout, \Ein \times \Eout)\)-labeled graph.
		\item It holds that \( c_{v}' < c_{v} \) and, hence, \(\graphPot(G, \ell') < \graphPot(G, \ell)\).
	\end{itemize}
\end{definition}

It is easy to see that any LOPE can be expressed as a Locally Optimal Problem according to \cref{def:preliminaries:lop}.

\begin{lemma}
    Let \((\problem, \pot)\) be a LOPE according to \cref{def:appendix:lop-edges}.
    Then there exists a LOP \((\problem', \pot')\) according to \cref{def:preliminaries:lop} such that any solution of \((\problem, \pot)\) is also a solution of \((\problem', \pot')\) and vice versa.
\end{lemma}
\begin{proof}
    Problem \(\problem'\) is a tuple \((\Vin, \Ein, \Vout, \Eout, \CC')\) where \(\CC'\) is a \((2r, \maxDeg)\)-set of constraints over \((\Vin \times \Vout, \Ein \times \Eout)\).
    Function \(\pot'\) is a function \(\pot' : \LL(\Vin \times \Vout, \Ein \times \Eout, r, \maxDeg) \to \reals_{\ge 0}\), which assigns a non-negative real value to every \((\Vin \times \Vout, \Ein \times \Eout)\)-labeled centered graph \((H,v)\).
    For any \((\Vin \times \Vout, \Ein \times \Eout)\)-labeled centered graph \((H,v)\), we define \(\pot'((H,v)) := \sum_{u \in \neighborhood(v)} \pot(H[\{u,v\}])\), where \(H[\{u,v\}]\) is the labeled edge-graph induced by nodes \(u\) and \(v\) in \(H\).
    Finally, \(\CC'\) contains all the labeled centered graphs \((H, v)\) of radius \(2r\) such that \(H(v)\) belongs to \(\CC\) according to \cref{def:appendix:lop-edges}.
    By definition of \(\CC'\), it is immediate to see that any solution to \((\problem, \pot)\) is also a solution to \((\problem', \pot')\) and vice versa.
    It remains to prove that \((\problem', \pot')\) is a LOP according to \cref{def:preliminaries:lop}.
    We need to show that, for every \((\Vin \times \Vout, \Ein \times \Eout)\)-labeled graph \(G\) and for every node \(v\) of \(G\), the centered graph \((G_{2r}(v), v)\) belongs to \(\CC'\) if and only if there is no labeling \(\ell'\) that differs from the current labeling \(\ell\) of \(G\) only on the output labels of \(v\) and of its incident half-edges, such that \(G_{2r}(v)\) is still a \((\Vin \times \Vout, \Ein \times \Eout)\)-labeled graph and such that \(\graphPot((G, \ell')) < \graphPot((G, \ell))\).

    Assume there is a node \(v\) such that we can modify the labeling of \(v\) and that of its incident half-edges to obtain a new labeling \(\ell'\) such that \(G_{2r}(v)\) is still a \((\Vin \times \Vout, \Ein \times \Eout)\)-labeled graph and such that \(\graphPot((G, \ell')) < \graphPot((G, \ell))\).
    Then, by definition of the potential function \(\pot'\), we have that there is at least one edge \(\{u,v\}\) such that \(\pot(G[\{u,v\}])\) is strictly smaller according to labeling \(\ell'\) than according to labeling \(\ell\), which implies that the centered graph \((G_{r}(v), v)\) does not belong to \(\CC\) according to \cref{def:appendix:lop-edges}, and hence that the centered graph \((G_{2r}(v), v)\) does not belong to \(\CC'\).

    Conversely, assume that there is a node \(v\) such that the centered graph \((G_{2r}(v), v)\) does not belong to \(\CC'\).
    Then, by definition of \(\CC'\), we have that the centered graph \((G_{r}(v), v)\) does not belong to \(\CC\) according to \cref{def:appendix:lop-edges}, which implies that there is a neighbor \(u\) of \(v\) and a labeling \(\ell'\) that differs from the current labeling \(\ell\) of \(G\) only on \(v\) and on \((v,{u,v})\), such that \(G_{r}(v)\) is still a \((\Vin \times \Vout, \Ein \times \Eout)\)-labeled graph and such that \(\pot(G[\{u,v\}])\) is strictly smaller according to labeling \(\ell'\) than according to labeling \(\ell\), which implies that \(\graphPot((G, \ell')) < \graphPot((G, \ell))\).
    Hence, there exists a relabeling \(\ell'\) that differs from the current labeling \(\ell\) of \(G\) only on the output labels of \(v\) and of its incident half-edges, such that \(G_{2r}(v)\) is still a \((\Vin \times \Vout, \Ein \times \Eout)\)-labeled graph and such that \(\graphPot((G, \ell')) < \graphPot((G, \ell))\), proving the thesis.
\end{proof}

Note that the problem of finding a locally optimal cut (\cref{def:lb:loc}) is an example of a LOPE, where the potential function \(\pot\) simply outputs \(1\) if the two endpoints of the edge have the same output labels, and \(0\) otherwise.

We define the notion of maximum edge-improvement of a LOPE.
The maximum edge-improvement, denoted by \(\Gamma\) is the maximum value that the potential can get in any edge-graph.
If \(\Gamma = 0\), the problem is trivial, as any solution is a correct solution.
Note that the function \(\pot\) can always be rescaled so that \(\Gamma = 1\).
It suffices to take \(\pot' = \pot/\Gamma\).
We will assume that \(\Gamma = 1\) in the whole section.
Note that, for a \(\beta\)-improving set \(S\) of a \((\problem, \pot)\), it always holds that \(\beta \le \Gamma \maxDeg \le \maxDeg\).

Recall the value \(\lambda\) that was the minimum improvement of a LOP according to \cref{def:preliminaries:lop}.
We define \(\lambda\) to be the same for LOPEs, but with respect to the LOP \((\problem', \pot')\) that is obtained by applying the transformation in \cref{def:appendix:lop-edges} to \((\problem, \pot)\), where \(\pot\) is rescaled so that \(\Gamma = 1\).
Note that \(\lambda \le \Gamma \maxDeg = \maxDeg\) in general.

In what follows, we assume we consider a LOPE \((\problem, \pot)\) according to \cref{def:appendix:lop-edges}, and we denote by \(\Gamma\) the maximum edge-improvement of \((\problem, \pot)\), and \(\lambda\) the minimum improvement associated to \((\problem, \pot)\).
Note that, given the description of the LOPE and the value of \(\maxDeg\), the nodes of the graph can locally compute the values of \(\Gamma\) and \(\lambda\).
All the definitions of improvement and improving sets that we gave for LOPs also extend to the case of LOPEs with the new potential function \(\graphPot\).
We can rewrite \cref{lem:subsetImpr} as follows.

\begin{lemma}\label{lem:appendix:subsetImpr}
    Let \(S \subseteq V\) be any subset of nodes of a \((\Vin \times \Vout, \Ein \times \Eout)\)-labeled graph \(G = (V,E)\) and let \(\ell\) be any labeling of \(G\).
    Consider any \(S \subseteq V\), and let \(B \subseteq S\).
    If \(A = S \setminus B\), then it holds that
    \[
        \imp(S,\ell,\ell^\star_S) \le \imp(A,\ell,\ell^\star_A) + \imp(B,\ell,\ell^\star_B) + 2\abs{E(A,B)} \cdot \Gamma,
    \]
    where \(E(X,Y) = \{\{u,v\} : u \in X, v \in Y\}\). 
    Recall that \(\Gamma\) is the maximum edge-improvement of \((\problem, \pot)\), and that the labeling \(\ell^\star_M\) is the best possible relabeling of \(M\).
\end{lemma}
\begin{proof}
    The proof is almost identical to that of \cref{lem:subsetImpr}.
    For any subset of nodes \(X \subseteq V\), we denote by \(E(X)\) the set of edges with both endpoints in \(X\).

    We have that 
	\begin{align*}
		& \imp(S,\ell,\ell^\star_S) = \graphPot(G,\ell) - \graphPot(G,\ell^\star_S) \\
		= \ & \sum_{\{u,v\}\in E} \pot_{\ell}(G[\{u,v\}]) - \pot_{\ell^\star_S}(G[\{u,v\}]) \\
		= \ & \sum_{\{u,v\} \in E(S) \cup E(V,S)} \pot_{\ell}(G[\{u,v\}]) - \pot_{\ell^\star_S}(G[\{u,v\}]) + \sum_{\{u,v\} \in E(V \setminus S)} \pot_{\ell}(G[\{u,v\}]) - \pot_{\ell^\star_S}(G[\{u,v\}]) \\
		= \ & \sum_{\{u,v\} \in E(S) \cup E(V,S)} \pot_{\ell}(G[\{u,v\}]) - \pot_{\ell^\star_S}(G[\{u,v\}]),
	\end{align*}
	where the last equality follows from the fact that the local potential of an edge-graph can only change if any label of the edge-graph is changed, and that the labeling \(\ell^\star_S\) differs from \(\ell\) only on the nodes in \(S\) and on their incident half-edges.
	Let $\ell_A$ be the labeling defined as follows: 
	On the nodes in \(A\) and their incident half-edges, it is equal to \(\ell^\star_S\), and on the rest of the graph it is equal to \(\ell\).
	We similarly define $\ell_{B}$ (we just replace \(B\) with \(A\) in the previous definition).

	We want to analyze the improvements of $(B,\ell_B)$ and $(A,\ell_A)$ compared to that of $(S,\ell^\star_S)$. 
	Consider any two adjacent nodes \(u,v \notin A\) (resp.\ \(u,v \notin B\)).
	The two labelings $\ell_B$ (resp.\ \(\ell_A\)) and $\ell^\star_S$ will yield  the same evaluation of the potential on the edge-graph \(G[\{u,v\}]\).

	The following holds:
	\begin{align*}
		& \imp(B,\ell, \ell_B) \\
		= \ & \sum_{\{u,v\} \in E(B)\cup E(V,B)} \big[\pot_{\ell}(G[\{u,v\}]) - \pot_{\ell_B}(G[\{u,v\}])\big] \\
		= \ &  \sum_{\substack{\{u,v\} \in E(B) \text{ or} \\ \{u,v\} \in E(V\setminus A, B)}} \big[\pot_{\ell}(G[\{u,v\}]) - \pot_{\ell_B}(G[\{u,v\}])\big] + \sum_{\{u,v\} \in E(A,B)} \big[\pot_{\ell}(G[\{u,v\}]) - \pot_{\ell_B}(G[\{u,v\}])\big]\\
		= \ & \sum_{\substack{\{u,v\} \in E(B) \text{ or} \\ \{u,v\} \in E(V\setminus A, B)}} \big[\pot_{\ell}(G[\{u,v\}]) - \pot_{\ell^\star_S}(G[\{u,v\}]) \big] + \sum_{\{u,v\} \in E(A,B)} \big[\pot_{\ell}(G[\{u,v\}]) - \pot_{\ell_B}(G[\{u,v\}])\big].
	\end{align*}
	Notice that we can get the exact same equality when replacing $B$ with $A$:
	\begin{align*}
		& \imp(A,\ell, \ell_A)\\
		= \ & \sum_{\substack{\{u,v\} \in E(A) \text{ or} \\ \{u,v\} \in E(V\setminus B, A)}} \big[\pot_{\ell}(G[\{u,v\}]) - \pot_{\ell^\star_S}(G[\{u,v\}]) \big] + \sum_{\{u,v\} \in E(A,B)} \big[\pot_{\ell}(G[\{u,v\}]) - \pot_{\ell_{A}}(G[\{u,v\}])\big].
	\end{align*}
		Bringing those together we recover the improvement for all edge-graphs in \((E(S)\setminus E(A,B)) \cup E(V,S)\) when applying \(\ell^\star_S\).
	\begin{align*}
		& \imp(A, \ell, \ell^\star_{A}) + \imp(B, \ell, \ell^\star_B) \\
		\ge \ & \imp(A,\ell,\ell_A) + \imp(B, \ell, \ell_B) \\
		= \ & \sum_{\substack{\{u,v\} \in E(A) \text{ or} \\ \{u,v\} \in E(V\setminus B, A)}} \big[\pot_{\ell}(G[\{u,v\}]) - \pot_{\ell^\star_S}(G[\{u,v\}])\big] + \sum_{\{u,v\} \in E(A,B)} \big[\pot_{\ell}(G[\{u,v\}]) - \pot_{\ell_{A}}(G[\{u,v\}])\big]\\
		+ & \sum_{\substack{\{u,v\} \in E(B) \text{ or} \\ \{u,v\} \in E(V\setminus A, B)}} \big[\pot_{\ell}(G[\{u,v\}]) - \pot_{\ell^\star_S}(G[\{u,v\}])\big] + \sum_{\{u,v\} \in E(A,B)} \big[\pot_{\ell}(G[\{u,v\}]) - \pot_{\ell_B}(G[\{u,v\}])\big] \\ 
		= \ & \sum_{\{u,v\} \in (E(S)\setminus E(A,B)) \cup E(V,S) } \big[\pot_{\ell}(G[\{u,v\}]) - \pot_{\ell^\star_S}(G[\{u,v\}])\big] \\
		- & \sum_{\{u,v\} \in E(A,B)} \big[ \pot_{\ell_{A}}(G[\{u,v\}]) + \pot_{\ell_B}(G[\{u,v\}]) 
		- 2\pot_{\ell}(G[\{u,v\}]) \big] \\ 
		= \ & \imp(S,\ell,\ell^\star_S) - \sum_{\{u,v\} \in E(A,B)}[\pot_{\ell}(G[\{u,v\}]) - \pot_{\ell^\star_S}(G[\{u,v\}])] \\
		- & \sum_{\{u,v\} \in E(A,B)} \big[ \pot_{\ell_{A}}(G[\{u,v\}]) + \pot_{\ell_B}(G[\{u,v\}]) 
		- 2\pot_{\ell}(G[\{u,v\}]) \big].
	\end{align*}  
		The claim follows by the definition of \(E(A,B)\) and by rearranging terms.
    By definition of \(\Gamma\), we get that
    \begin{align*}
        & \imp(S,\ell,\ell^\star_S) - \sum_{\{u,v\} \in E(A,B)} \big[\pot_{\ell_{A}}(G[\{u,v\}]) + \pot_{\ell_B}(G[\{u,v\}]) - \pot_{\ell^\star_S}(G[\{u,v\}]) - \pot_{\ell}(G[\{u,v\}])\big] \\
        \ge \ & \imp(S,\ell,\ell^\star_S) - 2\abs{E(A,B)} \cdot \Gamma.
    \end{align*}
\end{proof}

Before proceeding, we introduce the following auxiliary lemma.
\begin{lemma}\label{lem:appendix:iterative-seq}
    Consider a sequence of positive integers \(n_1, n_2, \ldots\) such that \(n_1 \ge 1\) and \(n_k \ge n_{k-1} + \sqrt{\alpha \cdot n_{k-1}}\), for some real value \(\alpha \in (0,1)\).
    Then, for every \(k \ge 1\), it holds that \(\sqrt{n_k} \ge 1 + (k-1)\sqrt{\alpha} / 3 \).
\end{lemma}
\begin{proof}
    Let \(x_k = \sqrt{n_k}\).
    Then, \(x_1 \ge 1 \) and \(x_k^2 \ge x_{k-1}^2 + \sqrt{\alpha} x_{k-1}\).
    This implies that 
    \[
        x_{k} - x_{k-1} \ge x_{k-1}\big(\sqrt{1 + (\sqrt{\alpha})/ x_{k-1}} - 1\big).
    \]
    Since \(0 \le (\sqrt{\alpha})/ x_{k-1} \le 1\), by the known inequality \(\sqrt{1+x} \ge 1 + x/3\) for all \(x \in [0,1]\), we have that \(x_{k} - x_{k-1} \ge (\sqrt{\alpha})/3\), which implies that \(x_k \ge 1 + (k-1)(\sqrt{\alpha})/3\).
\end{proof}

Since \cref{lem:subsetImpr} is the main ingredient in the proof of \cref{lem:imprBalls}, and we now have its updated version (\cref{lem:appendix:subsetImpr} of \cref{lem:subsetImpr}), we can also rewrite \cref{lem:imprBalls} as follows.

\begin{lemma}\label{lem:appendix:imprBalls}
	Let $\ell$ be a labeling of $G = (V,E)$ and let $(S, \ell_S)$ be a $\beta$-minimal improving set, for some $\beta > 0$. 
	Then, for every $0 < \varepsilon < \min{\{\beta,\lambda\}}$ and any $v \in S$ there exists a large enough constant \(c_3 > 1\), a $k \le c_3 \min\{ \maxDeg \log n / \varepsilon, \sqrt{ n / \varepsilon}\}$, and a subset $A \subseteq S \cap \neighborhood_k[v]$, such that $\impRatio(A, \ell, \ell_{A}^\star) \ge \beta - \varepsilon$.
	Furthermore, \((A,\ell_A^\star)\) is minimal.
\end{lemma}
\begin{proof}
    The proof proceeds similarly to that of \cref{lem:imprBalls}.
    For any subset $A \subseteq S$ we again define $B = S \setminus A$.
    Let \(x_S = \impRatio(S, \ell, \ell_S^\star)\). 
    Starting with the inequality of \cref{lem:appendix:subsetImpr} and using that $(S, \ell^\star_S)$ is minimal, we have
    \begin{align*}
        \imp(A,\ell, \ell^\star_A) &\ge \imp(S,\ell, \ell^\star_S) - \imp(B, \ell, \ell^\star_{B}) - 2  \abs{E(A,B)} \cdot \Gamma\\
                &\ge x_S \abs{S} - x_S \abs{B} - 2 \abs{E(A,B)}\cdot \Gamma\\
                &= x_S (\abs{S} - \abs{B}) - 2 \abs{E(A,B)}\cdot \Gamma\\
                &=x_S \abs{A} - 2 \abs{E(A,B)}\cdot \Gamma .
    \end{align*}
    So if $\abs{E(A,B)} \le \frac{\varepsilon}{2 \Gamma } \abs{A}$, we have that $(A, \ell_A^\star)$ has improving ratio at least $\beta -\varepsilon$ as desired. 
    Then, we can simply take the minimal improving set \((A',\ell^\star_{A'})\) such that \(A' \subseteq A\) with improving ratio at least \(\beta - \varepsilon\), and, if \(\diam(A')\) is small enough, we are done.

    For the sake of readability, we set $y = 2 \Gamma $.
    Fix $v\in S$ and consider the following ``candidate sets'':
    \( C_i = \neighborhood_i[v]\cap S\) for \(i \ge 0\). 
    Let \(n_i = \abs{C_i}\), and let \(k\) be the smallest integer \(i \ge 1\) such that \(n_i \le n_{i-1}(1 + \varepsilon/(y (\maxDeg+1)))\).

    We now prove that such \(k\) must exist and must be at most \(4y (\maxDeg+1) \ln n / \varepsilon + 1\).
    It holds that \(n_{k-1} > n_{k - 2}(1+\varepsilon/(y (\maxDeg+1))) > \ldots > (1+\varepsilon/(y (\maxDeg+1)))^{k-1}\).
    By contradiction, assume \(k -1 > 4 y (\maxDeg+1) \ln n / \varepsilon + 1\).
    By the known inequality \((1+x/y)^y \ge \exp[xy / (x+y)]\) for all \(x>0, y>0\), we have that 
    \begin{align*}
        & \left(1+\frac{\varepsilon}{y (\maxDeg+1)}\right)^{(k-1)} \\
        = \ & \left(1+\frac{\varepsilon}{y (\maxDeg+1)}\cdot \frac{(k-1)}{(k-1)}\right)^{(k-1)} \\ 
        \ge \ & \exp\left[\frac{\frac{\varepsilon}{y (\maxDeg+1)} \cdot (k-1)^2}{\frac{\varepsilon}{y (\maxDeg+1)} \cdot (k-1) + (k-1)}  \right] \\
        = \ & \exp\left[\frac{\frac{\varepsilon}{y (\maxDeg+1)} \cdot (k-1)}{\frac{\varepsilon}{y (\maxDeg+1)} + 1}  \right] \\
        \ge \ & \exp\left[\frac{\varepsilon}{2y (\maxDeg+1)} \cdot (k-1)  \right],
    \end{align*}
    where the last inequality follows from the fact that \(\varepsilon \le y (\maxDeg+1)\) (as \(\varepsilon \le \lambda \le \Gamma \maxDeg\)).
    Since \((k-1) > 2y (\maxDeg+1) \ln n / \varepsilon + 1\), we get that 
    \[
        \exp\left[\frac{\varepsilon}{2y (\maxDeg+1)} \cdot (k-1) + 1 \right] > \exp[\ln n] = n,
    \]
    and, hence, \(n_{k-1} > n\), which is a contradiction.

    If \(C_{k-1} = S\), we are done. 
    Assume that \(C_{k-1}\) is strictly contained in \(S\) and let \(A = C_{k-1}\) and \(B = S \setminus A\).
    We need to upper bound the size of \(E(A,B)\).
    Trivially, \(\abs{E(A,B)} \le \maxDeg \cdot (n_{k} - n_{k-1})\), which implies that 
    \begin{align*}
        \abs{E(A,B)} & \le \maxDeg \cdot (n_{k} - n_{k-1}) \\
        & \le \maxDeg \cdot n_{k} \cdot \left(1 - \frac{y(\maxDeg + 1)}{y(\maxDeg + 1) + \varepsilon}\right)\\
        & \le \maxDeg \cdot n_{k} \cdot \frac{\varepsilon}{y(\maxDeg + 1) + \varepsilon} \\
        & \le \frac{\varepsilon}{2\Gamma} \cdot n_{k},
    \end{align*} 
    where the latter inequality holds since \(\maxDeg y \le y(\maxDeg + 1) + \varepsilon\).

    Now, we also want to prove that there exists a \(k \le 2 + 3\sqrt{yn/\varepsilon}\) such that \(A\) can be found within \(C_k\).
    Let \(m_i = \abs{E(C_i,S\setminus C_i)}\).
    Consider \(k\) such that \(k\) is the minimum integer satisfying \(n_k \le n_{k-1} + \sqrt{(\varepsilon/y) n_{k-1}}\).
    We can prove that such \(k\) is at most \(k < 1+ 3\sqrt{yn/\varepsilon}\).
    If such \(k\) does not exist, then \(n_k > n_{k-1} + \sqrt{(\varepsilon/y) n_{k-1}}\) for every \(k \ge 2\), and \(n_1 \ge 1\).
    Then, we can apply \cref{lem:appendix:iterative-seq} and get that \(\sqrt{n_k} \ge 1 + (k-1)\sqrt{\varepsilon/y} / 3\) for every \(k \ge 2\).
    To get the contradiction that \(n_k > n\), it suffices to take \((k-1) \ge 3\sqrt{yn/\varepsilon} \), which means \(k \ge 1+ 3\sqrt{yn/\varepsilon}\).

    Now take \(A = C_{k-1}\): we want to prove that \(m_{k-1} \le \varepsilon \abs{A}/ y\).
    We know that \(n_k \ge n_{k-1} + \sqrt{m_{k-1}}\), since \(m_{k-1}\) new edges must generate at least \(\sqrt{m_{k-1}}\) new nodes.
    If \(m_{k-1} > \varepsilon {n_{k-1}}/ y\),then \(n_k > n_{k-1} + \sqrt{\varepsilon {n_{k-1}}/ y}\), reaching a contradiction by definition of \(k\).

    By using that \(\Gamma = 1\), we get the thesis.
\end{proof}

Similarly, \cref{lem:imprChain} also can be adapted to the case of LOPEs. 

\begin{lemma}\label{lem:appendix:imprChain}
	Let \(0 < \beta \le \maxDeg\) be any value, and let $0<\varepsilon< \min\{\beta,\lambda\}$.
	Let $((A_i,\ell_i))_{1\leq i \leq h}$ be a sequence of $\beta$-improving sets w.r.t.\ \(\ell\) in \(G\), where each minimal improving set has diameter at most $c_3 \min\{ \maxDeg \log n / \varepsilon, \sqrt{ n / \varepsilon}\}$ (\(c_3\) as given by \cref{lem:appendix:imprBalls}).
	Then for all $1 \leq j \leq h$, for all \(v \in A_j\) there exists a large enough constant \(c_2 > 1\) and some $k \le c_2 \maxDeg^2  ({\log n} /{\varepsilon})^2$, such that there exists a set $A \subseteq \neighborhood_k[v] \cap (\bigcup_{1 \leq i \leq h} A_i)$ with $\impRatio(A, \ell, \ell^*_{A}) \ge \beta - \varepsilon $.
\end{lemma}
\begin{proof}
    We define the auxiliary graph $H = (V_H,E_H)$ where \(V_H\) contains a node \(\sigma(A_i)\) for each \(A_i\), \(i \in [h]\), and $\{\sigma(A_i),\sigma(A_j)\} \in E_H$ if and only if \(\dist_G(A_i,A_j) \le 1\).

    The reason for defining the edges like this is the following: if the distance of \(A_i\) and \(A_j\) is at most \(1\), there are edge-graphs that are affected by both the relabeling of \(A_i\) and \(A_j\). 
    Conversely, if $\sigma(A_i)$ and $\sigma(A_j)$ are not adjacent in $H$, then no edge-graph can be affected by relabelings of both $A_i$ and $A_j$.

    We define a weight function $w \colon V_H \to \nats$ for the nodes in $H$ by
    $w(x) = \abs{\sigma^{-1}(x)}$,
    that is, to any element \(x = \sigma(A_i)\) (for some \(i \in [h]\)), we assign a weight that is equal to the cardinality of \(A_i\).
    With an abuse of notation,  we define $w(B) = \sum_{x \in B}w(x)$ for every subset $B$ of $V_H$.

    Observe that the improvement of the entire sequence of \(\beta\)-improving sets $((A_i, \ell_i))_{1\le i\le h}$ is at least $\beta \cdot w(V_H)$, due to the definition of sequence of \(\beta\)-improving sets.
    Now since the maximum improvement in the potential of any labeling in $G$ is at most $\Gamma \cdot \abs{E}$, the overall improvement of the sequence itself cannot be larger than $\Gamma \cdot \abs{E}$. 
    This implies that $ w(V_H) \le \Gamma \cdot \abs{E} / \beta$.
    Note that \(1/(n-1) \le \Gamma / \beta\), hence \(\Gamma \cdot \abs{E} / \beta \ge 1\).

    Fix any \(j \in [h]\), and consider an arbitrary \(v \in A_j\). 
    Let \(z = \sigma(A_j) \in V_H\).
    We define the ``candidate sets'' with regard to \(z\) as \(C_i = \neighborhood_i[z]\) for \(i \ge 0\).
    Let \(k\) be the minimum integer \(i\)  such that \(\frac{\varepsilon}{K} w(C_i) > w(C_{i+1} \setminus C_i) = w(C_{i+1}) - w(C_i)\), for \(K = (\beta - \varepsilon) + \maxDeg\Gamma \).
    We claim that \(k \le \ceil{4K \log (\Gamma n^2 / \beta) / \varepsilon}\).
    By contradiction, suppose that \(k > \ceil{4K \log (\Gamma n^2 / \beta) / \varepsilon}\).
    Let \(j = \ceil{4K \log (\Gamma n^2 / \beta) / \varepsilon}\).
    Then, we have that \(w(C_{i+1}) - w(C_i) \ge \frac{\varepsilon}{K} w(C_i)\) for all \(i \le j\).
    Then, 
    \[
        w(C_j) \ge (1 + \varepsilon/K) w(C_{j-1}) \ge (1 + \varepsilon/K)^{4K \log (\Gamma n^2 / \beta) / \varepsilon}.
    \]
    Again, by exploiting that \((1 + x/y)^y \ge \exp[xy/(x+y)]\) for all positive \(x,y\), and that \(\varepsilon \le \min\{\beta, \lambda\} \le \Gamma \maxDeg\), we get that 
    \[
        w(C_j) \ge \exp\left[\frac{4 K \log (\Gamma n^2 / \beta)}{\varepsilon + K}\right] > (\Gamma n^2 / \beta)^{2}.
    \]
    This is a contradiction since \(w(C_j) \le w(V_H) \le \Gamma n^2 / \beta\).

    Let 
    \begin{align*}
        A & = \{i \in [h] \mid A_i = \sigma^{-1}(x) \text{ for some } x \in C_{k+1}\}, \text{ and} \\
        A' & = \{i \in [h] \mid A_i =\sigma^{-1}(x) \text{ for some } x \in C_{k}\}.
    \end{align*}
    Furthermore, let \(s = \max (A)\). 
    We also define \(\hat{A} = \cup_{i \in A} A_i\), and \(\hat{A}' = \cup_{i \in A'} A_i\).
    Denote by $\ell_A$ the labeling obtained by setting $\ell_A = \ell_{s}$ for all nodes in $\hat{A}$ and their incident half-edges, and $\ell_A = \ell$ for all other nodes and half-edges.

    Let \(i_1, \ldots, i_t\) denote the elements of \(A\), where \(i_1 < \ldots < i_t = s\).
    We define $f_{i_0} = \ell$ and then a new  relabeling $f_{i_j}$ as the labeling that outputs $\ell_{i_{j}}$ on all the nodes in $A_{i_j}$ and their incident half-edges, but outputs $f_{i_{j-1}}$ everywhere else.
    Now we can obtain $\ell_A$ by sequentially applying the relabelings $f_{i_1}, f_{i_2}, \ldots, f_{i_t} = \ell_A$.

    Importantly, for each \(j \in [t]\), by the definition of \(f_{i_j}\) and \(E_H\), the labels of all nodes in \(\neighborhood[\hat{A}']\) and their incident half-edges given by \(f_{i_j}\) are exactly those given by \(\ell_{i_j}\): in fact, if the labeling of a node \(u \in \neighborhood[\hat{A}']\) (or its incident half-edges) is modified by \(\ell_{i_j}\), then \(u \in A_{i_j}\) and \(\dist_G(A_{i_j},\hat{A}') \le 1\), implying that \(i_j \in A\).

    As a result, if $i_j \in A'$, then
    \[
        \imp(A_{i_j}, f_{i_{j-1}}, f_{i_j}) = \imp(A_{i_j}, \ell_{i_{j -1}}, \ell_{i_j}) \ge \imp(A_{i_j}, \ell_{i_{j} - 1}, \ell_{i_j}) \ge \beta \cdot \abs{A_{i_j}}.
    \]  
    On the other hand, if $i_j \in A \setminus A'$, any relabeling of \(A_{i_j}\) can worsen the potential by at most \(\Gamma \cdot \abs{E(A_{i_j}, V) \cup E(A_{i_j})} \le \Gamma \maxDeg \abs{A_{i_j}}\).
    We obtain the following:
    \begin{align*}
        \imp(\hat{A},\ell, \ell_A) = \ & \sum_{j=1}^{t} \imp(A_{i_j}, f_{i_{j-1}}, f_{i_j}) \\
        = \ & \sum_{i_j \in A'} \imp(A_{i_j}, f_{i_{j-1}}, f_{i_j}) + \sum_{i_j \in A\setminus A'} \imp(A_{i_j}, f_{i_{j-1}}, f_{i_j})\\
        \ge \ & \sum_{i_j \in A'} \imp(A_{i_j}, \ell_{i_j -1}, \ell_{i_j}) - \sum_{i_j \in A\setminus A'} \maxDeg \cdot \Gamma \cdot  \abs{A_{i_j}}\\
        \ge \ & \beta \cdot w(C_{k}) - \maxDeg \cdot \Gamma \cdot (w(C_{k+1}) - w(C_{k}))\\
        \ge \ & \beta\cdot  w(C_{k}) -   \maxDeg \cdot \Gamma \cdot \frac{\varepsilon}{K} w(C_{k})  \\
        = \ & w(C_{k}) \left(\beta - \frac{\varepsilon \maxDeg \Gamma}{K}\right).
    \end{align*}
    We now show that, by the definition of \(K\), it holds that
    \[
        w(C_{k}) \left(\beta - \frac{\varepsilon \maxDeg \Gamma}{K}\right) \ge w(C_{k + 1}) \left(\beta - \varepsilon\right).
    \]
    Recall that \(w(C_{k + 1}) \le w(C_{k})(1 + \varepsilon/K)\).
    Hence,
    \begin{align*}
        w(C_{k}) \left(\beta - \frac{\varepsilon \maxDeg \Gamma}{K}\right) & \ge w(C_{k+1}) \cdot \frac{K}{K+\varepsilon} \cdot \left(\beta - \frac{\varepsilon \maxDeg \Gamma}{K}\right) \\
        & = w(C_{k+1}) \cdot \frac{\beta K - \varepsilon(K - (\beta - \varepsilon))}{K+\varepsilon} \\
        & = w(C_{k+1}) \cdot \frac{K(\beta - \varepsilon) + \varepsilon(\beta - \varepsilon)}{K+\varepsilon} \\
        & = w(C_{k+1}) \cdot \left(\beta - \varepsilon\right) \cdot \frac{K + \varepsilon}{K + \varepsilon} \\
        & = w(C_{k+1}) \cdot \left(\beta - \varepsilon\right).
    \end{align*}

    Note that \(C_{k+1} = \neighborhood_{k+1}[z]\).
    Thus, the weak diameter of \(\hat{A}\) in \(G\) is at most \( (2k+1)\cdot(c_3 \min\{ \maxDeg \log n / \varepsilon, \sqrt{ n / \varepsilon}\})\).
    In particular, since \(k \le \ceil{4K \log (\Gamma n^2 / \beta) / \varepsilon}\) and \(K \le \beta + \maxDeg\Gamma\), we have that the weak diameter of \(\hat{A}\) in \(G\) is at most \(100 ((\beta + \maxDeg \Gamma) \cdot \log n / \varepsilon)(c_3 \min\{ \maxDeg \log n / \varepsilon, \sqrt{ n / \varepsilon}\})\).

    By using that \(\Gamma = 1\), and that \(0 < \beta <  \maxDeg\), we have that 
    \[
        \hat{A} \subseteq \neighborhood_{c_2 \maxDeg  \frac{\log n}{\varepsilon} \min\{ \maxDeg \frac{\log n}{\varepsilon}, \sqrt{ \frac{n}{\varepsilon}}\}}[v],
    \]
    for some large enough constant \(c_2 > 1\).
    Since \(\sqrt{n / \varepsilon}\) is the minimum when \(\maxDeg \ge \sqrt{n \varepsilon} \log n\), but for such values of \(\maxDeg\), the term
    \[
        c_2 \maxDeg  \frac{\log n}{\varepsilon} \min\left\{ \maxDeg \frac{\log n}{\varepsilon}, \sqrt{ \frac{n}{\varepsilon}}\right\} \ge n,
    \]
    we get the desired bound.
    By noticing that \(\ell^\star_{\hat{A}}\) can only increase the improving ratio w.r.t.\ \(\ell_{{A}}\), we conclude the proof.
\end{proof}

Now, we can update the statement of \cref{lem:what-we-actually-use}.

\begin{lemma}\label{lem:appendix:what-we-actually-use}
	Let \((A_i, \ell_{i})_{i = 1}^h\) be a sequence of \(\beta\)-improving sets for some \(0 < \beta \le \maxDeg\), each of which has diameter at most $c_3 \min\{ \maxDeg \log n / \varepsilon, \sqrt{ n / \varepsilon}\}$ (\(c_3\) as given by \cref{lem:appendix:imprBalls}), for \(0 < \varepsilon < \min\{\lambda,\beta\}\).
	Then, for each \(i \in [h]\) and each \(v \in A_i \), there exists a large enough constant \(c_2\) and an improving set \((A,\ell_A)\) of diameter at most $c_3 \min\{ \maxDeg \log n / \varepsilon, \sqrt{ n / \varepsilon}\}$, such that \(\impRatio(A, \ell, \ell_A) \ge \beta - 2\varepsilon\), that is fully contained in the set \(\neighborhood_{c_2 \maxDeg^2  ({\log n} /{\varepsilon})^2}[v] \cap (\cup_{j \in [h]} A_j)\).
	Furthermore, \((A,\ell_A)\) is minimal.
\end{lemma}
\begin{proof}
    We first apply \cref{lem:appendix:imprChain}, and we obtain an improving set \((A, \ell_A)\)   with improving ratio at least \(\beta - \varepsilon\) that is fully contained in \(\neighborhood_{c_2 \maxDeg^2  ({\log n} /{\varepsilon})^2}[v] \cap (\cup_{j \in [h]} A_j)\).
	Then, we take the minimal version of it, that is, \((A',\ell^\star_{A'})\), with \(A' \subseteq A\), and we apply \cref{lem:appendix:imprBalls}, which yields the thesis.
\end{proof}

We now state the adapted version of \cref{alg:lop}, which is the one we actually analyze in the case of LOPEs.

\begin{algorithm}
\caption{LOPE algorithm (from the perspective of a node)}\label{alg:lope}
\begin{algorithmic}[1]
\Require Number of nodes \(n\), max-degree \(\maxDeg\), description of the LOPE \((\problem, \pot)\).
\Ensure Solution to \((\problem, \pot)\).
\State Initialize \(\lambda \gets \) (scaled) minimum improvement associated to \((\problem, \pot)\) \label{algo:appendix:min-improving-ratio}
\State Initialize \(R \gets \lambda/8\) \label{algo:appendix:line:initial-IR}
\State Initialize \(\varepsilon \gets \frac{\lambda}{200 c_1  \log n} \) \Comment{\(c_1\) is a constant given by the proof of \cref{thm:algorithm:lop}}\label{algo:appendix:line:epsilon}
\State Initialize any output label for the node and its incident half-edges \label{algo:appendix:line:initial-labeling}
\Statex \emph{The next loop identifies the ``phases'' of our algorithm. In the following, \(c\), \(c_1\), \(c_2\), and \(c_3\) are large enough constants. The constants \(c_2\) and \(c_3\) are given by \cref{lem:appendix:imprBalls,lem:appendix:imprChain}, while \(c_1\) is given by the proof of \cref{thm:algorithm:lop}, and \(c\) is given by \cref{lemma:preliminaries:mpx}.}
\For{$i = 1$ to $c_1 \log n$} \label{algo:appendix:line:phases} 
    \State Run MPX with \(\rho = \frac{\varepsilon^2}{10 c \cdot c_2 \maxDeg^2  \log^2 n}\) \Comment{\(d = O(\log n / \rho)\) by \cref{lemma:preliminaries:mpx}} \label{algo:appendix:line:mpx}
    \State \(C \gets\) cluster of the node \label{algo:appendix:line:cluster}
    \State \(H \gets \) graph induced by \(C\) \label{algo:appendix:line:induced-graph}
    \State \(\ell \gets \) labeling of \(H\) \label{algo:appendix:line:initial-labeling-graph}
    \If{the node is the cluster leader} \label{algo:appendix:line:if-leader}
        \State Choose any maximal sequence of \(R\)-improving sets \(\left((A_i,\ell_i)\right)_{i = 1}^k\) in \(H\) w.r.t.\ \( \ell\) such that \label{algo:appendix:line:choose-maximal-sequence}
        \State \hspace{\algorithmicindent} 1. \(\neighborhood[A_i] \subseteq C\) for all \(i \in [k]\) \label{algo:appendix:line:safe-neighborhood}
        \State \hspace{\algorithmicindent} 2. The diameter of each $A_i$ is at most $c_3 \min\{ \maxDeg \log n / \varepsilon, \sqrt{ n / \varepsilon}\}$ \label{algo:appendix:line:diameter} 
        \State Adopt the new labeling according to \(\ell_k\) \label{algo:appendix:line:adopt-labeling-leader} 
        \State \(\ell \gets \ell_k\) \label{algo:appendix:line:update-labeling-leader}
        \State Broadcast \(\ell_k\) to all nodes in \(C\) \label{algo:appendix:line:broadcast-sequence}
    \Else \label{algo:appendix:line:not-leader}
        \State Wait for the broadcast by the leader \label{algo:appendix:line:wait-broadcast}
        \State Adopt the new labeling according to \(\ell_k\) \label{algo:appendix:line:adopt-labeling-not-leader} 
        \State \(\ell \gets \ell_k\) \label{algo:appendix:line:update-labeling-not-leader}
    \EndIf \label{algo:appendix:line:endif-leader}
    \State Set \(R \gets R + \frac{\lambda}{40 c_1 \log n}\) \label{algo:appendix:line:update-IR}
\EndFor \label{algo:appendix:line:end-phases}
\State \Return labels of the node and of the incident half-edges \label{algo:appendix:line:return-labels}
\end{algorithmic}
\end{algorithm}

We proceed by proving the correctness of \cref{alg:lope}.
First, we show that each phase of the algorithm terminates.
\begin{lemma}\label{lem:appendix:phase-terminates}
	For any phase \(j\) of the algorithm, let \(\ell_j\) be the labeling of the algorithm after finishing phase \(j-1\) (Lines~\ref{algo:appendix:line:update-labeling-leader}--\ref{algo:appendix:line:update-labeling-not-leader}) and \(R_j\) be the value of \(R\) at the end of phase \(j-1\) (Line~\ref{algo:appendix:line:update-IR}).
	Furthermore, let \(C^{(j)}_1, \ldots, C^{(j)}_k\) be the clusters that are the result of the MPX algorithm at phase \(j\) (Line~\ref{algo:appendix:line:mpx}),
	Then, for each cluster \(C^{(j)}_i\), any maximal sequence of \(R_j\)-improving sets is finite.
\end{lemma}
\begin{proof}
	Let \(H^{(j)}_i \) be the graph induced by cluster \(C^{(j)}_i\), and \(\ell_{H^{(j)}_i}\) the restriction of \(\ell_j\) to \(H^{(j)}_i\).
	Since \(\graphPot(H^{(j)}_i, \ell_{H^{(j)}_i})\) is a finite value, and each improving set decreases this value by at least some constant \(R_j > 0\), then the sequence cannot have more than \(\graphPot(H^{(j)}_i, \ell_{H^{(j)}_i})/R_j\) elements.
\end{proof}

Second, let us give an adapted version of \cref{lem:improving-sets-close-to-borders}.
By \(B_i\), we refer to the border sets as defined in \cref{def:borders} but with respect to the clusters determined by MPX at phase \(i\) (Line~\ref{algo:appendix:line:mpx}) of \cref{alg:lope}.
\begin{lemma}\label{lem:appendix:improving-sets-close-to-borders}
	At the end of phase \(i\) of the algorithm, all minimal improving sets of diameter at most \(c_3 \min\{\maxDeg \log n / \varepsilon, \sqrt{n / \varepsilon}\}\) and improving ratio at least \(R_i\) (the value of \(R\) initialized at the end of phase \(i-1\) at Line~\ref{algo:appendix:line:update-IR}) are fully contained in the set
	\[
		\neighborhood_{t_1}[B_i] \cap \left( \bigcap_{j = 1}^{i-1} \neighborhood_{t_2}[ B_j]\right),
	\]
	for \(t_1 = c_3 \min\{\maxDeg \log n / \varepsilon, \sqrt{n / \varepsilon}\} + 1\) and \(t_2 = c_2 \maxDeg^2  ({\log n} /{\varepsilon})^2 + t_1\), where \(c_2\) and \(c_3\) are as in Lines~\ref{algo:appendix:line:mpx} and \ref{algo:appendix:line:diameter}, respectively.
\end{lemma}
\begin{proof}
	For any phase \(i > 0\), let \(\CC^{(i)}_{1}, \ldots, \CC^{(i)}_{h_i}\) be the clusters determined by MPX at phase \(i\) (Line~\ref{algo:appendix:line:mpx}), let \(\ell_i\) be the labeling of \(G\) at the end of phase \(i\) (Lines~\ref{algo:appendix:line:update-labeling-leader} and \ref{algo:appendix:line:update-labeling-not-leader}), and let \(R_i\) be the improving ratio initialized at the end of phase \(i - 1\) (Line~\ref{algo:appendix:line:update-IR}), with \(R_1 = R\) as in Line~\ref{algo:appendix:line:initial-IR}.
	We proceed by induction on the phase \(i\).

	If \(i = 1\), \(R_1 = R\) as initialized at Line~\ref{algo:appendix:line:initial-IR}.
	A minimal improving set \((A, \ell_A)\) of diameter at most \(c_3 \min\{\maxDeg \log n / \varepsilon, \sqrt{n / \varepsilon}\}\) with \(\impRatio(A, \ell_1, \ell_A) \ge R_1\) must be such that there is no \(s \in [h_1]\) such that \(\neighborhood[A] \subseteq C^{(1)}_s\), otherwise we are breaking maximality in Line~\ref{algo:appendix:line:choose-maximal-sequence}. 
	Hence, \(\dist(A, B_1) \le 1\) and we have that \(A \subseteq \neighborhood_{c_3 \min\{\maxDeg \log n / \varepsilon, \sqrt{n / \varepsilon}\} + 1}[B_1]\), as the diameter of \(A\) is at most \(c_3 \min\{\maxDeg \log n / \varepsilon, \sqrt{n / \varepsilon}\}\).
	Now, assume \(i > 1\) and the statement to be true for all phases \(j = 1, \ldots, i-1\).
	Let \((A,\ell_A)\) be a minimal improving set of diameter at most \(c_3 \min\{\maxDeg \log n / \varepsilon, \sqrt{n / \varepsilon}\}\) such that \(\impRatio(A, \ell_{i}, \ell_A) \ge R_i\).
	For the same reason as in the case \(i = 1\), it holds that \(A \subseteq \neighborhood_{c_3 \min\{\maxDeg \log n / \varepsilon, \sqrt{n / \varepsilon}\} + 1}[B_i]\).
	Suppose, by contradiction, that \(A\) is not contained in 
	\[
		\bigcap_{j = 1}^{i-1} \neighborhood_{t_2}[ B_j].
	\]
	This implies that there is a \(j^\star \le i-1\) such that \(A\) is not contained in \(\neighborhood_{t_2}[ B_{j^\star}]\).
	Note that the diameter of \(A\) is at most \(c_3 \min\{\maxDeg \log n / \varepsilon, \sqrt{n / \varepsilon}\}\) and at least one node of $A$ is not in  \(\neighborhood_{t_2}[ B_{j^\star}]\).
	As a result, the set \(\neighborhood_{c_2 \maxDeg^2  ({\log n} /{\varepsilon})^2 + 1}[A]\) is fully contained in some cluster \(C^{(j^\star)}_s\), for some \(s \in [h_{j^\star}]\).

	We can now describe the family of all improving sets that are actually used by our algorithm after phase $j^\star$ (Line~\ref{algo:appendix:line:choose-maximal-sequence})  as a single improving sequence.
	We will define an ordering of all minimal improving sets that our algorithm has flipped.
	In order to do that, we first define an ordering of all clusters.
	We say that \(C^{(m_1)}_{s_1} < C^{(m_2)}_{s_2}\) if and only if:
	\begin{itemize}[noitemsep]
		\item \(m_1 < m_2\), or
		\item \(m_1 = m_2\) and \(s_1 < s_2\).
	\end{itemize}
	In each cluster \(C^{(m)}_s\) we also consider the natural ordering of the sets composing the sequence \((S^{(m)}_{s,t},\ell_{S^{(m)}_{s,t}})_{t \in [k_{s}^{(m)}]}\) of \(R_{m}\)-improving sets that were chosen by the leader of \(C^{(m)}_s\) (Line~\ref{algo:appendix:line:choose-maximal-sequence}), where \(k_s^{(m)}\) is the number of minimal improving sets that have been flipped inside \(C^{(m)}_s\) at phase \(m\).
	We can create a sequence of \(R_{j^\star + 1}\)-improving sets by concatenating all these maximal sequences 
	\[
		\left(\left((S^{(m)}_{s,t},\ell_{S^{(m)}_{s,t}})_{t \in [k_{s}^{(m)}]}\right)_{s \in [h_m]}\right)_{j^\star + 1 \le m \le i}
	\] 
	and at the end we put the improving set \((A, \ell_A)\).
	Observe that \((A, \ell_A)\) is such that \(\impRatio(A, \ell_{i}, \ell_A) \ge R_i \ge R_{j^\star + 1}\) and hence the whole sequence is an \(R_{j^\star + 1}\)-improving sequence.
	Let us rename this sequence as \((\hat{S}_t)_{t \in [t_i]}\), where \(t_i\) is the overall length of the defined sequence.

	Now, for each \(v \in A\), by \cref{lem:appendix:what-we-actually-use} we have that there exists a minimal improving set \((A', \ell_{A'})\) that is fully contained in the set 
	\[
		\neighborhood_{c_2 \maxDeg^2  ({\log n} /{\varepsilon})^2 + t_1}[v] \cap \left(\cup_{t \in [t_i]} \hat{S}_t\right),
	\]
	and such that \(\impRatio(A', \ell_{j^\star}, \ell_{A'}) \ge R_{j^\star+1} - 2\varepsilon\).
	By \cref{lem:appendix:imprBalls}, for each \(v \in A'\), there exists a minimal improving set \((A'', \ell_{A''})\) such that \(A'' \subseteq A' \cap \neighborhood_{c_3 \min\{\maxDeg \log n / \varepsilon, \sqrt{n / \varepsilon}\}}[v]\) and \(\impRatio(A'', \ell_{j^\star}, \ell_{A''}) \ge R_{j^\star+1} - 3\varepsilon \ge  R_{j^\star}\).
	The fact that \[A'' \subseteq \neighborhood_{c_2 \maxDeg^2  ({\log n} /{\varepsilon})^2}[v]\] implies that \(A'' \subseteq \neighborhood_{c_2 \maxDeg^2  ({\log n} /{\varepsilon})^2}[A]\).
	Since \(\neighborhood_{c_2 \maxDeg^2  ({\log n} /{\varepsilon})^2 + 1}[A]\) is fully contained in some cluster \(C^{(j^\star)}_s\), so it is \(\neighborhood[A'']\).
	Hence, the sequence \((S^{(j^\star)}_{s,t},\ell_{S^{(j^\star)}_{s,t}})_{t \in [k_{s}^{(j^\star)}]}\) was not maximal, reaching a contradiction.
\end{proof}

We now state the adapted version of \cref{thm:algorithm:lop}.
\begin{theorem}\label{thm:algorithm:lope}
    Let \((\problem, \pot)\) be a LOPE as defined in \Cref{def:appendix:lop-edges}.
    The randomized \local algorithm described in \cref{alg:lope} solves \((\problem, \pot)\) in \(O\left( (\maxDeg/\lambda)^2 \log ^6 n\right)\) rounds w.h.p.
    If \(\maxDeg = O(1)\), then the running time of \cref{alg:lope}  is \(O(\log^6 (n))\) rounds.
\end{theorem}
\begin{proof}
	By \cref{lem:appendix:phase-terminates}, we have that the algorithm terminates.
	The running time of the algorithm is the running time of MPX (Line~\ref{algo:appendix:line:mpx}) multiplied by the number of phases (Line~\ref{algo:appendix:line:phases}).
    By \cref{lemma:preliminaries:mpx}, this equals to 
    \[
        O\left( \log n \cdot  \maxDeg^2 \frac{\log ^3 n}{\varepsilon^2} \right) = O\left(\frac{\maxDeg^2}{\lambda^2} \log^6 n\right).
    \]

	Let \(i = c_1 \log n\) be the last phase, and let \(\ell_i\) be the labeling obtained at the end of phase \(i\) (Lines~\ref{algo:appendix:line:update-labeling-leader} and \ref{algo:appendix:line:update-labeling-not-leader}).
	By contradiction, suppose that there is an error in the graph, that is, a centered graph \((H, v_H)\) of radius \(r\) that does not belong to \(\CC\).
	This means that $v_H$ can change the output labels of itself and its incident edges to get an improvement of at least \(\lambda\). 
	Let $\ell_v$ be that relabeling, so that we can get the improving set $(\{v_H\}, \ell_v)$, with $\IR(\{v_H\},\ell_i,\ell_v) \ge \lambda$.
	Let us now take the minimal version $(\{v_H\}, \ell_v^\star)$ of $(\{v_H\}, \ell_v)$. 

	Since \(R_{c_1 \log n} \le \lambda\) by definition, \cref{lem:appendix:improving-sets-close-to-borders} implies that \(\{v_H\}\) is fully contained in 
	\[
		\neighborhood_{t_1}[B_i] \cap \left( \bigcap_{j = 1}^{i-1} \neighborhood_{t_2}[ B_j]\right).
	\]

	Notice that, by \cref{lemma:preliminaries:mpx}, at each phase \(j \le i\), for every node \(v\) there is probability at least \(1/2\) that \(\neighborhood_{t_2 + 1}[v]\) is fully contained within some cluster of the \(j\)-th run of MPX.
	Note that \((1/2)^{c_1\log n} = 1/n^{c_1}\).
	Hence, with probability \(1 - 1/n^{c_1}\), \(\{v_H\}\) is not contained in 
	\[
		\neighborhood_{t_1}[B_i] \cap \left( \bigcap_{j = 1}^{i-1} \neighborhood_{t_2}[ B_j]\right),
	\] 
	reaching a contradiction with \cref{lem:appendix:improving-sets-close-to-borders}.

	Now, by the union bound, with probability \(1 - 1/n^{c_1 - 1}\), there is no node that is a center of a centered graph of radius \(r\) that is invalid according to \(\CC\), yielding the thesis.
\end{proof}

We can derandomize \cref{alg:lope} by using \cref{thm:analysis:derandomization,thm:analysis:network-decomposition}, and we get the following corollary.

\begin{corollary}\label{cor:analysis:det-complexity-lope}
    There exists a deterministic \local algorithm that solves any LOPE \((\problem, \pot)\) in \(O((\maxDeg / \lambda)^2 \log^8 (n) \poly ( \log \log n))\) rounds.
    If \(\maxDeg = O(1)\), then the running time of the deterministic algorithm is \(\tilde{O}(\log^8 (n))\) rounds.
\end{corollary}

\section{Lower bound in super-quantum models}\label{app:lb}

In this section, we show how to obtain \cref{cor:lb:quantum-local}.

First, we have to give the definitions of the models we are considering and for which we have implications.

\paragraph{The quantum-LOCAL model.}
The quantum-LOCAL model is a generalization of the deterministic \local model, where each node/processor is now a quantum processor, and the communication links between nodes are quantum channels. 
Nodes can manipulate an arbitrary number of qubits, and can apply arbitrary unitary transformations on their local qubits, and can also perform measurements on them.
The communication between nodes happens in synchronous rounds, and in each round, each node can send an arbitrary amount of qubits to its neighbors, and receive an arbitrary amount of qubits from its neighbors.
The output of the algorithm is determined by the measurement outcomes of the qubits at each node at the end of the algorithm.
We require that the probability that the output of the algorithm is correct is at least \(1 - 1/n\), globally.

To date, it is difficult to provide lower bounds that apply specifically to the quantum-\local model.
However, we have at our disposal lower bound techniques that hold in more general models. 
We now introduce the randomized online-LOCAL model.

\paragraph{The randomized online-LOCAL model.}
In the randomized online-\local model, an adversary reveals the input graph to the algorithm in an online fashion.
We assume that the algorithm is given an infinite random bit string.
The adversary does not have access to the random bits of the algorithm, but knows the \emph{description} of the algorithm.
Then, it fixes an input graph \(G = (V,E)\), together with the inputs of each node. 
It also sets an adversarial ordering of the nodes \(v_1, \ldots, v_n\) of the graph, and reveals them one by one to the algorithm, together with their incident half-edges.
The adversary first reveals a node \(v_1\), and its \(T\)-hop neighborhood \(G_T(v_1)\) (together with the inputs of nodes in this neighborhood), where \(T\) is the locality/complexity of the algorithm.
Then, the algorithm must commit on \(v_1\) and its incident half-edges. 
After that, the adversary reveals the node \(v_2\) and its \(T\)-hop neighborhood \(G_T(v_2)\), and the algorithm must commit on \(v_2\) and its incident half-edges.
This process continues until all nodes of the graph are revealed and committed on.
The output of the algorithm is determined by the commitment of all nodes and their incident half-edges.
We require that the probability that the output of the algorithm is correct is at least \(1 - 1/n\), globally.

It has been proved that the randomized online-\local model is strictly more powerful than the quantum-\local model~\cite{akbari-coiteux-roy-etal-2025-online-locality-meets}.
We hence show that \cref{thm:locLB} holds also in the randomized online-\local model.

\begin{theorem}\label{thm:lb:online-local}
    For any large enough \(n \in \nats\) and \(\maxDeg < n\), let \(\varepsilon = 1 - (\maxDeg / n)\).
    Any algorithm that finds a locally optimal cut in the  randomized online-\local model with probability $p \ge \frac{1}{2}$ on graphs of \(n\) nodes and maximum degree \(\maxDeg\) has locality $\Omega(\min\{\maxDeg,\sqrt{\varepsilon n}\})$.
    In particular, for any \(\maxDeg \le c n\) for some constant \(c < 1\), the locality is \(\Omega(\min\{\maxDeg,\sqrt{n}\})\).
\end{theorem}
\begin{proof}
    Let $\AA$ be a $T(\maxDeg,n)$-round randomized online-\local algorithm that solves locally optimal cut on graphs of \(n\) nodes and maximum degree \(\maxDeg\), with \(\maxDeg \le n(1-\varepsilon)\).
    We will show that, for all large enough values of \(n\), and for all large enough values of \(\maxDeg \le n(1-\varepsilon)\), if \(
    \AA\) has success probability at least \(1/2\), then it must hold that $T(\maxDeg,n) > \floor{\min\{(\maxDeg+1)/2, \sqrt{\varepsilon n}/2\}} / 10$.

    Let \(k = \floor{\min\{(\maxDeg+1), \sqrt{\varepsilon n}\}}/2\).
    The lower bound graph \(G\) is constructed as follows.
    Take \(G_k\), and consider two paths \(P_1\) and \(P_2\) of length \(\floor{\varepsilon n / 4}\) and \(n - (\floor{\varepsilon n}/4 + \abs{V(G_k)} + \maxDeg - 1)\), respectively.
    By Property~\ref{prop:degree} of \cref{lem:lb_properties}, the maximum degree of \(G_k\) is at most \(2k - 1 \le \maxDeg\).
    Recall that \(G_k\) has exactly two nodes of degree \(2\): let them be \(u\) and \(v\). 
    We connect one endpoint of \(P_1\) to \(u\) and one endpoint of \(P_2\) to \(v\).
    Finally, we connect \(\maxDeg-1\) new nodes to the other endpoint of \(P_2\).
    By Property~\ref{prop:nodes} of \cref{lem:lb_properties}, observe that \(\abs{V(G_k)} \le 2k^2 - 2 \le \varepsilon n / 2\) by our choice of \(k\) and, hence, \(P_2\) is well-defined.
    Therefore, the constructed graph \(G\) has exactly \(n\) nodes and maximum degree \(\maxDeg\).

    It is easy to see that all the properties of \cref{lem:lb_properties} still hold for the induced subgraph \(G[V(G_k)]\), and, in particular, by Property~\ref{prop:monochromatic}, \(u\) and \(v\) must output different labels in any valid solution to the locally optimal cut problem on \(G\), and consecutive layers of \(G[V(G_k)]\) must be labeled differently.
    
    By contradiction, assume that $T(\maxDeg,n) \le \floor{\min\{(\maxDeg+1)/2, \sqrt{\varepsilon n}/2\}} / 10$ (which is less than \(k/10\)), and that \(\AA\) succeeds with probability at least \(1/2\).

    Consider two input graphs \(G'\) and \(G''\) that are topologically isomorphic to \(G\). 
    However, the copy of \(R^{(1)}_k\) in \(G'\) is such that the radius-\(T\) views of the nodes in \(R^{(1)}_k\) are indistinguishable to the radius-\(T\) views of the nodes in the copy of \(L^{(1)}_k\) in \(G''\).
    
    Consider two adversarial ordering sequences according to which the adversary can reveal the nodes of \(G'\) and \(G''\), respectively, to the algorithm.
    In the first sequence, the adversary first reveals the node in \(S^{(1)}\) and then the nodes in \(R^{(1)}_k\).
    In the second sequence, the adversary first reveals the node in \(S^{(1)}\) and then the nodes in \(L^{(1)}_k\).
    Note that the distance between \(S^{(1)}\) and both \(L^{(1)}_k\) and \(R^{(1)}_k\) is at least \(k \ge 10 T\), and the radius-\(T\) views of the nodes of \(G''\) in \(L^{(1)}_k\) and that of the nodes of \(G'\) in \(R^{(1)}_k\) are indistinguishable.
    Hence, given the random bit string of the algorithm, the algorithm is failing either in \(G'\) or in \(G''\) (as the outputs of the nodes in the copy of \(R^{(1)}_k\) in \(G'\) and the copy of \(L^{(1)}_k\) in \(G''\) must be the same, but also the output of the node in the copy of \(S^{(1)}\) in both graphs must be the same).
    If the success probability in \(G'\) is at least \(1/2\), then the algorithm fails in \(G''\) with probability at least \(1/2\), and vice versa, reaching a contradiction.
\end{proof}

The proof of \cref{cor:lb:quantum-local} follows by the fact that the randomized online-\local model is strictly more powerful than the quantum-\local model (and than the non-signaling model)~\cite{akbari-coiteux-roy-etal-2025-online-locality-meets}, and hence any lower bound that holds in the randomized online-\local model also holds in the quantum-\local model. 
\end{document}